\date{}
\newcommand{\eps}{\varepsilon}
\newcommand{\Order}{O}
\newcommand{\Oh}{O}
\newcommand{\frechet}{Fr\'{e}chet }
\newcommand{\DTW}{\textup{DTW}}
\newcommand{\FD}{FD}
\newcommand{\oEight}{\star^8}
\newcommand{\Real}{\mathbb{R}}
\newcommand{\orc}{\includegraphics[height=\fontcharht\font`A]{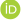}}
\newtheorem{definition}{Definition}[section]
\newtheorem{assumption}{Assumption}[section]
\title{\Large Dynamic Dynamic Time Warping}
\author{ Karl Bringmann \thanks{Saarland University, Saarland Informatics Campus, Saarbrucken, Germany} \thanks{Max Planck Institute for Informatics, Saarland Informatics Campus, Saarbrücken, Germany}
\and Nick Fischer \thanks{The Weizmann Institute of Science, Department of Computer Science and Applied Mathematics, Rehovot, Israel}
\and Ivor van der Hoog \href{https://orcid.org/0009-0006-2624-0231}{\orc} \thanks{Technical University of Denmark, Kongens Lyngby, Denmark}
\and Evangelos Kipouridis \href{https://orcid.org/0000-0002-5830-5830}{\orc} \footnotemark[1] \footnotemark[2]
\and Tomasz Kociumaka \href{https://orcid.org/0000-0002-2477-1702}{\orc} \footnotemark[2]
\and Eva Rotenberg \href{https://orcid.org/0000-0001-5853-7909}{\orc} \footnotemark[4]
}
\newcommand{\floor}[1]{\lfloor #1 \rfloor}
\newcommand{\ceil}[1]{\lceil #1 \rceil}
\newcommand{\set}[1]{\{ #1 \}}
\newcommand{\Intermediary}{\textsc{Intermediary}}
\newcommand{\gadgetSize}{g}
\newcommand{\blockFormula}{2\tau_1+2}
\newcommand{\block}{B}
\newcommand{\mainRowFormula}{(\tau_2-1)\block{}+\tau_1}
\newcommand{\mainRow}{Y_1}
\newcommand{\mainHeightFormula}{2\tau_1+\tau_3}
\newcommand{\mainHeight}{H}
\newcommand{\mainRowEndFormula}{\mainRow{} + \mainHeight{}}
\newcommand{\mainRowEnd}{Y_2}
\newcommand{\cols}{\tau_2\block{}+1} %number of horizontal edges in a row
\newcommand{\rows}{2Y_1+H} %number of vertical edges in a column
\newcommand{\consts}{n_1}
\newcommand{\binaryNum}{n_2}
\newcommand{\nodes}{n_3}
\newcommand{\smallCliqueGadgetA}{n_4}
\newcommand{\smallCliqueGadgetB}{n_5}
\newcommand{\cliqueGadget}{n_6}
\newcommand{\constGadget}{A_4}
\newcommand{\belowConstGadget}{A_3}
\newcommand{\constEnabler}{A_2}
\newcommand{\constStayBigDiagonal}{A_1}
\newcommand{\constStaySmallDiagonal}{A_0}
\newcommand{\constWeight}{W}
\newtheorem{observation}{Observation}[section]
\begin{document}

\maketitle

%\fancyfoot[R]{\scriptsize{Copyright \textcopyright\ 2024 \\ This paper is available under the CC-BY 4.0 license}}
%\scriptsize{Copyright \textcopyright\ 2024 by SIAM\\ Unauthorized reproduction of this article is prohibited}}

\begin{abstract}
    \noindent
The Dynamic Time Warping (DTW) distance is a popular similarity measure for polygonal curves (i.e., sequences of points). It finds many theoretical and practical applications, especially for temporal data, and is known to be a robust, outlier-insensitive alternative to the \frechet distance. For static curves of at most $n$ points, the DTW distance can be computed in $O(n^2)$ time in constant dimension. This tightly matches a SETH-based lower bound, even for curves in $\mathbb{R}^1$.

In this work, we study \emph{dynamic} algorithms for the DTW distance. Here, the goal is to design a data structure that can be efficiently updated to accommodate local changes to one or both curves, such as inserting or deleting vertices and, after each operation, reports the updated DTW distance. We give such a data structure with update and query time $O(n^{1.5} \log n)$, where $n$ is the maximum length of the curves.

As our main result, we prove that our data structure is conditionally \emph{optimal}, up to subpolynomial factors. More precisely, we prove that, already for curves in $\mathbb{R}^1$, there is no dynamic algorithm to maintain the DTW distance with update and query time~\makebox{$O(n^{1.5 - \delta})$} for any constant $\delta > 0$, unless the Negative-$k$-Clique Hypothesis fails.
In fact, we give matching upper and lower bounds for various trade-offs between update and query time, even in cases where the lengths of the curves differ.

\end{abstract}

\paragraph{Acknowledgements.}
This work is part of the project CONJEXITY that has received funding
from the European Research Council (ERC) under the European Union’s
Horizon Europe research and innovation programme (grant agreement No.
101078482).
This work is part of the project TIPEA that has received funding from the European Research Council (ERC)
under the European Union's Horizon 2020 research and innovation programme (grant agreement No. 850979).
This research was supported by Independent Research Fund Denmark grant 2020-2023 (9131-00044B) ``Dynamic Network Analysis'' and the VILLUM Foundation grant 37507 ``Efficient Recomputations for Changeful Problems''.
This project has additionally received funding from the European Union's Horizon 2020 research and innovation programme under the Marie Sk\l{}odowska-Curie grant agreement No 899987.
This research benefited from meetings at the Max Planck Institute for Informatics, Saarbrücken, and from discussions at Dagstuhl Seminar 22461 `Dynamic Graph Algorithms'.

\newpage

% !TEX root = ../paper.tex
\section{Introduction} \label{sec:intro}
Sequence similarity measures are fundamental in computational geometry and string algorithms and serve as essential tools in a variety of application domains for analyzing time-ordered data, including videos, audio files, time-series measurements, and GPS tracking data. One of the most widely used similarity measures is the  \emph{Dynamic Time Warping} (DTW) distance, with applications in speech recognition~\cite{myers1980performance,sakoe1978dynamic}, handwriting and online signature matching~\cite{efrat2007curve, tappert1990state, munich1999continuous}, gesture recognition~\cite{corradini2001dynamic, kuzmanic2007hand}, medicine~\cite{caiani1998warped,aach2001aligning}, song recognition~\cite{muller2006efficient, zhu2003warping}, motion retrieval \cite{muller2007dtw}, time series clustering \cite{niennattrakul2007clustering}, and time series database search~\cite{gu2006simple, kahveci2001variable}. See also the survey by Senin~\cite{senin2008dynamic}.

%\kbnote{This paragraph sounds like repeating the abstract, remove?} \nfnote{True, but I think we need some more hype!:D} 
In this paper, we pick up the study of the well-motivated \emph{dynamic} Dynamic Time Warping problem, where the goal is maintaining the DTW distance of two sequences that undergo changes. We provide (1) a new dynamic algorithm that is significantly faster than the previous state of the art, and (2) strong evidence that this algorithm is  \emph{optimal}, up to lower-order factors, conditioned on a well-established hardness assumption from fine-grained complexity theory. We thus conditionally resolve the time complexity of dynamic DTW, successfully closing the problem.

\paragraph{Dynamic Time Warping}
The DTW distance of two sequences (or \emph{curves}) $P = (p_1,\ldots,p_n)$ and $Q = (q_1,\ldots,q_m)$ is defined as follows. Imagine a dog walking along $P$ and its owner walking along $Q$. Both owner and dog start at the beginning of their curves, and in each step the owner may stay in place or jump to the next point along $P$ and the dog may stay in place or jump to the next vertex along $Q$, until both of them have reached the end of their curves. Formally, this yields a traversal $T = ((i_1,j_1), \ldots, (i_t,j_t))$ with~\makebox{$i_1 = j_1 = 1$}, $i_t = n$, $j_t = m$ and $(i_{k+1},j_{k+1}) \in \{(i_k + 1, j_k), (i_k, j_k + 1), (i_k + 1, j_k + 1)\}$ for each step $k \in \{1,\ldots,t-1\}$. The cost of a traversal is the sum of all distances of dog and owner during the traversal, that is, the cost of traversal $T$ is $\sum_{k=1}^t d(p_{i_k}, q_{j_k})$. 
The DTW distance of $P$ and $Q$, denoted $\DTW(P, Q)$, is then defined as the minimum cost of any traversal.

Note that for this definition to make sense, we need to fix a distance measure $d(\cdot,\cdot)$ on vertices. In a typical scenario, the vertices $p_1,\ldots,p_n, q_1,\ldots, q_m$ lie in a low-dimensional Euclidean space~$\mathbb{R}^{d'}$ and their distance is an $L_p$ norm $d(x,y) = \|x-y\|_p$. Throughout this paper, for the algorithms we assume that $d(\cdot,\cdot)$ can be evaluated in constant time (apart from that, it can be arbitrary), and the lower bounds apply to curves in $\mathbb{R}^1$ (meaning that vertices lie in $\mathbb{R}^1$ and the distance measure is $d(x,y) = |x-y|$). This means that both algorithms and lower bounds apply very generally to DTW for various distance measures $d(\cdot, \cdot)$, for example, to $\mathbb{R}^{O(1)}$ under any $L_p$ norm. 

\paragraph{Static DTW Algorithms}
The time complexity of computing the DTW distance of two static curves is well-understood: Given two curves $P$ and $Q$ of length $n$ and $m$, we can compute their DTW distance in time $O(nm)$ by a simple dynamic programming algorithm; the time becomes~$O(n^2)$ when both curves have length at most $n$. This running time is almost the best known, up to mild improvements~\cite{GoldS18}. Moreover, assuming the Strong Exponential Time Hypothesis (or the Orthogonal Vectors Hypothesis) from fine-grained complexity theory, there is a \emph{matching} lower bound stating that DTW cannot be computed in truly subquadratic time $O(n^{2-\delta})$, for any constant $\delta > 0$~\cite{abboud2015tight,bringmann2015quadratic}. This lower bound applies even for curves in~$\mathbb{R}^1$.
Even if we relax the goal to a constant-factor approximation, no algorithm running in truly subquadratic time is known (see \cite{kuszmaul2019dynamic,AgarwalFPY16,YingPFA16} for polynomial-factor approximation algorithms and approximation algorithms for restricted input models), though for this approximate setting conditional lower bounds are still amiss.
% In summary, the complexity of exactly computing the DTW distance is well understood to be essentially quadratic ($n^{2 \pm o(1)}$). 

\paragraph{Dynamic DTW Algorithms}
In this paper we study DTW on dynamically changing sequences. This problem was introduced by Nishi et al.~\cite{NNIBT20}. Here, both curves receive \emph{updates} that insert a vertex anywhere in the sequence, or remove any vertex from the sequence, or substitute any vertex by a new vertex. We may also receive \emph{queries}, upon which we should return the DTW distance of $P$ and $Q$. Our goal is to store $P$ and $Q$ in a data structure that supports these updates and queries. 
In short: The sequences $P$ and $Q$ undergo updates such as insertion, deletion, or substitutions of a vertex, and for each query we want to recompute their current DTW distance. 

The motivation for this dynamic problem is threefold \cite{NNIBT20}: (1) Big data applications produce quickly changing data, and the time-ordered data relevant for DTW is no exception. (2) For audio or video applications one can imagine that a file gets edited and after each edit some similarity score should be recomputed. (3) The time complexity of dynamic versions of sequence similarity measures is of inherent algorithmic interest; see the related work section below for further examples.

Note that the results for static algorithms yield some upper and lower bounds for the time complexity of the dynamic DTW problem. We focus for simplicity on the case that update and query time must be equal. Then by using a static algorithm to recompute the DTW distance we get update/query time $O(n^2)$, if both curves have length at most $n$. 
Nishi et al.~\cite{NNIBT20} improved this to update/query time $O(n + m + \#chg)$, where $\#chg$ is the number of changes in the dynamic programming table. Since in the worst case $\#chg = \Theta(n m)$, their algorithm yields the same update/query time $O(n^2)$ as the static recomputation, in case both curves have length about $n$.
On the other hand, starting from two empty curves and using $O(n)$ updates we can create any worst-case instance of size $n$, and thus the static lower bound implies that $O(n)$ updates and one query cannot be done in time $O(n^{2-\delta})$ for any $\delta>0$ assuming the Strong Exponential Time Hypothesis. It follows that the update/query time cannot be $O(n^{1-\delta})$. However, these static bounds leave a large gap.

\subsection{Our Results}
In this paper, we improve upon both the quadratic upper bound and the linear lower bound. Specifically, for the upper bound we design a new algorithm for dynamic DTW distance with update and query time $O(n^{1.5} \log n)$. For the lower bound, we show that the update and query time cannot be improved to $O(n^{1.5-\delta})$ for any constant $\delta > 0$ assuming the Negative-$k$-Clique Hypothesis.
Since our upper and lower bound match, we fully resolve the time complexity of the dynamic DTW distance, up to lower order factors and assuming the Negative-$k$-Clique Hypothesis. 

In fact, our precise results do not assume that query and update time are equal. Instead, we obtain trade-off results both for the upper and lower bound, and both trade-offs match up to lower order factors.
In the remainder of this section we discuss these results in detail. 

Note that a substitution update can be simulated by a deletion and an insertion. Therefore, any data structure that supports insertions and deletions automatically also supports substitutions. 

\paragraph{Upper Bound}
For a dynamic DTW data structure on curves $P,Q$, we denote by $n$ and $m$ the current length of $P$ and $Q$ respectively at any point in the update and query sequence. To simplify notation, we assume that $n \ge m$ always holds.\footnote{This assumption can be easily removed, but then in the time bounds $n$ is replaced by $\max\{n,m\}$ and $m$ is replaced by $\min\{n,m\}$.}
In Section~\ref{sec:upper-bound} we design a data structure for dynamic DTW with the following guarantees.

\begin{restatable}{theorem}{tradeoff}
\label{thm:upperboubnd} 
  For any constant $\beta \in [0, \frac{1}{2}]$, there is a data structure that maintains curves $P$ and~$Q$ of (changing) lengths $n$ and~$m$ and supports insertion and deletion updates in time $O(n m^\beta \log m)$ and DTW queries in time $O(n m^{1-\beta} \log m)$. The data structure takes space $O(n m \log m)$. 
\end{restatable}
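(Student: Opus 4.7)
The plan is to partition the DTW dynamic programming table into rectangular \emph{macro-cells} of dimensions $a \times b$, where $b = \Theta(m^{\beta})$ and $a$ is chosen to balance the two update directions (for $n = m$ one can take $a = b = m^\beta$; in general $a = \Theta(n \cdot m^{\beta-1})$). For each macro-cell $C$, the data structure stores a compact description $\mathcal{T}_C$ of its \emph{transfer matrix}, i.e., the min-plus operator mapping the DP values on the top row and left column of $C$ to the DP values on its bottom row and right column. Since the intra-cell DP is a shortest-path computation on a grid with nonnegative weights, this transfer matrix is Monge, the structure underlying the Aggarwal--Klawe--Moran--Shor--Wilber (SMAWK) machinery. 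Consequently $\mathcal{T}_C$ can be built from a standard $O(ab)$-time DP on the cell, stored in $O((a+b)\log m)$ space, and applied to a boundary vector in $O((a+b)\log m)$ time.

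A substitution at $p_i$ affects only the macro-cells in the single row-strip containing row $i$, and symmetrically an update at $q_j$ affects only the column-strip containing column $j$. I would rebuild each of the $\Theta(m/b)$ (respectively $\Theta(n/a)$) affected macro-cells in $O(ab)$ time, yielding total update time $O(am)$ for a $P$-update or $O(nb)$ for a $Q$-update; both are $O(nm^{\beta}\log m)$ using the assumption $n \ge m$. To answer a query, the data structure sweeps the grid of macro-cells row-of-macro-cells by row-of-macro-cells, maintaining the current column/row boundary vectors and pushing them through each $\mathcal{T}_C$ in $O((a+b)\log m)$ time. With $O((n/a)(m/b))$ macro-cells in total, the query time is $O\bigl(nm(1/a + 1/b)\log m\bigr) = O(nm^{1-\beta}\log m)$. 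The $O(nm\log m)$ space budget comfortably accommodates the raw macro-cell data together with the Monge representations. Insertions and deletions are handled by storing the two vertex sequences in balanced trees of blocks of size $\Theta(b)$ (resp.\ $\Theta(a)$), locally rebuilding a macro-cell when it overflows or underflows, and amortizing the additional bookkeeping into the same bound.

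The main technical obstacle is to realize the Monge representation of a macro-cell's transfer matrix in a form that simultaneously (i) can be built in $O(ab)$ time directly from the cell's DP and (ii) supports min-plus multiplication with an arbitrary boundary vector in $O((a+b)\log m)$ time. This should be achievable by combining SMAWK-style row-minima searches with a priority data structure over the boundary that merges ``entries arriving from the left'' with ``entries arriving from above''; the $\log m$ factor in the stated bounds comes from this data structure. A secondary concern is the asymmetric regime $n \gg m$, which is resolved by choosing the macro-cell height $a$ and width $b$ separately as indicated above; aside from this parameter tweak and the standard amortization for maintaining block sizes under insertions and deletions, the remainder of the argument is routine bookkeeping.
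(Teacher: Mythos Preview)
Your high-level strategy---macro-cells, a Monge boundary-to-boundary transfer matrix per cell, SMAWK for min-plus products, a wavefront over the macro-grid for queries, and rebuilding one strip of cells per update---is exactly the paper's approach. Two concrete points, however, do not go through as written.

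\textbf{Building and storing the transfer matrix.} A single ``standard $O(ab)$-time DP'' computes distances from \emph{one} source; it does not give you the full $(a+b)\times(a+b)$ boundary-to-boundary matrix. Nor can a general (non-unit) Monge matrix with arbitrary real entries be stored in $O((a+b)\log m)$ space---that compression is specific to unit-Monge / LCS-type problems, not weighted DTW. The paper instead invokes Klein's multiple-source shortest paths on the $O(ab)$-vertex planar cell: after $O(ab\log(ab))$-time preprocessing, every boundary-to-boundary distance is available in $O(\log(ab))$ time, and the full matrix is stored explicitly. SMAWK then gives $O(a+b)$-time min-plus products. The per-cell cost is therefore $O\bigl((ab+(a+b)^2)\log m\bigr)$ time and $O((a+b)^2)$ space.

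\textbf{Asymmetric cells.} With $a=\Theta(nm^{\beta-1})$ and $b=\Theta(m^\beta)$ you have $a/b=\Theta(n/m)$, so when $n\gg m$ the cells are thin and $(a+b)^2=\Theta(a^2)\gg ab$. Plugging this into the true per-cell cost above blows both the update time and the total space past the stated bounds (e.g.\ total space becomes $\Theta(n^2)$). The fix is simply to take $a=b=\Theta(m^\beta)$; no ``balancing'' is needed. A $P$-update then touches $m/b=m^{1-\beta}$ cells at cost $O(m^{2\beta}\log m)$ each, i.e.\ $O(m^{1+\beta}\log m)\le O(nm^\beta\log m)$ since $n\ge m$, and a $Q$-update touches $n/b=nm^{-\beta}$ cells for the same $O(nm^\beta\log m)$ total. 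Query and space are as you computed.

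With square cells and Klein's MSSP in place of your $O(ab)$ / $O((a+b)\log m)$ claims, the remainder of your outline (wavefront query, balanced-tree block maintenance for insertions and deletions with periodic rebuilds) is correct and coincides with the paper's proof.
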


We give a high-level description of this data structure.
The curves $P$ and $Q$ induce a vertex-weighted grid graph $G$ of $n$ columns and $m$ rows, where we have horizontal, vertical, and diagonal edges.
We call such a graph a \emph{rectangular} graph, referencing its rectangular bounding box. 
The value $\DTW(P, Q)$ corresponds to the length of the shortest $xy$-monotone path\footnote{A sequence of pairs $(x_1,y_1), (x_2,y_2), \ldots, (x_\ell,y_\ell)$ such that both $x_1,x_2,\ldots, x_\ell$ and $y_1,y_2,\ldots, y_\ell$ are monotone.} from node $(1, 1)$ to node $(n, m)$. 
Throughout this paper, our coordinates refer to matrix indexing. Hence, we consider the shortest $xy$-monotone path from the top left to the bottom right corner. 
We show an upper bound that is more general than Theorem~\ref{thm:upperboubnd}. We choose some $\beta \in [0, \frac{1}{2}]$. We partition $P$ and $Q$ into $\Theta( (n + m) m^{-\beta})$ subcurves containing $O(m^{\beta})$ vertices each. 
This partitions the rectangular graph into $O(n m^{1 - 2 \beta} )$ rectangular subgraphs, each containing $O(m^{2 \beta})$ vertices. 
For each rectangular subgraph~$G'$, we store a distance matrix $A'$ with dimensions $m^{\beta} \times m^{\beta}$.
This matrix considers all vertices $S$ on the top and left boundaries of the bounding rectangle and all vertices $T$ on the bottom and right boundaries, and records (by using known techniques for multiple-source shortest paths in planar graphs) all $st$-distances for $s \in S$ and $t \in T$. 
At query time, we combine the $O( n  m^{1 - 2 \beta} )$ matrices, spending $O(m^{\beta}\log m)$ time per matrix, by a wavefront algorithm to compute $\DTW(P, Q)$ in $O(n m^{1 - \beta} \log m)$ time.  
At update time, intuitively, any update affects $O(nm^{-\beta})$ matrices $A'$ (assuming $n \geq m$). We can update each matrix $A'$ in time $O(m^{2\beta} \log m)$, and thus we can perform the whole update in time $O( n m^{- \beta} \cdot m^{2\beta} \log m) = O( n m^{\beta} \log m)$.

\paragraph{Lower Bound}
We show that the above trade-off of update and query time is optimal, up to subpolynomial factors and assuming a plausible hypothesis from fine-grained complexity theory.
The specific hypothesis concerns the Negative-$k$-Clique problem for some constant $k$: Given an undirected $k$-partite graph $G$ with $N$ nodes and integer edge-weights of absolute value at most~$N^{O(k)}$, decide whether $G$ has a $k$-clique with negative total edge weight. This problem has a naive~$O(N^k)$-time algorithm, and no much faster algorithms are known. This lack of progress lead to the formulation of the \emph{Negative-$k$-Clique Hypothesis}, which postulates that Negative-$k$-Clique cannot be solved in time $O(N^{k-\delta})$ for any constants $k \ge 3$ and $\delta > 0$. For $k=3$ this hypothesis is equivalent to the famous All-Pairs Shortest Paths hypothesis~\cite{WilliamsW10}, so it is very believable.
The generalization to $k > 3$ is very plausible, and was successfully applied as a hardness assumption in several contexts~\cite{AbboudBDN18,AbboudWW14,BackursDT16, BackursT17,BringmannCM22,BringmannGMW20, LincolnWW18}, even as the basis for public-key cryptography schemes~\cite{LaVigneLW19}. Moreover, the Negative-$k$-Clique Hypothesis is known to imply the Orthogonal Vectors Hypothesis~\cite{AbboudBDN18}, which implies the static lower bounds for DTW.
We use the Negative-$k$-Clique Hypothesis to prove the following matching lower bound for our dynamic DTW algorithm.

\begin{restatable}{theorem}{lowerboundDTW}
  Fix constants $c\ge 1, \beta \in [0,\frac12]$, $\gamma \in (0, 1]$ and $\delta, \varepsilon > 0$.
  If there is a data structure that maintains curves $P$ and $Q$ of length $n$ and $m$ with $m \in \Omega(n^{\gamma - \eps}) \cap O(n^{\gamma + \eps})$, has preprocessing time $O((n+m)^c)$, and supports substitution updates in time $\Oh(n \cdot m^{\beta-\delta})$ and DTW queries in time $\Oh(n \cdot m^{1-\beta-\delta})$,
    then the Negative-$k$-Clique Hypothesis fails (for some constant $k$ depending solely on $(c, \beta, \gamma, \delta, \varepsilon)$).
\end{restatable}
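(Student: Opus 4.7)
The plan is a fine-grained reduction from Negative-$k$-Clique to the dynamic DTW data structure problem, where $k$ is chosen sufficiently large in terms of the fixed parameters $c, \beta, \gamma, \delta, \varepsilon$. Given a $k$-partite graph $G$ with parts $V_1, \ldots, V_k$ of size $N$ and integer edge weights, I would partition the parts into three index sets $J, L, K$ with $|J|+|L|+|K| = k$ and enumerate all $k$-tuples $(v_1,\ldots,v_k)$ in three layers. Parts indexed by $J$ are enumerated \emph{inside} a single DTW computation (the two curves are constructed so that the shortest-path semantics of the DTW traversal implicitly minimises over all $N^{|J|}$ completions). Parts indexed by $L$ are enumerated by a batch of $\Theta(N^{|L|})$ substitution updates between successive queries, each substitution switching the selected node encoded in a single gadget. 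Parts indexed by $K$ are enumerated by rebuilding the instance and rerunning the inner procedure $\Theta(N^{|K|})$ times. Choosing $(|J|,|L|,|K|)$ and the curve lengths $n, m$ (with $m = n^{\gamma \pm \varepsilon}$ as forced by the hypotheses) so that the total cost
\[
O((n+m)^c) + N^{|K|} \bigl( N^{|L|}\cdot n\, m^{\beta-\delta} + n\, m^{1-\beta-\delta} \bigr)
\]
is $O(N^{k - \Omega(\delta)})$ would then contradict the Negative-$k$-Clique Hypothesis.

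For the construction in $\mathbb{R}^1$, I would build the two curves through the layered gadgetry suggested by the preamble macros. Scalar points realise the bottom layer of constants ($\consts$); blocks of points encode the binary representations of integer edge weights ($\binaryNum$); node-selection gadgets ($\nodes$) expose one mutable position whose value encodes the currently selected vertex in one part, so that a single substitution update implements enumeration over $L$; partial-clique aggregators ($\smallCliqueGadgetA$, $\smallCliqueGadgetB$) sum the weights of the relevant edges between currently selected nodes; and a top-level clique gadget ($\cliqueGadget$) combines everything into one DTW instance. Heavy barrier vertices of value $\Theta(\constWeight)$ arranged at spacing $\gadgetSize = \gadgetSizeFormula$ force any near-optimal DTW traversal to visit the gadgets in the intended order, so that the rectangular DTW grid of dimensions $\cols \times \rows$, with block size $\blockFormula$ and main region of height $\mainHeightFormula$, evaluates each currently enumerated $k$-tuple exactly once. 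The resulting DTW value then equals the minimum clique weight over the currently enumerated configurations plus a fixed, computable offset, so a single DTW query identifies whether any enumerated clique is negative.

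The main obstacle will be the exact correctness of the gadgetry: one must prove matching upper and lower bounds showing that the optimal DTW traversal is precisely (an affine function of) the minimum-weight $k$-clique realised by the currently enumerated configuration, with no ``cheating'' alignment that skips a weight evaluation to save distance. In $\mathbb{R}^1$, where the only geometric freedom is the real-line ordering, this rigidity must be engineered entirely through numerical values and barrier weights, and this is where the bulk of the technical work will lie. A secondary constraint is that every node-selection gadget must occupy the same number of positions regardless of the selected vertex, so that pure substitutions (rather than insertions or deletions) suffice to switch between enumerated configurations. Once these structural invariants are established, the contradiction reduces to an exponent-arithmetic check that plugs $(|J|, |L|, |K|)$ and the curve-length exponents into the displayed cost bound against the Negative-$k$-Clique exponent $k$, using that $k$ can be taken large enough to swallow the preprocessing term $(n+m)^c$.
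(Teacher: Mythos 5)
Your high-level strategy is correct and matches the paper's: reduce Negative-$k$-Clique to dynamic DTW by partitioning the $k$ parts of the graph into groups, some of which are enumerated implicitly inside a single shortest-path/DTW query, some via substitution updates, and some in an outer loop. You also correctly identify the central technical hurdle (forcing every near-optimal traversal to honestly evaluate exactly one enumerated clique configuration, with no shortcutting) and correctly note that pure substitutions require every node-selection gadget to occupy a fixed footprint. The paper happens to route the reduction through an auxiliary \textsc{Intermediary} problem (a dynamic shortest-path problem on a grid with row/column-encoded edge weights) before going to curves in~$\mathbb{R}^1$, but that is a presentation device; the composed reduction has the shape you describe. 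Two substantive gaps remain, however.

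First, the outer loop over your set $K$ cannot be implemented by \emph{rebuilding}, i.e.\ rerunning preprocessing. If preprocessing is paid $\Theta(N^{|K|})$ times, the total preprocessing cost is $N^{|K|}(n+m)^{c}$. Writing $n=\Theta(N^{a})$, $m=\Theta(N^{b})$ with $b=\gamma a$ as the hypotheses force, and using that the clique candidates decoded by one query account for about $N^{|J|}$ with $|J|+|L|\approx a+b$, the bound $N^{|K|}(n+m)^{c}=O(N^{k-\Omega(1)})$ requires roughly $(c-1)a<b=\gamma a$, i.e.\ $c<1+\gamma\le 2$. So for general $c\ge 1$ (which the theorem allows) full re-preprocessing kills the reduction. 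The paper instead partitions into \emph{four} groups $\mathcal{V}_1,\mathcal{V}_2,\mathcal{V}_3,\mathcal{V}_4$, and advances its analogue of your $K$ (namely $\mathcal{V}_4$, giving $\tau_4$ ``epochs'') by performing a \emph{larger batch of substitution updates} --- $\Theta(\tau_2\gadgetSize)=\Theta_k(\tau_2 N^2)$ column toggles rather than a rebuild --- so that only the update time, never the preprocessing time, is multiplied by the epoch count. You should replace ``rebuild and rerun'' with this batched-update mechanism, and accordingly your displayed cost should resemble
\[
O\bigl((n+m)^{c}\bigr) \;+\; O_k\bigl((\tau_1+\tau_2)\,\tau_4\,N^{2}\bigr)\cdot n\,m^{\beta-\delta} \;+\; O\bigl(\tau_1\tau_4\bigr)\cdot n\,m^{1-\beta-\delta},
\]
with a query after \emph{each} $L$-step, not just one query per outer iteration (your formula omits the $N^{|L|}$ factor on the query term).

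Second, the set $J$ that you enumerate inside one query cannot be treated as a single undifferentiated block. Because the query and update times have \emph{different} powers of $n$ and $m$, and because $m\approx n^{\gamma}$ is externally imposed, the exponent arithmetic only closes if you split $J$ into a piece encoded along the column dimension (the paper's $\mathcal{V}_2$, controlling $\tau_2$ and hence $m$) and a piece encoded along the row dimension (the paper's $\mathcal{V}_3$, controlling $\tau_3$ and hence $n$). The three exponents $\rho_1,\rho_2,\rho_3$ are then pinned down as $\rho_1=\frac{\beta\gamma}{c+2}$, $\rho_2=\frac{(1-\beta)\gamma}{c+2}$, $\rho_3=\frac{1}{c+2}$ so that $m=\Theta(N^{(\rho_1+\rho_2)k})$, $n=\Theta(N^{\rho_3 k})$ satisfy $m\approx n^{\gamma}$, and the residual $\mathcal{V}_4$ has size $\approx\bigl(1-\tfrac{1+\gamma}{c+2}\bigr)k$. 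Your proposal references the relevant grid parameters ($\cols$, $\rows$, $\blockFormula$, $\mainHeightFormula$) but never explains this row/column split or how $(|J|,|L|,|K|)$ are chosen consistently with $m\approx n^{\gamma}$; as it stands the ``exponent-arithmetic check'' you defer is precisely where the argument would break without the four-way partition and the above choice of $\rho_i$.
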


We next give a high-level overview of this conditional lower bound.
We first define a new intermediary algorithmic problem which we call \textsc{Intermediary}.
Then we present reductions from Negative-$k$-Clique to \textsc{Intermediary}, and from \textsc{Intermediary} to dynamic DTW. 

\begin{figure}[t]
  \centering
  \includegraphics[]{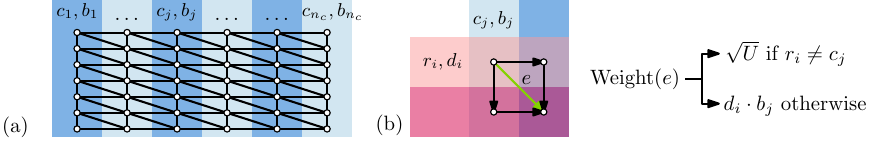}
  \caption{
    We illustrate the intermediary problem for our lower bound. 
  (a) A rectangular graph with $n_r$ rows and $n_c$ columns.
(b) Vertical and horizontal edges always have weight $U$. 
The diagonal connecting $(i, j)$ and $(i+1, j+1)$ has weight $\sqrt{U}$ if $r_i \neq c_j$, and it has weight $d_i b_j$ otherwise. 
    }
 \label{fig:intermediary}
\end{figure}

\begin{restatable}[See Figure~\ref{fig:intermediary} for an illustration]{problem}{intermediary}
In the \textsc{Intermediary} problem, the task is to maintain a directed $(n_r\times n_c)$-grid graph, 
with the weights of horizontal, vertical, and diagonal edges defined using auxiliary parameters:
\begin{itemize}[noitemsep, nolistsep]
    \item non-negative integer \emph{identifiers} $r_i$ and $c_j$ for all rows $i\in [n_r]$ and columns $j\in [n_c]$, resp.;
    \item non-negative integer \emph{weights} $d_i$ for all rows $i\in [n_r]$;
    \item Booleans $b_j$ for all columns $j\in [n_c]$,
    \item an integer $U > n_r \cdot n_c \cdot (\max_i d_i  \cdot \max _i r_i  \cdot \max_j c_j)^2$.
\end{itemize}
Horizontal edges $(i, j)\to (i, j+1)$ and vertical edges $(i, j)\to (i+1, j)$ have weight $U$. 
The weight of the diagonal edge $(i, j)\to (i+1, j+1)$ is $d_i \cdot b_j$ if $r_i = c_j$, and $\sqrt{U}$ otherwise. \newline
An \emph{update} $\textsc{update}(j,x)$ sets $b_j:=x$ (for $x \in \{0, 1\}$). \newline
A \emph{query} returns the length of the shortest $(0, 0)\leadsto (n_r-1, n_c-1)$ path, unless the length is at least $U \cdot |n_r - n_c| + \sqrt{U}$, in which case it returns $\infty$.  
\end{restatable}

In Section~\ref{sec:lower-bound-curve} we reduce from \textsc{Intermediary} to dynamic DTW:

\begin{restatable}{theorem}{curvelowerbound}
    Suppose that there is a data structure that maintains curves $P,Q$ (subject to substitutions of vertices in $Q$ and DTW queries) with preprocessing time $T_P(n,m)$, update time $T_U(n,m)$, and query time $T_Q(n,m)$, where $n=|P|$ and $m=|Q|$.
    Then any $(n_r \times n_c)$-size instance of the \textsc{Intermediary} problem can be solved with preprocessing time $\Oh(T_P(n,m)+n+m)$, update time $\Oh(T_U(n,m))$, and query time $\Oh(T_Q(n,m))$, where $n = \Oh(n_r)$ and $m=\Oh(n_c)$.
\end{restatable}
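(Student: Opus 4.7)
The plan is to encode the \textsc{Intermediary} grid into two one-dimensional curves $P$ and $Q$, of lengths $n=\Oh(n_r)$ and $m=\Oh(n_c)$, such that (a) the DTW distance $\DTW(P,Q)$ determines the current shortest-path length in the grid, and (b) flipping a single Boolean $b_j$ amounts to substituting only $\Oh(1)$ vertices inside $Q$. Since the static parameters $r_i$, $c_j$, $d_i$, and $U$ never change, they are baked into $P$ and $Q$ at build time, and the DTW data structure is initialized once during the \textsc{Intermediary} preprocessing phase in time $\Oh(T_P(n,m)+n+m)$.

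The construction would proceed by concatenation of constant-size \emph{row gadgets} in $P$ (one per row $i$, encoding $r_i$ and $d_i$) and constant-size \emph{column gadgets} in $Q$ (one per column $j$, encoding $c_j$ and $b_j$), separated by spacer vertices chosen so that the only DTW traversals of moderate cost are those that walk through row gadgets and column gadgets in lock-step, one pair at a time. The gadgets are designed so that transitioning from the $i$-th row gadget into the $(i+1)$-th while also passing from the $j$-th column gadget into the $(j+1)$-th (a ``diagonal'' crossing) contributes exactly the weight of the grid edge $(i,j)\to(i+1,j+1)$, whereas a ``horizontal'' or ``vertical'' crossing contributes exactly $U$. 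The multiplicative weight $d_i\cdot b_j$ is expressed through the additive distance $|x-y|$ by placing in $P$'s row gadget a coordinate determined by $(r_i,d_i)$ and in $Q$'s column gadget a coordinate determined by $c_j$, so that matching identifiers ($r_i=c_j$) produce gap $d_i$ and mismatches produce gap $\sqrt{U}$; the Boolean $b_j$ is then realized by a single extra point in the $j$-th column gadget whose value toggles between two options, one of which zeroes the corresponding contribution and the other of which lets it through unchanged.

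For the dynamic interface, an \textsc{update}$(j,x)$ modifies only the $j$-th column gadget of $Q$ in a constant number of positions and is routed to $\Oh(1)$ substitution updates on $Q$, incurring cost $\Oh(T_U(n,m))$. A query calls the DTW data structure once, subtracts the fixed ``baseline cost'' any monotone traversal of $P,Q$ must pay (a function of $n_r$, $n_c$, and $U$ only, computable in $\Oh(1)$ once), and compares the result against the prescribed threshold $U\cdot|n_r-n_c|+\sqrt U$ to return either the decoded grid distance or $\infty$, all in time $\Oh(T_Q(n,m))$. Matching the claimed asymptotics then amounts to checking that the gadgets have $\Oh(1)$ size so that $n=\Oh(n_r)$ and $m=\Oh(n_c)$.

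The main obstacle is the correctness proof, and in particular the ``no-cheating'' direction: one must show that no DTW traversal can save cost by misaligning the lock-step pairing of row and column gadgets, or by routing through a mismatched gadget at the $\sqrt U$ rate when it should pay the higher $U$ rate. This is exactly what the assumption $U > n_r n_c (\max_i d_i\cdot\max_i r_i\cdot\max_j c_j)^2$ is tuned for: the gap between $U$ (the cost of an unwanted horizontal/vertical crossing), $\sqrt U$ (a mismatched diagonal), and $d_i b_j$ (a matched diagonal) is large enough that the total potential savings from any fraudulent alignment are dwarfed by even a single illicit $U$-step. Making this precise requires a careful charging argument: partition any optimal traversal into the $\Oh(n_r+n_c)$ gadget-crossing events, show that only $|n_r-n_c|$ of them may be non-diagonal without exceeding the query threshold, and show that each diagonal crossing pays exactly the intended Intermediary edge weight.
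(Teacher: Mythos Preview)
Your overall architecture matches the paper's: row gadgets $\alpha_i$ concatenated into $P$, column gadgets $\beta_j$ concatenated into $Q$, separated by spacers so that cheap traversals must process gadgets in lock-step, with an $\Oh(1)$-substitution update per Boolean flip and a final affine decoding of $\DTW(P,Q)$. The high-level correctness plan (hierarchy of scales $U\gg\sqrt U\gg d_i b_j$, counting crossings) is also in the right spirit.

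The genuine gap is the core gadget: you have not actually shown how to realise the \emph{product} $d_i\cdot b_j$ as a sum of absolute differences $|x-y|$ in $\Real^1$. Your sentence ``matching identifiers produce gap $d_i$\ldots; the Boolean $b_j$ is then realized by a single extra point\ldots\ whose value toggles between two options, one of which zeroes the corresponding contribution'' does not work as stated. DTW costs are non-negative and additive, so no single point in $Q$ can \emph{subtract} $d_i$; and if instead the toggle point is supposed to make $|p-q|=0$ by equalling $p$, then $q$ would have to depend on the row parameter $d_i$, which a column gadget cannot do. The paper's solution is not a single toggle point but a four-point cancellation trick: each $\alpha_i$ carries four coordinates of the form $cU^4\pm 2r_iU^3\pm \tfrac{d_i}{4}$ and each $\beta_j$ carries four coordinates of the form $cU^4\pm 2c_jU^3\pm(-1)^{b_j}U$, arranged so that $\sum_{k=1}^4|\alpha_i^k-\beta_j^k|$ equals $4U$ when $r_i=c_j,\,b_j=0$, equals $4U+d_i$ when $r_i=c_j,\,b_j=1$, and is $\ge U^3$ when $r_i\ne c_j$. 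Flipping $b_j$ substitutes exactly the two coordinates carrying $(-1)^{b_j}$. Without some mechanism of this kind, your reduction has no way to make the diagonal cost depend multiplicatively on $b_j$.

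A secondary point: your ``no-cheating'' paragraph underestimates the work. The paper needs long runs of a far-away spacer $\star=-U^5$ (eight copies on each side of every gadget) and a dedicated structural lemma showing that an optimal traversal never touches certain ``white'' boundary vertices; only then can one cleanly decompose the traversal into block-to-block moves whose costs are exactly $4U^5+10U^4$ (horizontal/vertical) or $4U+d_ib_j$ (diagonal). Your sketch should at least name the spacer mechanism and the boundary-avoidance lemma, since the charging argument you outline relies on them.
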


%\begin{restatable}
%    For any instance of \textsc{Intermediary} with $n_r$ rows and $n_c$ columns ($n_r, n_c \geq 4$), there exist curves $P$ with $n = O(n_r)$ vertices and $Q$ with $m = O(n_c)$ vertices such that.:  Any algorithm that, after $O( \textnormal{poly}(n, m))$ time preprocessing, can dynamically store $(P, Q)$ (subject to translation of vertices in $Q$) in with $U(n_r, n_c)$ update time and can compute $DTW(P, Q)$ in $Q(n_r, n_c)$ time, can solve this instance of \textsc{Intermediary} with $O(U(n_r, n_c))$ update time and $O(Q(n_r, n_c))$ query time.     
%\end{restatable}

In Section~\ref{sec:lowerbound} we then reduce from Negative-$k$-Clique to \textsc{Intermediary}:

% \tknote{An alternative formulation.}

\begin{restatable}{theorem}{lowerBoundIntermediary} \label{thm:lowerboundInter}
    Fix constants $c\ge 1$, $\beta \in [0,\frac12]$, $\gamma \in (0, 1]$, and $\delta, \varepsilon > 0$.
    If the \textsc{Intermediary} problem for $n_c \in \Omega(n_r^{\gamma - \varepsilon}) \cap O(n_r^{\gamma + \varepsilon})$ can be solved with preprocessing time $\Oh((n_r + n_c)^{c})$,
    update time $\Oh(n_r \cdot n_c^{\beta-\delta})$, and query time $\Oh(n_r \cdot n_c^{1-\beta-\delta})$,
    then the Negative-$k$-Clique Hypothesis fails (for some constant $k$ depending solely on $(c, \beta, \gamma, \delta, \varepsilon)$).
\end{restatable}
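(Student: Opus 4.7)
I would reduce Negative-$k$-Clique to \textsc{Intermediary}. Fix a $k$-partite instance with $N$ vertices per part and integer edge weights bounded by $N^{O(k)}$, and partition the $k$ parts into three groups of sizes $\tau_1, \tau_2, \tau_3$ summing to $k$, which I call \emph{static}, \emph{update-enumerated}, and \emph{query-enumerated}.

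First I would balance parameters: choose rationals $\sigma_r,\sigma_c>0$ with $\sigma_c/\sigma_r\in[\gamma-\varepsilon,\gamma+\varepsilon]$ and integers $\tau_1,\tau_2,\tau_3\ge 1$ so that, setting $n_r := N^{\sigma_r}$ and $n_c := N^{\sigma_c}$, the inequalities
\[
    \sigma_r + \beta\sigma_c < \tau_3,\qquad
    \sigma_r + (1-\beta)\sigma_c < \tau_3,\qquad
    c\cdot\max(\sigma_r,\sigma_c) \le \tau_2+\tau_3,
\]
each hold with $\Omega(\delta)$-slack in the exponent. Since $\beta\le 1/2$, the binding constraint is $\sigma_r+(1-\beta)\sigma_c < \tau_3$; pick $\tau_3$ as an integer exceeding it by a constant, then $\tau_2,\tau_1$ to satisfy the rest.

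Second, for each of the $N^{\tau_1}$ static assignments I would build one \textsc{Intermediary} instance. Columns are arranged so that the $n_c-1$ matched diagonals of any valid path partition the query-enumerated parts into ``blocks''; the row and column identifiers are chosen so that (a) an ordered matching along these diagonals corresponds to a consistent choice of vertices across all query-enumerated parts, and (b) inconsistent matchings either violate $r_i=c_j$ at some diagonal (incurring $\sqrt{U}$) or require a detour (incurring $U$), putting their total cost above the threshold $U|n_r-n_c|+\sqrt{U}$. Row weights $d_i$ store the static contribution plus the relevant pairwise-edge sums, while each update toggles $O(1)$ booleans $b_j$ to encode one update-enumerated assignment. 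Each query then returns, up to a precomputed additive shift, the minimum total clique weight over all $N^{\tau_3}$ query-enumerated completions consistent with the current update state; a negative $k$-clique exists iff some query returns a value below the shift-determined threshold. Summing over all phases, the total running time is $N^{\tau_1}(T_P + N^{\tau_2}(T_U + T_Q)) = O(N^{k-\eta})$ for some $\eta = \Omega(\delta)>0$, contradicting the Negative-$k$-Clique Hypothesis.

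\paragraph{Main obstacle.} The technical heart is the encoding: the identifier-matching structure of \textsc{Intermediary} is \emph{local} (each column $j$ constrains only one identifier $r_i=c_j$), but the reduction requires \emph{global} consistency (the same vertex chosen for each clique part across all $n_c-1$ diagonals). Forcing global consistency via a purely local constraint requires a carefully designed block/gadget construction in which inconsistent choices are pushed above the query threshold by forcing an unmatched $\sqrt{U}$ diagonal; getting this right --- while also respecting the size constraint $U > n_r n_c(\max_i d_i \cdot \max_i r_i \cdot \max_j c_j)^2$ and keeping all numerical parameters polynomial in $N$ --- is the main technical challenge.
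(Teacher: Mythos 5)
There is a genuine gap in your parameter budgeting, and it stems from the choice to encode the ``static'' contribution in the row weights $d_i$ and hence to re-preprocess for each of the $N^{\tau_1}$ static assignments. Since updates in \textsc{Intermediary} may only flip the column Booleans $b_j$, changing information stored in $d_i$ indeed forces a fresh build, and your total preprocessing cost is $N^{\tau_1}(n_r+n_c)^c = N^{\tau_1 + c\sigma_r}$ (using $\sigma_r\ge\sigma_c$ as $\gamma\le 1$). Now with a grid-gadget encoding of the kind you describe (and of the kind the paper uses), the rows encode the query-enumerated tuple and the columns encode the update-enumerated tuple, so $\tau_2+\tau_3$ is essentially $\sigma_r+\sigma_c$ up to lower-order terms; in the paper's parameters one has $\sigma_r\approx \rho_3 k$, $\sigma_c\approx(\rho_1+\rho_2)k$, and query coverage $N^{(\rho_2+\rho_3)k}$, giving $\tau_2+\tau_3=(\rho_1+\rho_2+\rho_3)k = \sigma_r+\sigma_c + O(1)$. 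Your third inequality $c\sigma_r < \tau_2+\tau_3$ then forces $c\sigma_r < \sigma_r+\sigma_c = (1+\gamma)\sigma_r$, i.e.\ $c < 1+\gamma \le 2$. This means your plan cannot even be set up for, say, $c=3,\gamma=1$, whereas the theorem must hold for every constant $c\ge 1$. (Note that trying to shrink $\sigma_r$ breaks the gadget encoding, and increasing $\tau_1$ only tightens the constraint further.)

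The paper avoids exactly this trap by partitioning $V$ into \emph{four} groups and performing a \emph{single} preprocessing. The fourth group (your ``static'' group; their ``epochs'' $\mathcal{V}_4$) is enumerated by batches of $b_j$-updates rather than by rebuilding: an epoch switch changes $O(\tau_2 g)$ column activations, costing $O(\tau_2 g\cdot T_U)$, while the row weights $d_i$ never change across phases or epochs (they explicitly state that the $d_i$'s ``do not depend on $s$ or $t$''). Their running time is $T_P + (\tau_1+\tau_2)\tau_4 g\, T_U + \tau_1\tau_4\, T_Q$, in which the single $T_P$ is amortized against $\tau_4$ epochs (they verify $T_P/\tau_4 = O(1)$), and this is what lets them push $\rho_1+\rho_2+\rho_3$ strictly below $1$ and handle arbitrary $c$. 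So the fix to your plan is not cosmetic: you must move all per-epoch and per-phase information into the column Booleans $b_j$ (this is the bulk of Section~5 of the paper), and that in turn forces the four-way split so that the update cost of an epoch switch can be accounted for separately from the update cost of a phase switch.
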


Notice that \Intermediary{} is in fact a dynamic shortest path problem on planar graphs.
In \cite{abboudD16} a lower bound is given for a dynamic shortest path problem on planar graphs, where updates select a single edge and modify its weight.
However, in \Intermediary{} we are not allowed to modify the weight of single edges.
In fact the weight of an edge can take at most two different values.

In our lower bound, we design gadgets that consist of batches of edges.
We then ensure that a shortest path either follows all the edges in a batch, or none of them.
By properly modifying the weight of each edge in a batch, we simulate the capability of choosing between multiple weights, instead of just two, for each gadget.

This is again not enough, because an update changes the weights of multiple edges (and also gadgets), not just a selected one.
Let us now provide some (oversimplified) intuition on how we deal with this difficulty.
All gadgets can be thought of as having two different weights, a large and a small one.
The large weights are never modified, and they ensure a restricted form for the shortest path.
The small weights can be thought of as a noise on top of the large weights, that cannot invalidate the aforementioned shortest path property, but encodes the critical information for the lower bound.
Now, when we modify the weight of a gadget, this may indeed ``accidentally'' modify the (small) weight of some other gadget as well.
However, it is ensured that when the weight of a gadget is accidentally changed, then this gadget could anyway not be part of a shortest path, due to its large weight.

The actual construction is of course more elaborate, and at times needs to modify the large weights as well.

\subsection{Further Related Work}
%In this section we try to give a glimpse of the different research directions.

\paragraph{Static Dynamic Time Warping}
DTW is a well-studied similarity measure with a wealth of prior work. 
Recall that in the static setting computing DTW requires essentially quadratic time. Attempting to bypass this barrier, previous works studied DTW from many creative angles, e.g., on binary inputs (i.e., both curves have at most two different vertices)~\cite{abboud2015tight,K21}, on run-length encoded curves~\cite{SDHDKJ18,XK22,FroeseJRW23,BGMW23}, parametrized by the DTW distance~\cite{kuszmaul2019dynamic}, approximation algorithms~\cite{AgarwalFPY16,YingPFA16,kuszmaul2019dynamic}, communication complexity~\cite{BCKY20}, and many other settings~\cite{GDPS22,SI20,SI22,HG19,HG22,BringmannKKMN22}. 
DTW is also related to other similarity measures in computational geometry and stringology:

\paragraph{\frechet Distance}
The \frechet distance is defined similarly as DTW, except that the cost of a traversal is the maximum distance that dog and owner have at any point during the traversal (instead of the sum of all distances as in DTW). 
%More precisely, what we described is the discrete \frechet distance, there is also a continuous version where dog and owner continuously walk along the polygonal curve through the vertices of $P$ and $Q$.
\frechet distance has the nice feature of being a metric, but it is less outlier resistant compared to the DTW distance. 
\frechet distance can also be computed in time $O(n^2)$~\cite{eitermannila94,buchin2017four}, and not in time $O(n^{2-\delta})$ for any $\delta > 0$ assuming the Strong Exponential Time Hypothesis~\cite{bringmann2014walking}. Similarly as DTW, also \frechet distance has been studied from various angles, see, e.g.~\cite{BringmannM16,ChanR18,FiltserK20,BringmannKN21,HorstKOS23}. 
To the best of our knowledge, \frechet distance has not been studied in a dynamic setting. %We leave it as an open problem to determine the optimal time complexity of insertion, deletion and substitution updates for the \frechet distance on dynamically changing curves.

The line of research that comes closest to a dynamic setting are nearest neighbor data structures for the \frechet distance. Here, a set $S$ of curves is given as a static input that we can preprocess to build a data structure. As a query we are then given a curve $Q$ and the task is to compute the curve $P \in S$ with smallest \frechet distance to $Q$. Since between two queries the curve $Q$ can change completely, one can show that a naive recomputation of all \frechet distances is (near-)optimal for exact algorithms. Research has thus focused on approximation algorithms~\cite{driemel2013jaywalking,DriemelPS19,Filtser18,FiltserF21,BringmannDNP22} and on restricted curves~\cite{de2017data,buchin2022efficient}.

\paragraph{Edit Distance}
The edit distance is a similarity measure on strings that counts the number of character insertions, deletions, and substitutions to transform one string into the other.
It can also be computed in time $O(n^2)$~\cite{NW70,Sel74,WF74}, but not in time $O(n^{2-\delta})$ for any $\delta > 0$ assuming the Strong Exponential Time Hypothesis~\cite{BI18}, even on binary strings~\cite{bringmann2015quadratic}. 
Edit distance on dynamically changing strings admits an exact algorithm with $\tilde{O}(n)$ update and query time~\cite{CKM20} (which is optimal assuming the Strong Exponential Time Hypothesis), and it admits an $n^{o(1)}$-approximation algorithm with $n^{o(1)}$ update and query time~\cite{KMS23}. 

Edit distance is usually studied with unit cost for insertions, deletions, and substitutions, but there is also a weighted variant in which the cost of any operation depends on the involved characters. The quadratic-time algorithm also works for this weighted edit distance. In the dynamic setting, weighted edit distance admits the same trade-off as we show for DTW in this paper, i.e., it has an exact algorithm with $\tilde{O}(n^{1+\beta})$ update time and $\tilde{O}(n^{2-\beta})$ query time for any $\beta \in [0, \frac{1}{2}]$~\cite[easily obtainable from Thm 11]{CKM20}, and there is a matching lower bound assuming the All-Pairs Shortest Paths hypothesis~\cite{CKW23}. 
Although these results are in the area of string algorithms, whereas here we study DTW, they are the most related work in the literature. In fact, there is a reduction from edit distance to DTW~\cite{kuszmaul2019dynamic}, so DTW can be viewed as the harder problem. Therefore, our dynamic DTW algorithm is a generalization of the dynamic weighted edit distance algorithm, following a similar approach. On the other hand, the lower bound for dynamic weighted edit distance in~\cite{CKW23} produces strings with $\Theta(n)$ different characters. Thus, when an update introduces a new character, then $\Theta(n)$ distances to the existing characters need to be specified. This setting cannot be modelled using the DTW distance of curves over $\Real^{d'}$ (for $d'=o(n)$), where each point can be described by $\Oh(d')$ coordinates, which is not enough to encode $\Theta(n)$ independent distances. We circumvent this issue by reducing from Negative-$k$-Clique instead of All-Pairs Shortest Paths. The resulting conditional lower bound follows the same high-level ideas as in~\cite{CKW23}, but needs to deviate considerably in the details, because we need gadgets that test for cliques instead of triangles.

\section{Preliminaries} \label{sec:preliminaries}
In this section we present the key concepts required for our results: rectangular graphs, distance measures between curves and conditional lower bounds. 
We denote by $\mathbb{N}$ the natural numbers, by $\mathbb{R}$ the reals, and for all $n \in \mathbb{N}$ by $[n]$ the integers $1$ up to (and including) $n$.
We use matrix indexing: a matrix with $n$ rows and $m$ columns is an $n \times m$ matrix. Row $i$ and column $j$ correspond to the point $(i, j)$. 
When drawing a grid graph in the plane, $(1, 1)$ indicates the top left vertex. 

\paragraph{Graphs and shortest paths.}
A graph $G = (V, E)$ is a set of vertices $V$ connected by edges $E$. 
The graph $G$ may be vertex-weighted (assigning every $v \in V$ some weight $\omega(v) \in \mathbb{R}$) or edge-weighted (assigning every  $e \in E$ some weight $\omega(e)\in \mathbb{R}$). 
A \emph{path} $\pi$ in $G$ is any sequence of unique vertices, such that consecutive vertices are connected by an edge in $E$.
When $G$ is vertex-weighted, the \emph{cost} of a path $\pi$ is $c(\pi) := \sum_{v \in \pi} \omega(v)$.
When $G$ is edge-weighted, \emph{cost} $c(\pi)$ is the sum over all edges $e$ between consecutive vertices in $\pi$ of $\omega(e)$. 
For any two vertices $u, v \in V$, we denote by $d_G(u, v)$ the minimum cost $c(\pi)$ over all paths $\pi$ that have $u$ and $v$ as its endpoints.

\paragraph{Rectangular graphs.}
A \emph{rectangular} graph $G$ has two integers $(A, B)$.
The vertex set $V$ contains $A \cdot B$ vertices: a vertex $(i, j)$ for all $(i, j) \in [A] \times [B]$. 
We can embed the graph on an integer grid by giving every vertex $(i, j) \in V$ coordinates $(i, j)$. 
The edge set $E$ is defined by the following edges:
\begin{itemize}[noitemsep]
    \item between $(i, j)$ and $(i + 1, j)$  \quad \quad\; (``vertical edges''),
\item between $(i, j)$ and $(i, j + 1)$  \quad \quad\;  (``horizontal edges''), and
\item between $(i, j)$ and $(i + 1, j + 1)$  \quad (``diagonal edges''). 
\end{itemize}
We intuitively refer to the columns and rows of $G$; $G$ is a plane embedded grid graph that contains $2(A + B) - 4$ vertices on the outer face. 
Rectangular (sub)graphs may be vertex-weighted or edge-weighted.
A path $\pi$ in a rectangular graph is $xy$-monotone, whenever in the sequence of vertices $\pi = \{ (i, j) \}$ are non-decreasing in both $i$ and $j$. 
For any two vertices $(i, j), (x, y) \in V$, we denote by $\overrightarrow{d}_G( (i, j), (x, y) )$ the cost of the cheapest $xy$-monotone path from $(i, j)$ to $(x, y)$.
Equivalently, the distance $\overrightarrow{d}_G( (i, j), (x, y) )$ is the cost of the cheapest path from $(i, j)$ to $(x, y)$ in the directed graph where each vertical edge points from $(i, j)$ to $(i+1, j)$ and so forth.  

A \emph{rectangular subgraph} $R$ of $G$ is defined by two intervals $[a, b] \subseteq [A]$, $[c, d] \subseteq [B]$. 
Its vertices are  $V_R := \{ (i, j) \in V \mid (i, j) \in [a, b] \times [c, d] \}$ and it contains an edge between two vertices in $V_R$ whenever they share an edge in $E$. 
Note that we do not reindex the vertices.
E.g., given a rectangular graph for $A = B = 10$, and the rectangular subgraph $R$ given by $[2, 4]$ and $[6, 9]$, the set $V_R$ contains the vertex $(4, 9)$, but not $(1, 1)$.

\paragraph{Discrete distance measures.}
Consider some metric space $X$. We denote for any pair $(p, q) \in X$ by $d(p, q)$ their distance. 
A \emph{curve} $P$ is any finite ordered sequence of points in $X$; we refer to points in $P$ as vertices of $P$. 
Any two curves $P  = (p_1, \ldots, p_n)$ and $Q = (p_1, \ldots, p_m)$ with $n$ and $m$ vertices respectively, induce a vertex-weighted rectangular graph $G$ with $A = n$ and $B = m$, where the weight of each vertex $(i, j)$ is $\omega( (i, j) ) = d(p_i, q_j)$. 

Given two curves $P$ and $Q$, we can define a distance measure to illustrate the similarity between $P$ and $Q$. 
There are two commonly used discrete similarity measures between curves $P$ and $Q$, each of which can be formalized using the vertex-weighted rectangular graph induced by $P$ and $Q$. 
The first measure is the discrete \frechet distance. 
Denote by $\Pi$ all $xy$-monotone paths with $(1, 1)$ and $(n, m)$ as their endpoints. 
For any $\pi \in \Pi$ the \emph{bottleneck cost} is $b(\pi) := \max_{(i, j) \in \pi} d(p_i, q_j)$. 
The discrete \frechet distance is subsequently defined as: 
\[
\FD (P, Q) := \min_{ \pi \in \Pi } b(\pi)  = \min_{ \pi \in \Pi } \max_{(i, j) \in \pi } d(p_i, q_j).
\]

\noindent
The second distance measure is the Dynamic Time Warping (DTW) distance: which is equal to the minimum cost of all $xy$-monotone paths from $(1, 1)$ to $(n, m)$ in $G$:
\[
\DTW (P, Q) := \min_{ \pi \in \Pi } c(\pi)   = \min_{ \pi \in \Pi } \sum_{(i, j) \in \pi} d(p_i, q_j).
\]

\paragraph{Upper bounds and (conditional) lower bounds.}
In the static problem variant we are given as input the metric space $X$ and two curves $P  = (p_1, \ldots, p_n)$ and $Q = (q_1, \ldots, q_m)$.
Given $P$ and $Q$ we may construct the corresponding rectangular graph $G$ using $O(nm)$ time and space. 
To compute our distances, we may perform depth-first search from the vertex $(1, 1)$ until we find the vertex $(n, m)$ (where the weight of an edge $( (i, j), (x, y))$ is the weight associated to $(x, y)$).
Both distances can be computed faster (in $O(nm)$ time) using dynamic programming. 

Whilst the above static algorithms are simple, there are conditional lower bounds showing that they are also optimal (up to subpolynomial factors).
For both the \frechet distance and DTW distance these conditional lower bounds can be based on SETH:

\begin{definition}
    The Strong Exponential Time Hypothesis (SETH) asserts that for any $\delta > 0$, there is an integer $k > 3$ such that $k$-SAT cannot be solved in $O(2^{ (1 - \delta) n})$ time.
\end{definition}

Consider any $\delta' > 0$ and a static algorithm that can compute the \frechet or the \DTW{} distance between curves $P$ and $Q$ in time $O( (nm)^{1 - \delta'})$. 
The conditional lower bounds by~\cite{abboud2015tight, bringmann2014walking, bringmann2015quadratic} show that such an algorithm (even for curves in $\mathbb{R}^1$ under any $L_p$ metric) would imply that SETH is wrong.

\paragraph{Dynamic DTW distance.}
In the dynamic problem variant, we have two curves $P  = (p_1, \ldots, p_n)$ and $Q = (q_1, \ldots, q_m)$ stored in some data structure subject to vertex insertions and deletions. 
We choose to formulate the problem using insertions, deletions and queries. 
A deletion may select either $P$ or $Q$, any vertex in the curve, and remove the vertex from the sequence. 
An insertion has as input a point $p$ and may select either $P$ or $Q$ and some vertex in the curve. It inserts $p$ into the curve after (or before) the selected vertex. 
We focus on the DTW distance, and require that a query reports the DTW distance between $P$ and $Q$. 

The conditional lower bounds for the static problem variant imply lower bounds for the dynamic version. Indeed consider any $\delta' > 0$ and a dynamic algorithm with $U(n, m)$ update time and $Q(n, m)$ query time such that $(n + m) \cdot U(n, m) + Q(n, m) \in O( (nm)^{1 - \delta'})$. This dynamic algorithm violates SETH, as we may answer the static problem variant using $(n + m)$ updates and a single query. 
In this paper, we provide a much stronger lower bound: not only presenting a higher lower bound but also restricting the query and update time individually.
We condition our lower bound on the Negative-$k$-Clique Hypothesis:

\begin{restatable}{definition}{kClique} \label{def:kClique}
In the Negative-$k$-Clique problem, the input is an undirected $k$-partite graph $G$ with $N$ nodes and integer edge weights, and the task is to decide whether there exists a $k$-clique in $G$ with a negative sum of all edge weights.
Let $W$ be the sum of the absolute values of all edge weights in $G$.
The Negative-$k$-Clique Hypothesis postulates that for all $\delta > 0$ the Negative-$k$-Clique problem for $W=N^{O(k)}$ cannot be solved in time $O( N^{k - \delta})$. 
\end{restatable}

% !TEX root = ../paper.tex

\newcommand{\Rc}{\overline{\mathbb{R}}_{\ge 0}}

\section{Upper Bound} \label{sec:upper-bound}
In this section, we present a data structure to dynamically maintain the DTW distance between two curves $P$ and~$Q$.
Let $P$ be a curve with $n$ vertices, and let $Q$ be a curve with $m$ vertices, where $n \geq m$. Moreover, fix $\beta \in [0, \frac12]$. 
We prove Theorem~\ref{thm:upperboubnd} as we design a dynamic algorithm with $O(n m^{\beta} \log m)$ update time and $O(nm^{1 - \beta})$ query time.  
Before we state the algorithm and associated data structure, we state an auxiliary result that encapsulates Klein's multiple-source shortest path algorithm~\cite{Klein05} and 
the SMAWK algorithm~\cite{AggarwalKMSW87}.

Let us denote $\Rc=\mathbb{R}_{\ge 0}\cup\{\infty\}$ and recall that the min-plus product $x\star A$ of a row vector $x\in \Rc^X$ and a matrix $A\in \Rc^{X\times Y}$ is a row vector $y\in \Rc^Y$ such that $y[j] = \min_{i\in X}(x[i]+A[i,j])$ holds for all $j\in Y$.
\begin{lemma}\label{lem:kleinsmawk}
Let $G$ be a weighted plane digraph with $N$ vertices, and  let $S=\{s_1,\ldots,s_{|S|}\}$
and $T=\{t_1,\ldots,t_{|T|}\}$ be such that $(t_1=s_1,s_2,\ldots,s_{|S|}=t_{|T|},\ldots,t_2)$ is the cycle around the outer face.
Let $D\in \Rc^{S\times T}$ be a matrix such that, for every $s\in S$ and $t\in T$, the entry $D[s,t]$ encodes the shortest-path distance from $s$ to $t$.

There exists a data structure of size $\Oh(|S|\cdot |T|)$ that can be constructed in $\Oh((N+|S|\cdot |T|)\log N)$ time and, given a vector $x\in \Rc^S$, computes the min-plus product $x\star D\in \Rc^T$ in $\Oh(|S|+|T|)$ time.
\end{lemma}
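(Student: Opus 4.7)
The plan is to build $D$ explicitly during preprocessing using Klein's multiple-source shortest path (MSSP) structure, and then at query time to exploit a Monge property of $D$ to evaluate the min-plus product via SMAWK.

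For the preprocessing, I would invoke Klein's MSSP algorithm on the plane digraph $G$. This constructs, in $O(N\log N)$ time, an implicit representation of shortest-path trees rooted at every vertex of the outer face, and supports distance queries from any such outer-face vertex to any vertex of $G$ in $O(\log N)$ time each. Iterating over the $|S|\cdot|T|$ pairs $(s,t)\in S\times T$ and writing the answers into a two-dimensional array yields $D$ in time $O(|S|\cdot|T|\log N)$ and space $O(|S|\cdot|T|)$. After this, the MSSP structure can be discarded; the retained data structure is just the array $D$ (together with the index data needed for $O(1)$ lookups). Total preprocessing time is $O((N+|S|\cdot|T|)\log N)$ and the space is $O(|S|\cdot|T|)$, matching the claim.

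For the query, the key observation is that $D$, with rows ordered as $s_1,\ldots,s_{|S|}$ and columns ordered as $t_1,\ldots,t_{|T|}$, is a Monge matrix. To see this, pick $i<i'$ and $j<j'$. By the cyclic order asserted in the statement, the four vertices $s_i,s_{i'},t_{j'},t_j$ appear in this rotational order around the outer face, so the endpoint pairs $(s_i,t_{j'})$ and $(s_{i'},t_j)$ interleave. By planarity, any $s_i\leadsto t_{j'}$ path and any $s_{i'}\leadsto t_j$ path must share a vertex; splicing two shortest such paths at a common vertex produces an $s_i\leadsto t_j$ path and an $s_{i'}\leadsto t_{j'}$ path of total weight no larger, yielding $D[s_i,t_j]+D[s_{i'},t_{j'}]\le D[s_i,t_{j'}]+D[s_{i'},t_j]$. (A path of weight $\infty$ is interpreted via reachability to keep the inequality well-defined.) Adding the value $x[s]$ to every entry in row $s$ preserves the Monge inequality, so the matrix $M[s,t]:=x[s]+D[s,t]$ is also Monge, hence totally monotone. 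I then run the SMAWK algorithm to compute the column minima $y[t]=\min_{s\in S} M[s,t]$ in time $O(|S|+|T|)$, using $O(1)$ entry accesses into $D$ and $x$. This vector $y$ is exactly $x\star D$.

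The main obstacle is establishing the Monge property with the right sign, which hinges on the specific cyclic ordering of $S\cup T$ around the outer face prescribed by the statement: the sources $S$ are encountered in increasing index order and the targets $T$ in decreasing index order, making the endpoint pairs for $(s_i,t_{j'})$ and $(s_{i'},t_j)$ with $i<i'$, $j<j'$ properly interleave and forcing the non-crossing exchange to go in the direction needed by SMAWK. Once this Monge property is in hand, the rest is a routine combination of Klein's MSSP and the SMAWK algorithm, with the cost breakdown above.
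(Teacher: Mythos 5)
Your high-level plan (Klein's MSSP to build a distance table, then SMAWK via a Monge property for the query) is the same as the paper's, but there is a genuine gap in how you treat unreachable pairs. The matrix $D$ can have $\infty$ entries when some $t\in T$ is not reachable from some $s\in S$, and you dispatch this with the one-line aside that ``a path of weight $\infty$ is interpreted via reachability to keep the inequality well-defined'' before concluding ``Monge, hence totally monotone.'' That implication fails once $\infty$ is allowed. Monge with $\infty$ only yields the weak conclusion that $M[i,j]>M[i',j]$ (for $i<i'$, $j<j'$) implies $M[i,j']\ge M[i',j']$, not the strict inequality that standard total monotonicity and the SMAWK reduce step rely on, and the staircase property of smallest-index column argmins can genuinely break. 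A concrete $2\times2$ example is
\[\begin{pmatrix}\infty & \infty\\ 1 & \infty\end{pmatrix},\]
which is Monge but has column argmins $2$ then $1$ (decreasing). So you have not justified that running SMAWK directly on $x[s]+D[s,t]$ returns the correct column minima; the correctness argument for SMAWK does not go through as stated.

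The paper sidesteps this precisely by never running SMAWK on a matrix that contains $\infty$. It chooses $W>w(G)$, adds a backward arc of cost $W$ alongside every arc of $G$, and works with the resulting digraph $G^{+W}$, in which the outer face is strongly connected, so every entry of $D^{+W}$ is finite. It stores $W$ and $D^{+W}$, and at query time further shifts to a finite $D^{+W'}$ so that the vector entries are also finite; only after the SMAWK call does it map values $\ge W'$ back to $\infty$. Because paths using fewer cost-$W$ arcs are always cheaper, the finite matrix faithfully encodes $D$ and remains Monge (now with all entries finite, so ``Monge $\Rightarrow$ totally monotone'' is sound). To repair your argument you would need to either adopt this $W$-shifting device, or supply a careful proof that your variant of SMAWK is correct on Monge matrices over $\mathbb{R}_{\ge 0}\cup\{\infty\}$ with a specified tie-breaking rule; neither is currently present.
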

\begin{proof}
Let $w(G)$ denote the total weight of all arcs in $G$.
For any real parameter $W>w(G)$, consider a plane digraph $G^{+W}$ obtained from $G$ by introducing a cost-$W$ backward arc $(v,u)$ along with every arc $(u,v)$ in $G$. 
Moreover, let $D^{+W}\in \Rc^{S\times T}$ be a matrix such that, for every $s\in S$ and $t\in T$, the entry $D^{+W}[s,t]$ encodes the shortest-path distance from $s$ to $t$ in $G^{+W}$.
Observe that all entries in $D^{+W}$ are finite because there are arcs in both directions between every two adjacent vertices around the outer face of $G$.
Moreover, when comparing the costs of paths in $G^{+W}$, the path containing fewer cost-$W$ edges is always cheaper. 
Consequently, 
\[D[s,t]=\begin{cases}
    D^{+W}[s,t] & \text{if }D^{+W}[s,t]<W,\\
    \infty & \text{otherwise.}
\end{cases}\]
and, for any other parameter $W'>w(G)$, we have
\[D^{+W'}[s,t] = D^{+W}[s,t]+(W'-W)\lfloor{D^{+W}[s,t]/W}\rfloor.\]

Our data structure consists of a value $W>w(G)$ (say, $w(G)+1$) and the matrix $D^{+W}$. 
It can be constructed in $\Oh((N+|S|\cdot |T|)\log N)$ time by a direct application of Klein's multiple-source shortest path algorithm~\cite{Klein05} on $G^{+W}$.
This algorithm, after $\Oh(N\log N)$-time preprocessing, allows $\Oh(\log N)$-time computation of any distance between a vertex on the outer face and any other vertex.

At query time, given $x\in \Rc^{S}$, we determine $W'=1+W+\max\{x[s] : x[s]\ne \infty\}$
and construct $x^{+W'}\in \Rc^{S}$ obtained from $x$ by replacing all infinite entries with $W'$.
Then, we compute the min-plus product $y^{+W'}:= x^{+W'}\star D^{+W'}$ and return a vector $y$ obtained from $y^{+W'}$ by replacing with $\infty$ all entries with values $W'$ or more.

As for correctness, observe that, for every $s\in S$ and $t\in T$, we have $x^{+W'}[s]+D^{+W'}[s,t]=x[s]+D[s,t]<W'$ if $x[s]\ne \infty$ and $D[s,t]\ne \infty$, and $x^{+W'}[s]+D^{+W'}[s,t]\ge W'$ if $x[s]=\infty$ or $D[s,t]=\infty$.
Consequently, $y^{+W'}[t]=y[t]$ if $y[t]\ne \infty$ and $y^{+W'}[t]\ge W'$ otherwise.

To design an efficient implementation, consider the vertices in $S\cup T$ and their cyclic order on the outer face of $G$. 
%Since this order does not contain four vertices originating alternately from $S$ and $T$, it can be expressed as $(s_1,\ldots,s_{|S|},t_{|T|},\ldots,t_1)$, where $S=\{s_1,\ldots,s_{|S|}\}$ and $T=\{t_1,\ldots,t_{|T|}\}$, with the possibility that $s_1=t_1$ and $s_{|S|}=t_{|T|}$.
By definition of $S$ and $T$, the rows and columns of $D^{+W'}$ are ordered so that $s_1\prec \ldots \prec s_{|S|}$ and $t_1\prec \ldots t_{|T|}$, respectively, then $D^{+W'}$ satisfy the Monge property~\cite{Monge1781}; see~\cite[Section 2.3]{FR06}.
Thus, we can use the SMAWK algorithm~\cite{AggarwalKMSW87} to compute $x^{+W'}\star D^{+W'}$ in $\Oh(|S|+|T|)$ time given constant-time random access to $D^{+W'}$.
The relation between $D^{+W'}$ and $D^{+W}$ reduces constant-time random access to $D^{+W'}$ to constant-time random access to $D^{+W}$ (which is a part of our data structure).
\end{proof}

\paragraph*{Data structure and query algorithm.}
Having established our prerequisites, we present an overview of our data structure. 
Recall that the two curves $P$ and $Q$ induce a vertex-weighted $n \times m$ rectangular graph, where the weight of the vertex $(i, j)$ is the distance $d(p_i, q_j)$. 
We denote this graph by $G$. 
The DTW between $P$ and $Q$ is the minimal cost $xy$-monotone path from $(1, 1)$ to $(n, m)$ in $G$.
We fix a parameter $\beta \in [0, \frac{1}{2}]$ and show that we can efficiently compute this minimal cost path through three steps.
Our data structure is illustrated by Figure~\ref{fig:datastructure}.

Our first step (Lemma~\ref{lem:partition}) is that, if $n \geq m$, we dynamically maintain a partition $\mathbf{P}$ of $P$ into $O(n m^{-\beta})$ subcurves, where each subcurve has at least $\frac{1}{4} m^\beta$ and at most $4 m^\beta$ vertices. 
Similarly, we maintain a partition $\mathbf{Q}$ of $Q$ into $O(m^{1-\beta})$ subcurves with a size in $[\frac{1}{4} m^\beta, 4 m^\beta]$.

\begin{figure}[t]
  \centering
  \includegraphics[scale = 1.2]{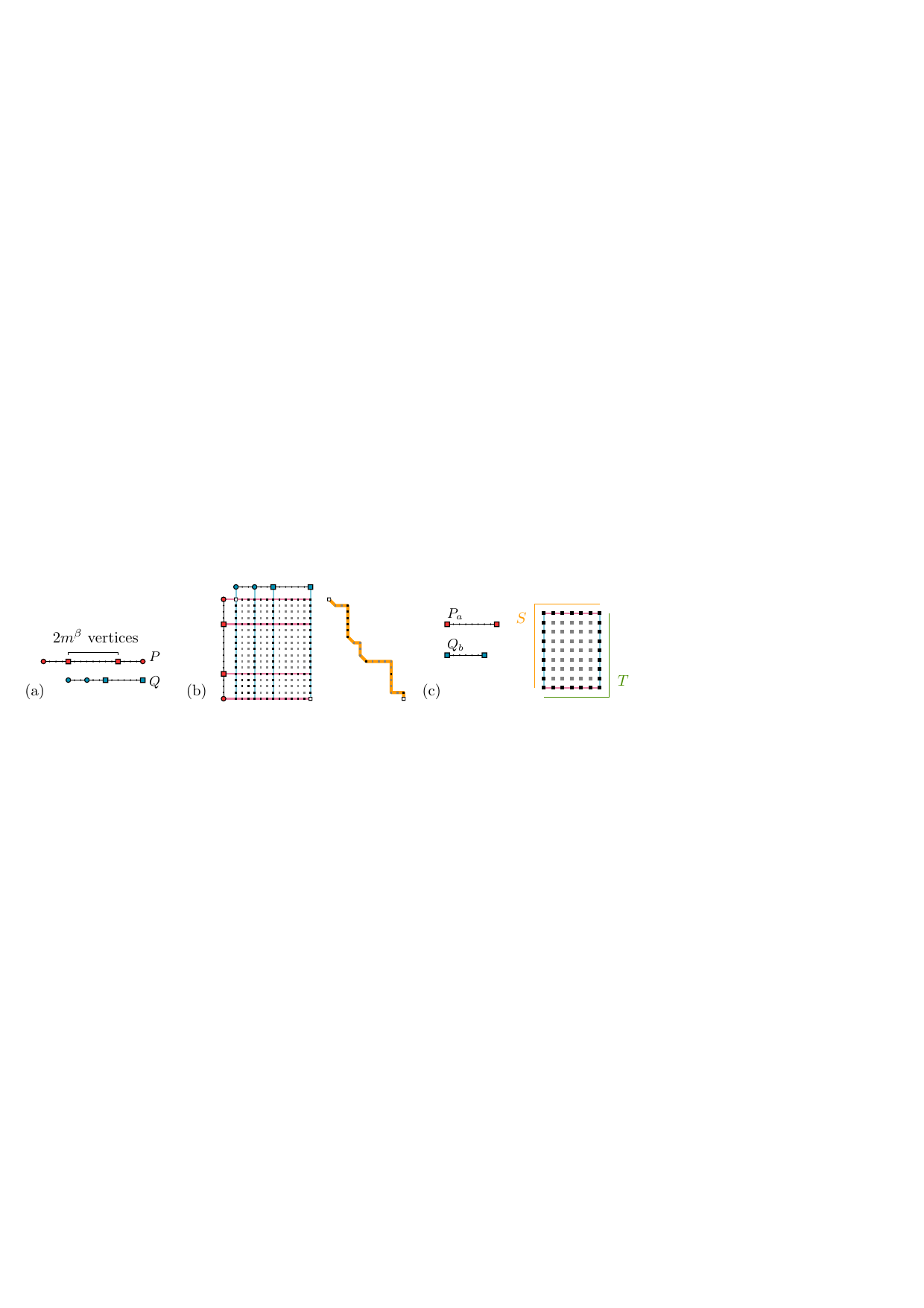}
  \caption{
(a) We partition $P$ and $Q$ into subcurves of $\Theta(m^\beta)$ vertices.
(b) This induces a partition of the grid graph into $O(nm^{1 - 2\beta})$ rectangles. 
(c) For each pair of subcurves $(P_a, Q_b)$, we consider the vertices $S$ on the top and left facet and the vertices $T$ on the bottom and right facet.
    }
 \label{fig:datastructure}
\end{figure}

Secondly, we note that each of the $O(nm^{1 - 2\beta})$ pairs of subcurves $(P_a, Q_b) \in \mathbf{P} \times \mathbf{Q}$ correspond to a rectangular subgraph $R^{ab}$ in $G$ with $O(m^\beta)$ points on its boundary and $O(m^{2\beta} )$ points in its interior. 
Let $S$ be the vertices on the left and top boundary of $R^{ab}$ and $T$ be the vertices on the right and bottom boundary of $R^{ab}$.
Intuitively, we would like to build a matrix such that, for $s \in S$ and $t \in T$,
the entry $A^{ab}[s, t]$ stores the distance~$\overrightarrow{d}_{R^{ab}}(s, t) = \overrightarrow{d}_{G}(s, t)$. 
For technical reasons, we transform the vertex-weighted rectangular graph $R^{ab}$ with an edge-weighed \emph{alignment graph} $D^{ab}$ such that, for each pair of vertices $(s, t)$  in $R^{ab}$, the length of the shortest path from $s$ to $t$ in $D^{ab}$ uniquely corresponds to the length of the shortest $xy$-monotone path from $s$ to $t$ in $R^{ab}$.
The distances from $S$ to $T$ in $D^{ab}$ are stored using the data structure of Lemma~\ref{lem:kleinsmawk}.
There are $O(nm^{1-2\beta})$ rectangular subgraphs $R^{ab}$, each of which have $O(m^{2\beta})$ edges and vertices. 
For each $R^{ab}$, we store the alignment graph $D^{ab}$ in a data structure of size $O( m^{2\beta} \log m)$; thus, our data structure takes $O(nm \log m)$ space. 

Finally, we show that this data structure allows us to compute the DTW distance between~$P$ and~$Q$ in $O(n m^{-\beta})$ time through the following ``wavefront'' algorithm (Figure~\ref{fig:wavefront}):
Consider the vertex-weighted grid graph $G$ between $P$ and $Q$. 
We compute for all $x \in [n]$ and all $y \in [m]$ the distances $\overrightarrow{d}_G ( (1, 1), (x, 1)) $ and $\overrightarrow{d}_G ( (1, 1), (1, y)) $.
We call this set of $O(n)$ values the wavefront~$\mathcal{W}$. 
Throughout the algorithm, we maintain a wavefront~$\mathcal{W}$ of size $O(n)$ and store for each point $(i, j) \in \mathcal{W}$ the value $\overrightarrow{d}_G ( (1, 1), (i, j))$. 
We iteratively update the wavefront as follows: 
as long as the point $(n, m)$ is not in the wavefront, there always exists at least one rectangle $R^{ab}$ whose left and top facet coincide with the wavefront. 
We select one such rectangle $R^{ab}$, remove its left and top facet $S$ from the wavefront and replace them with the bottom and right facet $T$. 
We show that we can perform this operation in $O(m^\beta)$ time by querying the data structure of \cref{lem:kleinsmawk}.
After $O(nm^{1 - 2\beta})$ iterations, we add the point $(n, m)$ to the wavefront and we know the length of the shortest $xy$-monotone path from $(1, 1)$ to $(n, m)$. Thus, given our data structure, we compute the DTW distance between $P$ and $Q$ in $O(n m^{- \beta} \log m)$ time. 
In the remainder of this section we formalise each data structure component.

\paragraph*{Dynamic partitions.} We dynamically maintain a partition of $P$ and a partition of $Q$ into subcurves of size $O(m^\beta)$ under very specific conditions: 

\begin{lemma}
\label{lem:partition}
    Let $P$ and $Q$ be curves where their lengths are $n_0$ and $m_0$ before receiving any updates. 
    Let $\mathbf{P}$ and $\mathbf{Q}$ be partitions of $P$ and $Q$ respectively where each subcurve has at least $m_0^\beta$ and at most $2 m_0^\beta$ vertices. 
    During a sequence of $\frac{m_0}{2}$ updates to $P$ or $Q$, after which $Q$ has $m$ vertices,  we can dynamically maintain $\mathbf{P}$ and $\mathbf{Q}$ such that each subcurve has a size in $[\frac{1}{4} m^\beta, 4 m^\beta]$, using $O(n + m)$ space and $O(m_0^\beta)$ time per operation. Moreover, our updates change at most $O(1)$ subcurves of $P$ and $Q$. 
\end{lemma}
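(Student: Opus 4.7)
The plan is to represent each partition as a doubly linked list of subcurves, where each subcurve is stored as an array (or deque) of its vertices and every vertex carries a pointer to the subcurve containing it. Such a layout supports $O(1)$ access from a vertex to its containing subcurve and $O(|C|)$-time splits and merges of a subcurve $C$ with an adjacent subcurve. Throughout, I would exploit the fact that only $m_0/2$ updates take place, so the current length $m$ of $Q$ always lies in $[m_0/2,\,3m_0/2]$ and, since $\beta\in[0,\tfrac12]$,
\[
 (1/2)^{\beta}\,m_0^{\beta}\;\le\; m^{\beta}\;\le\;(3/2)^{\beta}\,m_0^{\beta}.
\]
A direct calculation gives $[\tfrac{1}{2}m_0^{\beta},\,\tfrac{5}{2}m_0^{\beta}]\subseteq [\tfrac{1}{4}m^{\beta},\,4m^{\beta}]$ for every admissible $m$, so it suffices to maintain the stronger (but $m$-independent) invariant $I := [\tfrac{1}{2}m_0^{\beta},\,\tfrac{5}{2}m_0^{\beta}]$. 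The initial partitions already satisfy this since $[m_0^{\beta},2m_0^{\beta}]\subseteq I$.

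Next, to handle a single insertion, deletion, or substitution of a vertex $v$ in either curve, I would locate in $O(1)$ the subcurve $C$ that contains $v$ via the vertex-to-subcurve pointer, apply the update inside $C$ in time $O(|C|)=O(m_0^{\beta})$, and then restore the invariant as follows. If $|C|$ has grown above $\tfrac{5}{2}m_0^{\beta}$, split $C$ into two halves; because $|C|$ was at most $\tfrac{5}{2}m_0^{\beta}$ before the update, both halves lie safely in $I$. If instead $|C|$ has dropped below $\tfrac{1}{2}m_0^{\beta}$, merge $C$ with one of its neighbors $C'$ and, if the merged length exceeds $\tfrac{5}{2}m_0^{\beta}$, split the result in half; since the merged length is at most $\tfrac{1}{2}m_0^{\beta}+\tfrac{5}{2}m_0^{\beta}=3m_0^{\beta}$, both halves again fall in $I$. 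In every branch, $O(1)$ subcurves are affected, the total work is $O(m_0^{\beta})$, and the structures clearly use $O(n+m)$ space.

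The main obstacle is essentially bookkeeping: verifying, on the one hand, that the interval $I$ remains sandwiched between the two $m$-dependent thresholds $\tfrac{1}{4}m^{\beta}$ and $4m^{\beta}$ throughout the entire $m_0/2$-step update sequence, and, on the other, that a single update never triggers more than one split, or one merge followed by one split. The first check reduces, after dividing by $m_0^{\beta}$, to verifying $\tfrac{1}{2}\ge\tfrac{1}{4}(3/2)^{\beta}$ and $\tfrac{5}{2}\le 4(1/2)^{\beta}$ for $\beta\in[0,\tfrac12]$, which is routine. The second check follows from the slack built into $I$: a single insertion can push $|C|$ from at most $\tfrac52 m_0^\beta$ to at most $\tfrac52 m_0^\beta+1$, and halving keeps both pieces inside $I$ whenever $m_0^{\beta}$ is larger than a small absolute constant; the few small-$m_0^{\beta}$ edge cases can be handled trivially by keeping singleton subcurves during a short preprocessing phase.
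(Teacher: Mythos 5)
Your proposal is correct and follows essentially the same approach as the paper: maintain subcurve sizes in a fixed interval that depends only on $m_0^\beta$ (the paper uses $[\tfrac12 m_0^\beta, 2m_0^\beta]$, you use $[\tfrac12 m_0^\beta, \tfrac52 m_0^\beta]$), and observe that after at most $m_0/2$ updates we have $m\in[\tfrac12 m_0,\tfrac32 m_0]$, so this fixed interval stays inside $[\tfrac14 m^\beta, 4m^\beta]$ for $\beta\in[0,\tfrac12]$. The rebalancing rules (split on overflow; on underflow merge with a neighbor and re-split if needed, touching $O(1)$ subcurves and $O(m_0^\beta)$ vertices) match the paper's in all essentials, differing only in the incidental data-structure choice (doubly linked list vs.\ balanced tree) and the exact constants in the invariant.
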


\begin{proof}
  We show how to update $P$ after inserting/deleting a vertex in $p$. 
   Updates in $Q$ are handled analogously. 
   Denote by $m$ the size of $Q$ during updates.
    Then at all times, we have that $m \in [\frac{1}{2} m_0, \frac{3}{2} m_0]$. 
    We store for each subcurve its two boundary vertices and its size, and we store all subcurves in a balanced binary tree sorted by size. 
    Each vertex stores a pointer to the subcurve that contains it. 
    Finally, we store the numbers $n$ and $ m^\beta_0$. 
 Suppose that we insert a vertex $p$, preceding a vertex $p_i$ that  lies in the subcurve $P_a$, or we delete a vertex $p$ lying in a subcurve $P_a$. 
    We add/remove $p$ to $P_a$, incrementing/decrementing the size of the subcurve. 
    If $p_i$ was the left boundary vertex of $P_a$, we make $p$ the left boundary vertex. 
    We update our balanced binary tree in $O(\log n)$ time. 

    If we add a vertex $p$ to $P$, we add it to the subcurve $P_1 \in \mathbf{P}$ that contains its successor on $P$. 
    If the size of $P_1$ is larger than $2  m_0^\beta $, we spend $O(m_0^\beta)$ time to split $P_1$ into two subcurves of roughly equal size (by iterating over all vertices in $P_1$ and selecting the median). 
    If, after deleting a vertex $p$ from $P$ (and thereby from $P_1 \in \mathbf{P}$) the size of $P_1$ is smaller than $ \frac{1}{2}  m^\beta_0 $, we consider an arbitrary subcurve $P_2$ incident to $P_1$ and join the two subcurves. The resulting curve $P'$ must have length at most $ 2.5  m_0^\beta $. 
 If $P'$ is longer than $2  m^\beta_0 $ we split $P'$ along its median: creating two subcurves whose length lie in  $[\frac{1}{2} m_0^\beta, 2 m_0^\beta]$.
    
    If we update $Q$ we may change $m$. 
    For all curves $P_a \in  \mathbf{P}$ we showed that their size remains in $[\frac 12 m_0^\beta, 2m_0^\beta] \subseteq [\frac{1}{4}m^\beta, 4m^\beta]$.
\end{proof}

\paragraph*{Dynamically storing distance matrices.}

The partitions $\mathbf{P}$ and $\mathbf{Q}$ partition our vertex-weighted grid graph $G$ into rectangles. 
Each pair of subcurves $(P_a, Q_b)$ induces a rectangular grid graph $R^{ab}$ where we want to store the $xy$-monotone distance matrix $A^{ab}$ between all boundary vertices of $R^{ab}$. 
We denote by $\overrightarrow{d}_{G}( (i, j), (x, y) )$ the cost of the cheapest $xy$-monotone path from $(i, j)$ to $(x, y)$ in $G$ ($\overrightarrow{d}_{G}( (i, j), (x, y) ) = \infty$ if no such path exists).

\begin{definition}[Alignment Graph]
Let $P_a = (p_\alpha, \ldots, p_\delta)$ and $Q_b = (q_\gamma, \ldots, q_\nu)$ be two curves with $N$ and $M$ vertices,
respectively.
We define the alignment graph $D^{ab}$ as a rectangular graph with $N \times M$ vertices and the following edges:
\begin{itemize}[noitemsep]
\item $(i, j)\to (i + 1, j)$ of weight $d(p_{i + 1}, q_j)$ (``vertical edges''),
\item $(i, j)\to (i, j + 1)$ of weight $d(p_i, q_{j + 1})$ (``horizontal edges''),
\item $(i, j)\to (i + 1, j + 1)$ of weight $d(p_{i + 1}, q_{j + 1})$ (``diagonal edges''). 
\end{itemize}
\end{definition}

\begin{lemma}
\label{lem:equivalentpaths}
Let $P_a = (p_\alpha, \ldots, p_\delta)$ and $Q_b = (q_\gamma, \ldots, q_\nu)$ be two curves with $N$ and $M$ vertices.
Denote by $R^{ab}$ and $D^{ab}$ their rectangular and alignment graph, respectively. 
For $p_i, p_x \in P_a$ and $q_j, q_y \in Q_b$, the cost $\overrightarrow{d}_{G}( (i, j), (x, y) )$ is equal to:
\[
\overrightarrow{d}_{R^{ab}} ( (i, j), (x, y))  =  d_{D^{ab} }( (i, j), (x, y))  + d(p_i, q_j).
\] 
\end{lemma}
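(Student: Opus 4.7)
The plan is to split the claimed chain of equalities into two separate arguments. The first equality $\overrightarrow{d}_{G}((i,j),(x,y)) = \overrightarrow{d}_{R^{ab}}((i,j),(x,y))$ is a locality statement: because $(i,j)$ and $(x,y)$ both lie in the index rectangle $[\alpha,\delta]\times[\gamma,\nu]$ that defines $R^{ab}$, every $xy$-monotone path from $(i,j)$ to $(x,y)$ in $G$ is constrained by monotonicity to stay inside $[\min(i,x),\max(i,x)]\times[\min(j,y),\max(j,y)] \subseteq [\alpha,\delta]\times[\gamma,\nu]$. Thus the set of $xy$-monotone $(i,j)\leadsto(x,y)$ paths coincides in $G$ and $R^{ab}$, and the corresponding vertex weights (which depend only on the indices, not the ambient graph) agree as well, so the two directed distances are equal.

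For the second equality, the key observation is that $R^{ab}$ and $D^{ab}$ share the same vertex set, and the admissible one-step moves in an $xy$-monotone path in $R^{ab}$ match exactly the three arc types of $D^{ab}$. This yields a natural bijection between $xy$-monotone paths in $R^{ab}$ from $(i,j)$ to $(x,y)$ and directed paths in $D^{ab}$ from $(i,j)$ to $(x,y)$. The plan is then to compare the cost of a single such path $\pi = ((i_1,j_1),\ldots,(i_t,j_t))$ under the two measures. By the vertex-weighted definition, its cost in $R^{ab}$ is
\[
c_{R^{ab}}(\pi) \;=\; \sum_{k=1}^{t} d(p_{i_k},q_{j_k}),
\]
while inspecting the three edge-weight rules in the definition of $D^{ab}$ shows that \emph{every} arc entering a vertex $(a,b)$, regardless of direction, carries weight $d(p_a,q_b)$. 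Hence
\[
c_{D^{ab}}(\pi) \;=\; \sum_{k=2}^{t} d(p_{i_k},q_{j_k}) \;=\; c_{R^{ab}}(\pi) - d(p_{i_1},q_{j_1}).
\]

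Since $(i_1,j_1)=(i,j)$ for every such path, the offset $d(p_i,q_j)$ is a constant independent of $\pi$. Taking the minimum over the (corresponding) path sets on both sides yields the claimed identity $\overrightarrow{d}_{R^{ab}}((i,j),(x,y)) = d_{D^{ab}}((i,j),(x,y)) + d(p_i,q_j)$. Combining with the first equality closes the lemma. There is no real obstacle here; the only subtlety to flag is that the bijection handles the degenerate case $(i,j)=(x,y)$ (trivial one-vertex path in both graphs) and that all distances are finite since the diagonal arcs alone already provide a path whenever both coordinates are non-decreasing.
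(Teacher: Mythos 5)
Your proof is correct and follows essentially the same route as the paper: (1) monotonicity confines every $xy$-monotone $(i,j)\leadsto(x,y)$ path in $G$ to $R^{ab}$, (2) a bijection between $xy$-monotone paths in $R^{ab}$ and directed paths in $D^{ab}$, and (3) the per-path cost offset of exactly $d(p_i,q_j)$, coming from the observation that every arc of $D^{ab}$ into a vertex $(a,b)$ has weight $d(p_a,q_b)$. You spell out the offset calculation more explicitly than the paper does, which is a useful clarification rather than a different idea. One tiny inaccuracy in your closing remark: diagonal arcs alone give a path only when $x-i = y-j$; in general one needs a mix of diagonal and axis-parallel arcs, and when $i>x$ or $j>y$ both sides of the identity are $\infty$. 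Neither point affects the validity of the main argument.
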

\begin{proof}
Any $xy$-monotone path from $(i, j)$ to $(x, y)$ in $G$ must be contained in the rectangular graph~$R^{ab}$.
Thus, there exists a bijection between $xy$-monotone paths from $(i, j)$ to $(x, y)$ in $G$ and in~$D^{ab}$.
The cost of any $xy$-monotone path from $(i, j)$ to $(x, y)$ in $G$, equals the cost of the uniquely corresponding $xy$-monotone path from $(i, j)$ to $(x, y)$ in $D^{ab}$ (plus $d(p_i, q_j)$).
Moreover, the orientation of edges guarantees that all paths in $D^{ab}$ are monotone.
\end{proof}
\noindent

 \noindent
 Having established our alignment graph $D^{ab}$, we are ready to define our update procedure.

\begin{lemma}
\label{lem:datastructure}
    Let $P$ and $Q$ be two dynamic curves and assume that $|P| = n \geq |Q| = m$. 
    We can maintain a partition of $P$ and $Q$ into subcurves with $\Theta(m^\beta)$ vertices each where, for all $(P_a,Q_b)$, we store the graph $D^{ab}$, with sources $S$ on the left and top boundary and targets $T$ on the right and bottom boundary, using the data structure of \cref{lem:kleinsmawk}.
    Our data structure requires $O(n m)$ space and has $O(n m^{\beta} \log m)$ update time. 
\end{lemma}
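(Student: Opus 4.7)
The plan is to combine the dynamic partition of \cref{lem:partition} with one copy of the distance data structure of \cref{lem:kleinsmawk} per pair of subcurves, and to rebuild from scratch only those instances whose underlying pair of subcurves changed during the update.

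First, I would apply \cref{lem:partition} to maintain partitions $\mathbf{P}$ and $\mathbf{Q}$ of $P$ and $Q$ into subcurves of size in $[\tfrac14 m^\beta,4m^\beta]$, with each update touching at most $O(1)$ subcurves and costing $O(m^\beta)$ time. Since \cref{lem:partition} only guarantees the size invariant over a window of $m_0/2$ updates, I would periodically perform a \emph{global rebuild} every $\lfloor m/4\rfloor$ operations, with $m$ being the current length of $Q$; at each global rebuild I recompute the partitions and then, for every pair $(P_a,Q_b)\in\mathbf{P}\times\mathbf{Q}$, construct the alignment graph $D^{ab}$ and the associated data structure of \cref{lem:kleinsmawk}. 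There are $O(nm^{1-2\beta})$ such pairs, each alignment graph has $\Theta(m^{2\beta})$ vertices and edges, and by \cref{lem:kleinsmawk} each instance is built in $\Oh(m^{2\beta}\log m)$ time and occupies $\Oh(m^{2\beta})$ space. This yields a total of $\Oh(nm)$ space and $\Oh(nm\log m)$ construction time, which amortized over the $\Theta(m)$ updates inside the window contributes only $\Oh(n\log m)$ per operation, well within the $\Oh(nm^\beta\log m)$ budget.

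Next, for the updates between global rebuilds, \cref{lem:partition} guarantees that at most $O(1)$ subcurves are created, destroyed, or modified. If the affected subcurve belongs to $\mathbf{P}$, then the data structures to refresh are those for pairs $(P_a,Q_b)$ with $b$ ranging over $\mathbf{Q}$, a total of $O(m^{1-\beta})$ pairs; each rebuild costs $\Oh(m^{2\beta}\log m)$ by \cref{lem:kleinsmawk}, giving $\Oh(m^{1+\beta}\log m)$, which is within $\Oh(nm^\beta\log m)$ since $n\ge m$. If the affected subcurve belongs to $\mathbf{Q}$, then the $O(nm^{-\beta})$ pairs $(P_a,Q_b)$ with $a$ ranging over $\mathbf{P}$ are refreshed at a total cost of $\Oh(nm^{-\beta}\cdot m^{2\beta}\log m)=\Oh(nm^\beta\log m)$, matching the claimed bound. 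Obsolete instances are discarded and new ones are constructed from scratch via \cref{lem:kleinsmawk}; correctness is immediate because each instance depends only on the identity of the current pair of subcurves, which is exactly what we rebuild.

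For the source/target convention, \cref{lem:equivalentpaths} shows that the $xy$-monotone distances in the vertex-weighted rectangle $R^{ab}$ agree (up to the additive shift $d(p_i,q_j)$) with the shortest-path distances in the plane digraph $D^{ab}$; moreover, since all edges of $D^{ab}$ point down-right, the only vertices reachable from the top and left boundary that lie on the outer face are those on the bottom and right boundary, so the ordered sets $S$ and $T$ demanded by \cref{lem:kleinsmawk} are precisely the left/top and bottom/right boundaries, and their concatenation traces the outer face. The only real obstacle I foresee is reconciling the global rebuild schedule with the windowed guarantee of \cref{lem:partition}, in particular when $m$ changes so much that the target subcurve size $m^\beta$ drifts outside $[\tfrac14 m_0^\beta,4m_0^\beta]$; handling this is exactly the purpose of the periodic rebuild, and one has to check that a rebuild is triggered before the drift becomes harmful, which follows because $m$ can change by at most $1$ per update and the rebuild period is $\lfloor m/4\rfloor$.
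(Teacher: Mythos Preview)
Your plan matches the paper's approach almost exactly: maintain the partitions via \cref{lem:partition}, store one instance of \cref{lem:kleinsmawk} per pair $(P_a,Q_b)$, rebuild the $O(nm^{-\beta})$ (resp.\ $O(m^{1-\beta})$) affected instances when a subcurve of $Q$ (resp.\ $P$) changes, and globally rebuild once the window of \cref{lem:partition} expires. Your cost accounting is correct.

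The one substantive omission is that your argument yields only \emph{amortized} $O(nm^\beta\log m)$ update time, because the periodic global rebuild costs $O(nm\log m)$ at the moment it is triggered. The paper interprets the lemma as requiring worst-case time and therefore adds a standard two-copy deamortization: it keeps a ``live'' copy and, once the counter reaches a constant fraction of $m_0$, begins building a fresh copy in the background, spreading the $O(nm\log m)$ construction over $\Theta(m)$ subsequent updates (doing $\Theta(n\log m)$ extra work per update) and queuing the intervening updates so the new copy can catch up by replaying a constant number of queued operations per step. When the new copy is current, the roles swap. You should either add this scheme or state explicitly that your bound is amortized; the rest of the argument is the same as the paper's.
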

\begin{proof}
    First, we describe our data structure. 
    We want to, at all times, maintain a pointer to the following data structure that stores partitions $\mathbf{P}$ and $\mathbf{Q}$ of $P$ and $Q$, respectively, into subcurves that have a size in $[\frac{1}{4}m^\beta, 4m^\beta]$. 
    For all $O(n m^{1 - 2\beta})$ pairs of subcurves $P_a$ and $Q_b$, the rectangular subgraph $R^{ab}$ of $G$ has size $O( m^{2\beta})$.
    Our data structure stores $D^{ab}$, with sources $S$ on the left and top boundary and targets $T$ on the right and bottom boundary, using the data structure of Lemma~\ref{lem:kleinsmawk}.
    This requires $O(m^{2\beta})$ space and $O(m^{2\beta}\log (m^\beta))$ time to construct per subgraph $R^{ab}$.
    Thus, the total space used is $O(nm)$ and we may construct this data structure in $O(nm \log m)$ total time.
    
    We describe our update strategy.
    For each update we increment a counter $c$ by $1$. 
    Whilst $c \leq m_0/2$, we may dynamically maintain $\textbf{P}$ and $\textbf{Q}$ by performing updates such that each subcurve $P_a$ and $Q_b$ has $O(m^\beta)$ vertices (Lemma~\ref{lem:partition}). 
%Thus, in the next paragraphs, we will perform a classic deamortization scheme where we ensure that we apply fewer than $m_0 / 2$ updates to the first copy. 
    During every such update, by Lemma~\ref{lem:partition}, at most $O(1)$ subcurves $P_a \in \textbf{P}$ and $Q_b \in \textbf{Q}$ change. 
    Whenever we change a subcurve $Q_b$  (e.g., the subcurve $Q_b$ lost a vertex, or is obtained by splitting a previous subcurve along its median) we do the following:  
    for all $O(n m^{-\beta})$ subcurves $P_a$, we consider the rectangular subgraph $R^{ab}$ of $G$.
    This graph has $O(m^{2\beta})$ weighted vertices. 
    We construct the corresponding alignment graph in $O( m^{2\beta})$ time and apply the construction algorithm of Lemma~\ref{lem:kleinsmawk} in $O(m^{2\beta} \log m)$ time.
    Since at most $O(1)$ subcurves $P_a$ and $Q_b$ change, each update takes $O( n m^{-\beta} m^{2\beta} \log m) = O(n m^\beta \log m)$ total time. 

    Given any $(P, Q)$, $n_0$ and $m_0$, by our above reasoning, we may statically construct partitions $\mathbf{P}$ and $\mathbf{Q}$ where subcurves have a size in $[m_0^\beta, 2 m_0^\beta]$ (and the associated data structure) in $O( n_0 m_0 \log m_0)$ time. Hence, when the counter reaches $c = m_0/2$, we can rebuild the data structure in $O(n_0 m_0 \log m_0)$ time. This yields \emph{amortized} update time $O(n m^\beta \log m)$. 

    In what follows we apply a classic deamortization scheme, to prove the lemma.
    We maintain at all times the above data structure twice, referring to them as the first and second copy. 
    Each copy stores a counter, $c_0$ and $c_1$ respectively, that counts the number of updates processed by each copy. 
    At all times, we maintain a pointer to one of the two copies, indicating the current `up to date' data structure. 
    We denote by $n_0$ and $m_0$ the initial size of $P$ and $Q$ respectively (before any updates) and by $n$ and $m$ the current size of $Q$. 
    We assume that the first copy has, before receiving any updates, $P$ and $Q$ partitioned into subcurves of size $[m_0^\beta, 2 m_0^\beta]$ and that is has recorded the value $m_0$. 
       
    Our deamortization scheme ensures that we always perform fewer than $m_0/2$ updates to the first copy. 
    Our counter $c_0$ starts at $\frac{8 m_0}{32}$. 
    We note that for readability, we over-estimate our constants to be able to write them as multiples of two. 
    When $c_0 = \frac{9 m_0}{32}$, we record the value $m_1 = m$ and store it in the second copy.
    Note that $m_1 \in [\frac{1}{2}m_0, \frac{6}{4} m_0]$.     
    In addition, we record the curves $P^1 = P$, $Q^1 = Q$, $n_1 = n$, and $c_1 = 0$.
    From this point onwards, we start recording updates to the first copy in a queue. 
    
    Whilst $c_0 \in [\frac{9 m_0}{32}, \frac{10 m_0}{32}]$, we construct as our second copy our data structure on $(P^1, Q^1)$ in $O(n_1 m_1 \log m_1)$ total time, doing $\Theta(n_1 \log m_1) = \Theta(n \log m)$ work per update.
    When $c_0 = \frac{10 m_0}{32}$, the queue of the first copy contains at most $\frac{m_0}{32} \leq \frac{2 m_1}{32}$ elements. 
    From hereon, each time time $c_0$ is incremented, we perform an update in the first copy, add it to the queue, dequeue up to four updates from the queue and apply them to the second copy (incrementing $c_1$ by four). 
    When $c_0 = \frac{11 m_0}{32}$, both the first and second copy store the same data structure. Moreover, $c_1 \leq 4 \cdot \frac{m_0}{32} \leq \frac{8 m_1}{32}$.
    We continue applying all updates to both data structures (incrementing $c_1$ and $c_0$ by $1$) until $c_1 = \frac{8 m_1}{32}$. 
    
    At this point, we record $P^0 = P, Q^0 = Q, m_0 = m, n_0 = n$ and set $c_0 \gets 0$. 
    We note that $m_0 \in  [\frac{1}{2}m_1, \frac{6}{4} m_1]$.
    From hereon, we perform the process with the two copies exchanged.  Since at all times, $c_0 \leq \frac{m_0}{2}$ and $c_1 \leq \frac{m_1}{2}$, we may always apply Lemma~\ref{lem:partition} to perform our $O(1)$ updates in $O(n m^\beta \log m)$ time. 
\end{proof}

\paragraph{Computing the DTW distance.}

Finally, we are ready to show our main theorem:

\tradeoff*

\begin{proof}
We store $P$ and $Q$ in the data structure of Lemma~\ref{lem:datastructure} which has the desired space usage and update time. 
What remains is to show that we can compute the DTW distance between $P$ and $Q$.
Consider the rectangular graph $G$ and the partition $\mathbf{P} = (P_1, \ldots P_N)$ and $\mathbf{Q} = (Q_1, \ldots, Q_M)$. 
Consider the set of all rectangular subgraphs $R^{ab}$ for $P_a$ and $Q_b$ with $a  \in [N]$ and $b \in [M]$.
In $O(n)$ time, we compute for every integer $i \in [n]$ the cost of the vertical path from $(1, 1)$ to $(i, 1)$. 
Similarly, for each $j \in [m]$ we compute the cost of the horizontal path from $(1, j)$. 
We denote these vertices of $G$ as the ``wavefront'' $\mathcal{W}$. 
(Note that, whilst our paths are $xy$-monotone curves that are increasing, the wavefront is a decreasing $xy$-monotone curve.)
Throughout our algorithm, we maintain the invariant that for each vertex $(i, j) \in \mathcal{W}$, we store the value $\overrightarrow{d}_G( (1, 1), (i, j))$.

We iteratively expand $\mathcal{W}$ as follows. 
At each iteration, there exists at least one rectangular graph $R^{ab}$ whose left and top facets coincide with $\mathcal{W}$. 
Denote by $S$ all vertices on the left and top facet of $R^{ab}$ and by $T$ all vertices on the right and bottom facet.
We remove all $(a, b) \in S$ from $\mathcal{W}$, and add all $(x, y) \in T$ to $\mathcal{W}$. 
This ensures that $\mathcal{W}$ remains an $xy$-monotone curve.
To satisfy our invariant, we need to compute a vector where each coordinate corresponds to a point $(x, y) \in T$ and where the value at that coordinate records $\overrightarrow{d}_G( (1, 1), (x, y))$.

Observe that any $xy$-monotone path in $G$ from $(1, 1)$ to $(x, y) \in T$ must go through a vertex $(i, j) \in S$.
Thus, the length of the  shortest $xy$-monotone path in $G$ from $(1, 1)$ to $(x, y) \in T$ is equal to:

\begin{flalign*}
    &\overrightarrow{d}_G( (1, 1), (x, y) ) \\
    &\qquad= \min_{ (i, j) \in S}  \left(\overrightarrow{d}_G( (1, 1), (i, j)) -d(p_i,q_j) +\overrightarrow{d}_{R^{ab}}( (i, j), (x, y) ) \right) \\ 
    &\qquad=  \min_{ (i, j) \in S } \left( \overrightarrow{d}_G( (1, 1), (i, j))+ d_{D^{ab}} ( (i, j),(x, y)) \right).
\end{flalign*}
Given this relation between distances in $G$ and distances in our alignment graphs $D^{ab}$, we can compute our desired output by applying Lemma~\ref{lem:kleinsmawk} for the vector assigning $ \overrightarrow{d}_G( (1, 1), (i, j))$ to each $(i,j)\in S$.

In $O(m^\beta)$ time, we iterate over each $(x, y) \in T$, for which there exists a unique entry in the output vector that stores the value $\overrightarrow{d}_G( (1, 1), (x, y) ) =  \min_{ (i, j) \in S } \left( \overrightarrow{d}_G( (1, 1), (i, j)) + d_{D^{ab}} ( (i, j), (x, y)) \right)$, and add $(x, y)$ to our wavefront.
It follows that, in $O(m^\beta)$ time, we processed $R^{ab}$, removing $S$ from the wavefront, adding $T$ and maintaining our invariant. 

After $O( nm^{1-2\beta} )$ iterations (taking $O( nm^{1-\beta})$ total time), we process the last rectangle $R^{NM}$ and thus add the point $(n, m)$ to our wavefront. 
Via our invariant, we have computed the shortest $xy$-monotone path in $G$ from $(1, 1)$ to $(n, m)$ and therefore the DTW distance between $P$ and $Q$.
\end{proof}

\begin{figure}[t]
  \centering
  \includegraphics[width = \linewidth]{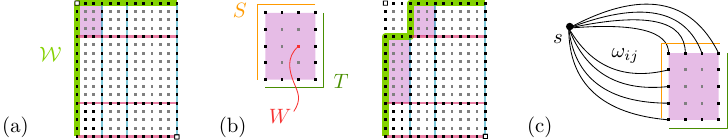}
  \caption{
(a) Consider our rectangular graph $G$, partitioned into rectangular subgraphs. Our wavefront $\mathcal{W}$ starts as a $\Gamma$-shape that includes $(1, 1)$. 
Each iteration, there exists at least one rectangular subgraph $R^{ab}$ (purple) whose left and top facet coincide with $W$.
(b) Given $R^{ab}$, we remove $S$ from the wavefront and add $T$.
(c) We want to compute a vector that records for all $(x, y) \in T$, the distance $ \overrightarrow{d}_G( (1, 1), (x, y) )$. 
To this end, we add a source $s$ that we connect to all $(i, j) \in S$ with an edge with weight $\omega_{ij}$ and apply Lemma~\ref{lem:kleinsmawk}.
    }
 \label{fig:wavefront}
\end{figure}

\section{Reducing from Intermediary to Dynamic DTW}
\label{sec:lower-bound-curve}
In this section, we study the \textsc{Intermediary} problem. 
We note that to better match previous results, we index from $0$ to $(n-1)$.

\intermediary*

For any instance of \textsc{Intermediary}, we show that one may maintain two curves $P$ and $Q$, where $P$ has $n \in O(n_r)$ vertices and $Q$ has $m \in O(n_c)$ vertices, so that every update in \textsc{Intermediary} corresponds to changing the position of four vertices in $Q$. 
Our curves are created in such a way that we may compute from $\DTW(P, Q)$ the output of \textsc{Intermediary} in $O(1)$ time.

\paragraph{The reduction}
For a fixed instance of \textsc{Intermediary}, our construction (Figure~\ref{fig:karlconstruction}) takes place on the real line and maps every row $i$ to a curve $\alpha_i$ and every column $j$ to a curve $\beta_j$. 
The curve $P$ is simply the concatenation over rows $i = 0$ to $(n_r - 1)$ of $\alpha_i$.
The curve $Q$ is the concatenation over columns $j = 0$ to $(n_c - 1)$ of $\beta_j$.
Note that an update $\textsc{Update}(j, x)$ in \textsc{Intermediary} then corresponds to translating all vertices in $\beta_j$ to the vertices of the new curve $\beta_j'$.
Hence, any update in \textsc{Intermediary} is realized by $O(1)$ translations in $Q$.

\begin{definition}[see \cref{fig:karlconstruction} -- left]
 Denote by $\star$ the point $-U^5\in \Real$.
 Denote by $\oEight{}$ a curve that visits the point $\star$ eight times consecutively. 
    Every row $i$ in \textsc{Intermediary} defines the curve $\alpha_i$:
    \begin{align*}
    \oEight{} \rightarrow & \quad \quad \quad \alpha_i^1 \quad  &\rightarrow   & \quad \quad \quad \alpha_i^2 
  &\rightarrow   & \quad \quad \quad \alpha_i^3 \quad &\rightarrow  & \quad \quad \alpha_i^4 &\rightarrow \oEight{} =  \\
   \oEight{} \rightarrow & \quad U^4 + 2r_i U^3  + \tfrac{d_i}{4} &\rightarrow  & \quad 2U^4 + 2 r_i U^3 - \tfrac{d_i}{4}  &\rightarrow  &\quad 3U^4 - 2 r_i U^3 + \tfrac{d_i}{4}  &\rightarrow  & \quad 4U^4 - 2 r_i U^3 - \tfrac{d_i}{4}  &\rightarrow \oEight{}
    \end{align*}
    Every column $j$ in \textsc{Intermediary} defines a curve $\beta_j$:
        \begin{align*}
    \oEight{} \rightarrow & \quad \quad \quad \beta^1_j \quad  &\rightarrow   & \quad \quad \quad \beta^2_j 
  &\rightarrow   & \quad \quad \quad \beta^3_j \quad &\rightarrow  & \quad \quad \beta^4_j &\rightarrow \oEight{} =  \\
    \oEight{} \rightarrow &   U^4 + 2 c_j U^3 - U &\rightarrow  &  2U^4 + 2 c_j U^3 + U 
 &\rightarrow &  3U^4 - 2 c_j U^3 + (-1)^{b_j} U &\rightarrow &  4U^4 - 2 c_j U^3 - (-1)^{b_j} U  &\rightarrow \oEight{}
    \end{align*}
\end{definition}

\begin{definition}
    We denote by $P$ the curve obtained by concatenating, over all rows $i\in [n_r]$, the curves $\alpha_i$. 
    We denote by $Q$ the curve obtained by concatenating, over all columns $j\in [n_c]$, the curves $\beta_j$.
    We denote by $R$ the $O(n_r) \times O(n_c)$ vertex-weighted rectangular graph induced by $(P, Q)$, as defined in \cref{sec:preliminaries}.
\end{definition}

\begin{definition}[see \cref{fig:karlconstruction} -- middle]
\label{def:subgadget}
For any $i, j$, the curves $\alpha_i$ and $\beta_j$ induce a $20 \times 20$ vertex-weighted rectangular graph which we call the \emph{gadget} $R_{ij}$. 
Each gadget $R_{ij}$ is a subgraph of $R$.
We call vertices incident to the boundary facets of $R_{ij}$ the \emph{boundary} vertices.
Each pair $(x, y)$ of vertices in $\alpha_i \times \beta_j$ corresponds to a vertex in $R_{ij}$.
We assign these vertices a color as follows:
\begin{itemize}[noitemsep]
    \item If $x = y = \star$ then the vertex is orange. 
    \item If either $x$ or $y$ equals $\star$ (but not both) then the vertex is white. 
        \item If $x = \alpha_i^{k}$ and $y = \beta_j^{k}$ for some $k \in [4]$, then the vertex is grey.
    \item Otherwise, the vertex is yellow. 
\end{itemize}

\begin{observation}
    \label{obs:minimise}
    Any $xy$-monotone path that realises $\DTW(P, Q)$ uses as few white vertices as possible, then as few yellow vertices as possible and finally as few grey vertices as possible. 
\end{observation}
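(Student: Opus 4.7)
The plan is to leverage the fact that the four vertex colors in $R_{ij}$ correspond to four well-separated tiers of weights, so that $c(\pi)$ is lexicographically governed by the triple $(\#\text{white},\,\#\text{yellow},\,\#\text{grey})$; then standard optimality-by-contradiction finishes the job.

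First, I would compute the weight $|x-y|$ of a vertex in each color class by substituting the definitions of $\alpha_i^k$ and $\beta_j^k$. An orange vertex has weight $0$. For a grey vertex ($x=\alpha_i^k$, $y=\beta_j^k$), the $U^4$-terms cancel and the leading term of $|x-y|$ is $2|r_i-c_j|\,U^3$; hence a grey weight is $O(U^3)$ in general, and in particular $O(U)$ whenever $r_i=c_j$. For a yellow vertex the tier index $k$ differs between $x$ and $y$, so the $U^4$-coefficient of $\alpha_i^a-\beta_j^b$ is a nonzero integer in $\{\pm 1,\pm 2,\pm 3\}$, and the yellow weight lies in $[U^4-O(U^{3.5}),\,3U^4+O(U^{3.5})]$. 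A white vertex has one coordinate equal to $\star=-U^5$ and the other in $[U^4,\,4U^4+O(U^3)]$, so its weight lies in $[U^5,\,U^5+5U^4]$.

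Second, the length of any $xy$-monotone path in $R$ is bounded by $L := |P|+|Q| = 20(n_r+n_c)$, and the assumption $U > n_r n_c\,(\max_i d_i\cdot \max_i r_i\cdot \max_j c_j)^2$ (which the reduction may, without loss of generality, tighten by a constant factor) makes $U$ a polynomial factor larger than $L$ and than any coefficient appearing in the weights. Consequently the tier-separation bounds
\[ L\cdot(\text{max grey weight}) \;<\; \tfrac12(\text{min yellow weight}), \qquad L\cdot(\text{max yellow weight}) \;<\; \tfrac12(\text{min white weight}) \]
both hold for our choice of~$U$.

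Third, I would run the lexicographic argument. Let $\pi^\star$ realise $\DTW(P,Q)$ and let $\pi$ be any other $xy$-monotone path from $(1,1)$ to $(|P|,|Q|)$, with white/yellow/grey counts $(w,y,g)$ and $(w^\star,y^\star,g^\star)$. The weight ranges yield
\[ c(\pi) \;\le\; w\,(U^5+5U^4) + 3y\,U^4 + g\cdot O(U^3), \qquad c(\pi^\star) \;\ge\; w^\star\,U^5. \]
If $w<w^\star$, using $w,y,g\le L$ the tier-separation inequalities absorb the $5w\,U^4$, $3y\,U^4$, and $g\cdot O(U^3)$ terms into strictly less than $U^5$, giving $c(\pi)<c(\pi^\star)$, a contradiction. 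Hence $w\ge w^\star$. Restricting to paths with $w=w^\star$ and running the analogous argument at the yellow tier yields $y\ge y^\star$; one further step down the hierarchy gives $g\ge g^\star$, establishing the claim.

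The only real obstacle is pinning down the pairwise disjointness and $L$-separation of the four weight ranges (the coefficients $5$, $3$, $\tfrac12$ above are chosen to have slack). Once that is verified, the lexicographic comparison is forced and the observation follows immediately.
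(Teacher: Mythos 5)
The paper states this observation without proof, treating it as self-evident from the construction, so there is no ``paper proof'' to compare against directly. Your tier identification is correct, and your first-tier argument (white count is minimized) is sound: a single white vertex costs $\ge U^5$, while the total contribution of all non-white vertices along the path is $O(L\cdot U^4)\ll U^5$, so one fewer white always wins.

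The gap is in the sentence ``Restricting to paths with $w=w^\star$ and running the analogous argument at the yellow tier yields $y\ge y^\star$.'' The analogous argument does \emph{not} go through. Within the white tier, individual weights range from roughly $U^5+U^4$ (for $\star$ paired with $\alpha_i^1$ or $\beta_j^1$) up to roughly $U^5+4U^4$ (for $\star$ paired with $\alpha_i^4$ or $\beta_j^4$). Thus two paths with the \emph{same} white count $w^\star$ can have white contributions differing by as much as $\Theta(w^\star\, U^4)$ --- the same order of magnitude as the entire yellow contribution $\Theta(y\,U^4)$. Your crude tier bounds, which are agnostic to which particular white vertices are visited, cannot separate $c(\pi)$ from $c(\pi^\star)$ once $w(\pi)=w(\pi^\star)$: writing out the bound gives $c(\pi^\star)-c(\pi)\ge y^\star U^4 - 3y U^4 - 5w^\star U^4 - O(LU^{3.5})$, which is negative whenever $w^\star$ is not much smaller than $y^\star$. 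So the lexicographic claim at the yellow tier (and a fortiori the grey tier) is not established by this argument.

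What actually makes the statement usable in the paper is structural, not magnitudinal: after Lemma~\ref{lem:nowhites} (avoid white boundary vertices), the optimal path traverses whites only in complete ``crossings'' of four consecutive columns (or rows), and Lemma~\ref{lem:horizontal} shows each such crossing contributes \emph{exactly} $4U^5+10U^4$ regardless of $i$ or $j$. In other words, the white contribution is a fixed function of the white count for the relevant class of paths, which is precisely what restores the lexicographic separation at the secondary tier. Your proof would need to invoke (or re-derive) that uniformity; without it, the argument is incomplete. I'd also note that your two tier-separation inequalities are slightly loose in their constants, but the acknowledged WLOG tightening of $U$ handles that; the substantive issue is the within-tier spread of white weights.
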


\end{definition}

\begin{figure}[t]
  \centering
  \includegraphics[]{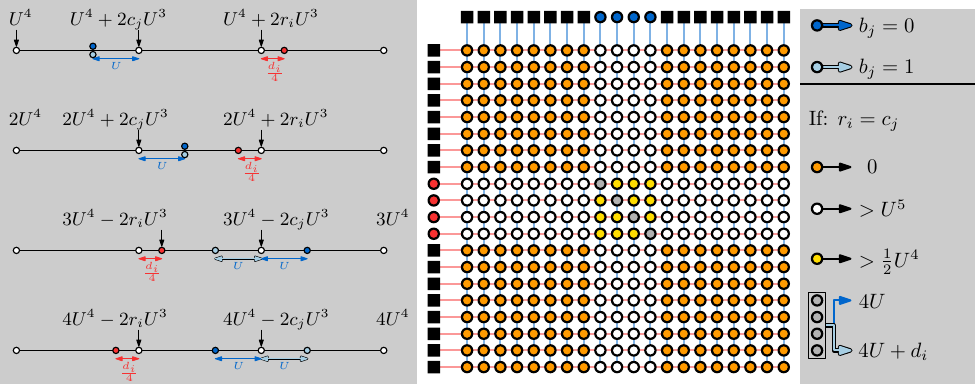}
  \caption{
For every row $i$, we place four red points on the real line $\Real$ (depending on $(r_i, d_i)$).
For every column $j$, we place four blue points (depending on $(c_j, b_j)$). 
Any pair of subcurves $(\alpha_i, \beta_j)$ induces a rectangular gadget $R_{ij}$. 
We show the vertex weights for when $r_i = c_j$. 
Whenever $r_i \neq c_j$, the only change is that grey vertices have weights in $[U^3, U^4]$. 
    }
 \label{fig:karlconstruction}
\end{figure}

\paragraph{Reducing from \textsc{Intermediary}.}
We show the following desirable property of our curves $P$ and $Q$:

\begin{lemma}
\label{lem:nowhites}
    For our curves $P$ and $Q$, there exists an $xy$-monotone path $\pi^*$ realizing $\DTW(P, Q)$ that contains no white boundary vertices. 
\end{lemma}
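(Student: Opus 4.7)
The plan is to start from an arbitrary optimal $xy$-monotone path realizing $\DTW(P,Q)$ and iteratively reroute any portion that visits a white boundary vertex into the gadget interior, preserving optimality at every step. By \cref{obs:minimise}, all optimal paths visit the same minimum number $w^{*}$ of white vertices; if $w^{*}=0$ the lemma is immediate, so I may assume $w^{*}\geq 1$. It therefore suffices to show that whenever an optimal path $\pi$ visits a white boundary vertex, there is another optimal path $\pi'$ visiting strictly fewer white boundary vertices---iterating at most $w^{*}$ times produces the required $\pi^{*}$.

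To construct the reroute, suppose without loss of generality that $\pi$ visits a white vertex $v=(1,c_0)$ on the top boundary of some gadget $R_{ij}$ with $c_0\in\{9,10,11,12\}$; the bottom, left, and right boundary cases are symmetric. The key structural fact is that rows $1$ and $2$ of $R_{ij}$ both arise from the $\oEight$ prefix of $\alpha_i$ and hence share the coordinate $\star$; consequently, for every column $c'$ of the gadget, the vertices $(1,c')$ and $(2,c')$ have identical colors. I would locate the maximal row-$1$ sub-path of $\pi$ that contains $v$, say $(1,a),(1,a+1),\ldots,(1,b)$, and replace it by the shifted sub-path $(1,a),(2,a+1),(2,a+2),\ldots,(2,b)$: the opening step becomes a diagonal and the remaining steps are pure-right. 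By row-symmetry the shifted sub-path visits exactly the same multiset of vertex colors as the original; the extra downward contribution of the diagonal is compensated by omitting one pure-down step further along $\pi$, chosen at an orange (weight-$0$) vertex inside the $\star$-$\star$ region of a subsequent gadget. Hence $c(\pi')=c(\pi)$, so $\pi'$ is also optimal, but every row-$1$ white of $R_{ij}$ has been moved to row $2$, where it is interior.

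The main obstacle I anticipate is verifying that the compensating down-step can always be absorbed at a weight-$0$ orange vertex without introducing a new white. This is handled by the structure of the $\star$-$\star$ bands: each $\alpha_i$ contains two contiguous $\star$ blocks of length $8$, and similarly each $\beta_j$, so between any two consecutive gadgets visited by $\pi$ there is a large orange region where pure-down or pure-right steps can be freely rearranged without altering the color multiset. Since such a region persists along every gadget of the big graph, the reroute always completes, and iterating the local shift once per white boundary vertex of $\pi$ yields the required optimal path $\pi^{*}$ with no white boundary vertices.
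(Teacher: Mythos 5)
Your rerouting plan captures the easy case (the paper's first case, where the white boundary vertex lies on the outer boundary of $R$), but it has a genuine gap for interior gadgets. The shifted sub-path $(1,a),(2,a+1),\ldots,(2,b)$ keeps the first vertex $(1,a)$ on the boundary row, so whenever $a$ is itself a white column (local column in $\{9,\ldots,12\}$), that white boundary vertex survives the shift, and the claim that ``every row-1 white of $R_{ij}$ has been moved to row 2'' is false. This occurs exactly when $\pi$ enters the top row of $R_{ij}$ already at or inside the white band: for instance, if $\pi$ descends vertically through the $2\times 4$ band of white boundary vertices shared by the bottom row of $R_{(i-1)j}$ and the top row of $R_{ij}$, then the maximal row-1 segment is a single vertex and the shift is vacuous. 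The paper closes these cases by passing to the maximal $16\times 4$ (or $4\times 16$) box $W$ of white vertices containing the $2\times 4$ boundary box $B$, and splitting on the sides along which $\pi$ enters and exits $W$: crossing $W$'s long side costs at least $16$ whites and is strictly worse than a detour through at most $4$ whites, so that scenario never occurs on a shortest path; crossing $W$'s short side admits an equal-cost detour along the row or column just outside $W$ that avoids $B$ entirely. Without such a global reroute around $W$, iterating the local row-shift stalls as soon as the only remaining offending vertex is the entry point of its row segment.

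A secondary point: the ``compensating pure-down step'' you anticipate is unnecessary. The shifted sub-path already ends at $(2,b)$, one row below the original's $(1,b)$, and rejoins the remainder of $\pi$ seamlessly---at $(2,b)$ if the original's next step was vertical (in which case the shift saves $\omega(1,b)$, which must be $0$ on a shortest path), or at $(2,b+1)$ if it was diagonal. No later step is omitted. Note also that the ``same multiset of colors'' claim is only exact in the diagonal case; in the vertical case the shifted path drops one vertex, so the argument should compare costs directly via $\omega(1,c)=\omega(2,c)$ rather than via a multiset identity.
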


\begin{proof}
    For a proof by contradiction, suppose that every path realizing $\DTW(P,Q)$ contains a white boundary vertex.
    Let us fix a path $\pi$ that visits the fewest such vertices.
    First, suppose that $\pi$ visits a white boundary vertex $u$ located on the boundary of the entire graph $R$.
    By symmetry, we may assume that $u$ lies in the first row of $R$. Let $(2,y)$ be the first vertex on $\pi$ that lies in the second row of $R$, and let $\pi_1$ be the prefix of $\pi$ from the origin $(1,1)$ to $(2,y)$.
    Consider the following alternative path 
    \[\pi'_1 : (1,1) \to (2,2) \to (2,3) \to \cdots \to (2,y).\]
    Observe that, for each column $y'\in [y]$, the cost of $(1,y')$ is the same as the cost of $(2,y')$, and $\pi_1$ must visit at least one of these two vertices. 
    Consequently, $\pi_1$ is at least as expensive as $\pi'_1$.
    At the same time, $\pi'_1$ avoids white boundary vertices, whereas $\pi_1$ contains at least one such vertex ($u$).
    Thus, by replacing $\pi_1$ by $\pi'_1$, we transform $\pi$ into a path $\pi'$ 
%that is not more expensive than $\pi$ and 
that contains fewer white boundary vertices, contradicting the choice of $\pi$.

    Henceforth, we may assume that $\pi$ contains a white boundary vertex that is not located on the boundary of the entire graph $R$.
    Let us take the first such vertex $u$ (along $\pi$).
    Let $B$ be the connected component of $u$ in the subgraph of $R$ spanned by white boundary vertices;
    note that $B$ is a box spanning two rows and four columns (or, symmetrically, spanning four rows and two columns).
    Moreover, let $W$ be the maximum box of white vertices containing $B$ (it spans 16 rows and 4 columns, or 4 rows and 16 columns).
    
    Let $t$ be the last vertex of $\pi$ that lies above or to the left of $W$,
    and let $v$ be the first vertex of $\pi$ that lies below or to the right of $W$.

    If $t$ is above $W$ and $v$ is to the right of $W$, then the $t\leadsto v$ subpath of $\pi$ can be rerouted along the row just above $W$ and the column just to the right of $W$ (we use a diagonal edge whenever we switch from a row to a column or vice versa).
    Such a detour does not contain any white vertices, so it is cheaper than the original path, contradicting the minimality of $\pi$ (Observation~\ref{obs:minimise})

    If $t$ is to the left of $W$ and $v$ is below $W$, then the $t\leadsto v$ subpath of $\pi$ can be rerouted along the column just to the left of $W$ and row just below $W$.
    Again, such a detour does not contain any white vertices, so it is cheaper than the original path, contradicting the minimality of $\pi$.

    If $t$ is above $W$ and $v$ is below $W$, then the $t\leadsto v$ path contains at least $16$ white vertices (one per row of $W$).
    In this case, let $s$ be the last vertex of $\pi$ that lies to the left of $W$.
    Since $u$ was the first white boundary vertex on $\pi$, then $s$ must be located within the same gadget $R_{ij}$ as the upper half of $W$.
    Consequently, the $s\leadsto v$ path that goes along the column just to the left of $W$ and then along the row just below $W$ contains at most $4$ internal white vertices (one for each of the middle four rows of $R_{ij}$). Such a detour is thus cheaper than the original path,  contradicting the minimality of $\pi$.

    Finally, suppose that $t$ is to the left of $W$ whereas $v$ is to the right of $W$.
    In this case, let $s$ be the last vertex of $\pi$ that lies above $B$ or to the left of $W$.
    Since $u$ was the first white boundary vertex on $\pi$, then $s$ must be located with the same gadget $R_{ij}$ as the upper half of $B$, or within the right half of the adjacent gadget $R_{i(j-1)}$.
    Consequently, the $s\leadsto v$ path that goes along the row of $s$
    and along the column just to the right of $W$ contains exactly 4 interval vertices of positive cost: one white vertex per column of $W$.
    However, the original $s\leadsto v$ subpath of $\pi$ must have also contained such 4 white vertices.
    The costs of white vertices within $W$ are uniform along columns, so the detour is not more expensive.
    At the same time, the detour avoids $B$ (and thus any white boundary vertices) whereas the original path contained $u$. This contradicts the definition of $\pi$.
\end{proof}

\begin{definition}
\label{def:blocks}
Consider our curves $P$ and $Q$ and their induced rectangular graph. 
We define the \emph{blocks} (denoted by $\mathbf{B}$) of this graph as all maximal connected components of orange vertices. 
Two blocks $B_1, B_2 \in \mathbf{B}$ are:
\begin{itemize}[noitemsep]
    \item Horizontally adjacent if there exists a horizontal line that intersects $B_1$ and $B_2$ consecutively.
    \item Vertically adjacent if there exists a vertical line that intersects $B_1$ and $B_2$ consecutively.
    \item Diagonally adjacent if they are not horizontally/vertically adjacent and there exists a line with slope $-1$ that intersects $B_1$ and $B_2$ consecutively.
\end{itemize}
\end{definition}

\begin{lemma}
    \label{lem:horizontal}
    Let $B_1, B_2 \in \mathbf{B}$ be two blocks that are horizontally (or vertically) adjacent. 
    Then for any $u \in B_1$ and $v \in B_2$ the shortest $xy$-monotone path from $u$ to $v$ has weight $4U^5 + 10 U^4$. 
\end{lemma}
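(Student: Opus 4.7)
My plan is to first unpack what ``horizontal adjacency'' forces structurally, and then use monotonicity to trap any $xy$-monotone path inside a thin band of star rows where the cost of traversing the gap between the blocks is pinned down. By the definitions of $\alpha_i$ and $\beta_j$, maximal runs of star coordinates in $P$ have length $16$ in the interior (from concatenating the $\oEight{}$ suffix of $\alpha_i$ with the $\oEight{}$ prefix of $\alpha_{i+1}$) and length $8$ at the two ends; the same holds for $Q$. Orange vertices are exactly the Cartesian product of a star-row-run with a star-column-run, so every block in $\mathbf{B}$ is a rectangle $I\times J$ where $I$ and $J$ are such maximal runs. Horizontal adjacency of $B_1,B_2$ then means they share a common row-band $I$ and their column-bands $J_1,J_2$ are separated by the unique non-star run of length four, which corresponds to the entries $\beta_j^1,\beta_j^2,\beta_j^3,\beta_j^4$ of some $\beta_j$.

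\textbf{Confining the path.} For $u\in B_1$ and $v\in B_2$ with $u\le v$ coordinate-wise (otherwise no $xy$-monotone path exists), I would use row-monotonicity to conclude that every intermediate row of the path also lies in $I$: since $v$ itself lies in $I$, the path cannot step below $I$ and return. Consequently, the $P$-coordinate along the path is $\star=-U^5$ throughout. In particular, for every gap column (the column of $\beta_j^k$, $k\in\{1,2,3,4\}$) and every row in $I$, the vertex weight equals $|\star-\beta_j^k| = U^5+\beta_j^k$, independent of the chosen row.

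\textbf{Lower bound and matching path.} Column-monotonicity forces the path to visit at least one vertex in each of the four gap columns, contributing at least $\sum_{k=1}^{4}(U^5+\beta_j^k)$; every other vertex touched lies in $B_1\cup B_2$ and has weight $0$. Summing the four expressions for $\beta_j^k$ from the construction, the $U^3$ terms cancel as $2c_j+2c_j-2c_j-2c_j=0$, the $U$ terms cancel as $-U+U+(-1)^{b_j}U-(-1)^{b_j}U=0$, and the $U^4$ coefficients add to $1+2+3+4=10$, giving $\sum_{k=1}^{4}\beta_j^k = 10U^4$ and hence a lower bound of $4U^5+10U^4$. Equality is realized by an explicit path that walks through $B_1$ at zero cost to align with the row of $v$, crosses the gap with four consecutive diagonal/horizontal steps (one per gap column), and then walks through $B_2$ at zero cost to $v$.

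\textbf{Main obstacle.} The only delicate point is ruling out detours that leave the star-row-band $I$ to exploit the smaller grey/yellow weights available in the non-star middle rows of an adjacent row-block: monotonicity is the complete argument, but it has to be verified uniformly for the boundary bands of length $8$ as well as for the interior bands of length $16$, and then applied symmetrically (by transposing rows and columns) to obtain the vertically-adjacent case.
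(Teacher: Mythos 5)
Your proof is correct and follows essentially the same route as the paper: confine the path to the shared star band by monotonicity, observe that the four gap columns each force one mandatory white vertex of weight $U^5+\beta_j^k$ (resp.\ $U^5+\alpha_i^k$), and verify the telescoping cancellation to obtain $4U^5+10U^4$. You actually spell out the confinement step more carefully than the paper (which simply asserts that a shortest path ``consists of orange vertices plus exactly four white vertices''), so no gap there. One small remark: your ``main obstacle'' paragraph overstates the difficulty --- the monotonicity argument is uniform and does not depend on whether the star band has length $8$ or $16$, since in all cases the path's rows are pinned to the closed interval between $u$'s and $v$'s rows, which lies inside $I$; the vertical case is indeed the mirror computation with $\alpha_i^k$, where the $r_iU^3$ and $\tfrac{d_i}{4}$ terms cancel just as $c_jU^3$ and the $\pm U$ terms do in the horizontal case.
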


\begin{proof}
    If $(B_1, B_2)$ are horizontally adjacent then any shortest $xy$-monotone path from $u$ to $v$ consists of orange vertices plus exactly four white vertices corresponding to pairs:
    $(\star, \beta_j^1)$, $(\star, \beta_j^2)$, $(\star, \beta_j^3)$, $(\star, \beta_j^4)$ for some integer $j$.
    
    Since orange vertices have weight zero, it follows that the weight of this path is:
    \begin{align*}
         d(\star, \beta_j^1) + d (\star, \beta_j^2) + d(\star, \beta_j^3) + d(\star, \beta_j^4) = \\
         \left(U^5 + U^4 + 2 c_j U^3 - U \right) +  \left(U^5 + 2U^4 + 2 c_j U^3 + U \right) + \\
         \left(U^5 + 3U^4 - 2 c_j U^3 + (-1)^{b_j} U \right) + \left(U^5 + 4U^4 - 2 c_j U^3 + (-1)^{b_j} U \right) = 4 U^5 + 10 U^4.
    \end{align*}

If $(B_1, B_2)$ are vertically adjacent then any shortest $xy$-monotone path from $u$ to $v$ consists of orange vertices plus exactly four white vertices corresponding to pairs:
    $(\alpha_i^1, \star)$, $(\alpha_i^2, \star)$, $(\alpha_i^3, \star)$, $(\alpha_i^4, \star)$ for some integer $i$.
      Since orange vertices have weight zero, it follows that the weight of this path is:
    \begin{align*}
         d(\alpha_i^1, \star) +  d(\alpha_i^2, \star) + d(\alpha_i^3, \star) + d(\alpha_i^4, \star) = \\
         \left(U^5 + U^4 + 2 r_i U^3 - \frac{d_i}{4} \right) +  \left(U^5 + 2U^4 + 2 r_i U^3 - \frac{d_i}{4} \right) +  \\
         \left(U^5 + 3U^4 - 2 r_i U^3 + \frac{d_i}{4} \right) + \left(U^5 + 4U^4 - 2 r_i U^3 - \frac{d_i}{4} \right)     = 4 U^5 + 10 U^4.
    \end{align*}
 This concludes the proof. 
\end{proof}

\begin{lemma}
    \label{lem:diagonal}
    Let $B_1, B_2 \in \mathbf{B}$ be two blocks that are diagonally adjacent. 
    Denote by $R_{ij}$ the unique gadget that intersects both blocks. 
    Then for any $u \in B_1$ and $v \in B_2$ the shortest $xy$-monotone path from $u$ to $v$ has weight greater than $U^3$ if $r_i \neq c_j$.
    It has weight $4U + d_i \cdot b_j$ otherwise. 
\end{lemma}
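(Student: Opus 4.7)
The plan is to reduce the statement to an analysis of the $4\times 4$ ``middle'' of the gadget $R_{ij}$ (rows and columns corresponding to $\alpha_i^1,\ldots,\alpha_i^4$ and $\beta_j^1,\ldots,\beta_j^4$), which is the only obstruction separating the two diagonally adjacent blocks. First I would observe that, by diagonal adjacency, $B_1$ and $B_2$ meet $R_{ij}$ in its top-left and bottom-right orange corners (size $8\times 8$ each), and that any $u\in B_1$ can reach the corner $(8,8)$ of $R_{ij}$ along an $xy$-monotone orange path of cost $0$, and symmetrically any $v\in B_2$ is reachable from $(13,13)$ at cost $0$. Hence the lemma reduces to computing the cost of the cheapest $xy$-monotone path from $(8,8)$ to $(13,13)$ inside $R_{ij}$.

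Next I would show that the cheapest such path consists of five diagonal edges visiting only the four grey diagonal vertices $(9,9),(10,10),(11,11),(12,12)$. The argument is a scale-separation comparison: every vertex in rows $9$--$12$ or columns $9$--$12$ that is \emph{not} one of these four diagonal vertices is either white (cost $\ge U^5-O(U^4)$) or yellow (cost $\ge U^4-O(U^3)$), because the $U^4$-coefficients differ whenever $\alpha_i^k$ is paired with $\beta_j^{k'}$ for $k\ne k'$ and because any vertex containing a $\star$ contributes $\ge U^5-O(U^4)$. In contrast, a single grey vertex has cost at most $2|r_i-c_j|U^3+O(U)$. Since $U>n_r n_c(\max d_i\cdot \max r_i\cdot \max c_j)^2$, even four grey vertices together cost less than a single yellow or white step; using Observation~\ref{obs:minimise}, the path minimizes whites first, then yellows, then greys, forcing the pure-diagonal traversal. (Routes that leave $R_{ij}$ and loop through a neighbouring gadget are ruled out by Lemma~\ref{lem:horizontal}, which charges each horizontal/vertical block transition $4U^5+10U^4$.)

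Having fixed the path, the lemma reduces to evaluating $\sum_{k=1}^4 d(\alpha_i^k,\beta_j^k)$ using the definitions from Section~\ref{sec:lower-bound-curve}. For $r_i=c_j$ the $U^3$-terms cancel within each summand, leaving four expressions of the form $|U\pm d_i/4|$ with signs determined by $b_j$; a direct case split on $b_j\in\{0,1\}$ yields $4U$ when $b_j=0$ and $4U+d_i$ when $b_j=1$, i.e.\ $4U+d_i\cdot b_j$. For $r_i\ne c_j$, each summand contains a term $\pm 2(r_i-c_j)U^3$ of absolute value at least $2U^3$, which, by the scale separation, dominates the remaining $O(U)$ contributions; consequently each grey vertex alone contributes at least $U^3$, and the total is at least $4U^3>U^3$.

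The main obstacle I anticipate is the case analysis that confirms the diagonal is uniquely optimal: one must rule out hybrid paths that take a diagonal step, then a horizontal or vertical step through a yellow vertex, then another diagonal, as well as ``detours'' that exit the middle section sideways into a surrounding white strip. Both are handled by the same scale-separation argument: a single yellow step costs $\Omega(U^4)$ and a single white step costs $\Omega(U^5)$, each strictly exceeding the full diagonal cost $\le 4U^3+O(U)$ valid in either case of the lemma. Once this is settled, the two weight computations are routine substitutions into the coordinate formulas for $\alpha_i^k$ and $\beta_j^k$.
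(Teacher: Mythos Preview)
Your proposal is correct and follows essentially the same approach as the paper: reduce to the four grey diagonal vertices of $R_{ij}$, then compute $\sum_{k=1}^4 d(\alpha_i^k,\beta_j^k)$ explicitly and split on $r_i=c_j$ versus $r_i\ne c_j$ (and on $b_j$). The paper's own proof is considerably terser---it simply asserts that any shortest $xy$-monotone path from $u$ to $v$ consists of orange vertices plus the four grey diagonal vertices (leaning on Observation~\ref{obs:minimise} and the figure) and then carries out the identical algebraic evaluation you describe; your scale-separation argument ruling out yellow/white detours is a welcome elaboration of a step the paper leaves implicit.
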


\begin{proof}
    If $(B_1, B_2)$ are vertically adjacent then any shortest $xy$-monotone path from $u$ to $v$ consists of orange vertices plus exactly four grey vertices contained in $R_{ij}$.
 Since orange vertices have weight zero, it follows that the weight of this path is:
    \begin{align*}
         d(\alpha_i^1, \beta_j^1) +  d(\alpha_i^2, \beta_j^2) + d(\alpha_i^3, \beta_j^3) + d(\alpha_i^4, \beta^4_j) = \\
        \left| 2 (r_i - c_j) U^3 + \frac{d_i }{4} + U \right| 
    + \left| 2 (r_i - c_j) U^3 - \frac{d_i}{4} - U \right|
+  \\
\left| - 2 (r_i - c_j) U^3 + \frac{d_i}{4} - (-1)^{b_j} U \right| 
         +  \left| - 2 (r_i - c_j) U^3 - \frac{d_i}{4} + (-1)^{b_j} U \right| 
         \end{align*}
If $r_j \neq c_j$ this is at least $U^3$.
If $r_i = c_j$ and $b_j = 0$, this is then equal to:
$\frac{d_i}{4} + U + \frac{d_i}{4} + U - \frac{d_i}{4} + U - \frac{d_i}{4}  = 4 U$.
If $r_i = c_j$ and $b_j = 1$ this is equal to: $\frac{d_i}{4} + U + \frac{d_i}{4} + U + \frac{d_i}{4} + U + \frac{d_i}{4}  = 4 U + d_i$.    
\end{proof}

\begin{lemma}
    \label{lem:equivalentcost}
    For any instance of \textsc{Intermediary} with $n_r$ rows and $n_c$ columns,
    \begin{itemize}[noitemsep, nolistsep]
        \item If $\DTW(P, Q) \geq  |n_r + n_c| \cdot (4 U^5 + 10 U^4) + U^3$, then \textsc{Intermediary} outputs $\infty$.
        \item Else the output of \textsc{Intermediary} is equal to:
        \[
        DTW(P, Q) - |n_r - n_c| \cdot (4 U^5) +  4|n_r - n_c| U - \min\{ n_r, n_c \} \cdot 4 U.
        \]
    \end{itemize} 
\end{lemma}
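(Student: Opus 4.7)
The plan is to use the structural results established so far (Lemmas~\ref{lem:nowhites}, \ref{lem:horizontal}, \ref{lem:diagonal}) to decompose any optimal $xy$-monotone path in $R$ into block traversals (which are free, since orange vertices have weight $0$) plus inter-block transitions whose costs can be read off directly from the parameters of \textsc{Intermediary}. I will then set up a direct correspondence between such decompositions and paths in the \textsc{Intermediary} grid.

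First, by \Cref{lem:nowhites}, I may restrict attention to a cost-minimizing $xy$-monotone path $\pi^*$ from $(1,1)$ to $(|P|,|Q|)$ that uses no white boundary vertex. Consequently, whenever $\pi^*$ leaves one orange block it must enter an adjacent one; by \Cref{def:blocks} this adjacency is horizontal, vertical, or diagonal. Let $v$, $h$, and $d$ denote the number of vertical, horizontal, and diagonal block-transitions used by $\pi^*$. Since the block grid has dimensions $(n_r+1)\times(n_c+1)$ and $\pi^*$ starts in $B_{0,0}$ and ends in $B_{n_r,n_c}$, these counts satisfy $v+d=n_r$ and $h+d=n_c$. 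Within the diagonal transitions I further distinguish the $d_F$ \emph{feasible} ones (those passing through a gadget $R_{ij}$ with $r_i=c_j$) from the $d_N = d-d_F$ \emph{infeasible} ones.

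Second, I apply \Cref{lem:horizontal} and \Cref{lem:diagonal} to express the cost of $\pi^*$ as
\[
\DTW(P,Q) \;=\; (v+h)(4U^5+10U^4) \;+\; 4U\cdot d_F \;+\; \sum_{\text{feasible diag}} d_i b_j \;+\; (\text{cost of infeasible diags}),
\]
where every infeasible-diagonal term contributes strictly more than $U^3$. A swap ``one diagonal $\leftrightarrow$ one vertical $+$ one horizontal'' changes the cost by $2(4U^5+10U^4)$ minus the replaced diagonal cost; because $U^3 \ll U^5$ (by the size bound on $U$ in the \textsc{Intermediary} instance), it is always cheaper to use a diagonal, even an infeasible one, than to add an extra horizontal/vertical pair. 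Hence the optimal $\pi^*$ satisfies $d=\min(n_r,n_c)$ and $v+h=|n_r-n_c|$.

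Third, I set up the correspondence with \textsc{Intermediary}. Block transitions of $\pi^*$ map bijectively to the edges of an $(0,0)\leadsto (n_r-1,n_c-1)$ path in the \textsc{Intermediary} grid: horizontal/vertical block transitions become horizontal/vertical edges (of weight $U$), and a diagonal transition through gadget $R_{ij}$ becomes a diagonal edge whose weight is $d_i b_j$ exactly when $r_i=c_j$ (feasible case) and $\sqrt U$ otherwise (infeasible case). Conversely, any \textsc{Intermediary} path lifts to a valid sequence of DTW block transitions with matching type pattern. This bijection preserves \emph{which} gadgets are touched diagonally, so it preserves the sum $\sum d_i b_j$ over feasible diagonals.

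Fourth, I conclude with a case split. If there exists an \textsc{Intermediary} path using only feasible diagonals, then the bijection shows that the optimal $\pi^*$ also uses only feasible diagonals, and
\[
\DTW(P,Q) \;=\; |n_r-n_c|(4U^5+10U^4) \;+\; 4U\cdot \min(n_r,n_c) \;+\; \textstyle\sum d_i b_j,
\]
while the \textsc{Intermediary} output is $|n_r-n_c|\cdot U + \sum d_i b_j$; eliminating the sum yields the claimed formula. Otherwise, every lift of an \textsc{Intermediary} path carries at least one infeasible diagonal, forcing an extra cost of at least $U^3$ on top of the ``feasible optimum,'' so $\DTW(P,Q)$ exceeds the stated threshold and \textsc{Intermediary} returns $\infty$.

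\paragraph{Main obstacle.}
The subtle step is the third one: a clean bijection between DTW block-transition sequences and \textsc{Intermediary} paths requires carefully handling the boundary offset between the $(n_r+1)\times(n_c+1)$ block grid and the $n_r\times n_c$ \textsc{Intermediary} grid, and verifying that the costs on the two sides differ only by the fixed affine correction $|n_r-n_c|\cdot 4U^5 - 4|n_r-n_c|U + 4U\min(n_r,n_c)$ claimed in the statement. The delicate part is ruling out that a DTW optimum might ``cheat'' by using an infeasible diagonal together with fewer horizontal/vertical transitions than the feasible bijection predicts; this is exactly where the separation $U^3\ll U^5$ guaranteed by the size constraint on $U$ is used, together with \Cref{obs:minimise} to ensure that no alternative routing reduces the cost below the threshold.
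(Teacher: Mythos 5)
Your proposal follows essentially the same route as the paper's proof: restrict attention to a white-boundary-free optimal path $\pi^*$ via \Cref{lem:nowhites}, decompose $\pi^*$ into block transitions, cost them via \Cref{lem:horizontal} and \Cref{lem:diagonal}, argue that the optimal path uses exactly $|n_r-n_c|$ axis-parallel transitions and $\min\{n_r,n_c\}$ diagonal transitions, and then match this against paths in the \textsc{Intermediary} grid with a case split on whether a feasible-diagonal-only \textsc{Intermediary} path exists. Your ``swap'' argument for showing $d=\min\{n_r,n_c\}$ is a minor reformulation of the paper's appeal to \Cref{obs:minimise}, and your explicit partition of diagonals into feasible/infeasible is implicit in the paper's case analysis. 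At the level of ideas, the two arguments agree.

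The shortfall is exactly the part you yourself flag as the ``main obstacle'' and then leave undone. Your Step~4 asserts that ``eliminating the sum yields the claimed formula,'' but you never carry out the elimination, and this is where the remaining content of the lemma lives. Concretely: (i) the block grid has $(n_r{+}1)\times(n_c{+}1)$ entries while the \textsc{Intermediary} grid has $n_r\times n_c$, so the block decomposition uses $\min\{n_r,n_c\}$ diagonal transitions while an \textsc{Intermediary} path uses at most $\min\{n_r,n_c\}-1$ diagonal edges; the claimed ``bijection'' therefore has one orphaned diagonal whose contribution you must account for; (ii) the per-transition costs from \Cref{lem:horizontal} include a $10U^4$ term and each feasible diagonal contributes an additive $4U$, and the affine correction term in the lemma statement must be checked against these constants, which your write-up does not do. A reviewer cannot accept ``eliminating the sum yields the claimed formula'' on faith when both the off-by-one and the arithmetic are nontrivial. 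You should carry the derivation through explicitly so the reader can see where each term in the correction $-|n_r-n_c|\cdot 4U^5 + 4|n_r-n_c|U - 4\min\{n_r,n_c\}U$ originates and how the extra diagonal is absorbed; until then the proposal is a faithful sketch of the paper's argument rather than a complete proof.
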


\begin{proof}
    By Lemma~\ref{lem:nowhites} there exists a path $\pi^*$ in the rectangular graph induced by $P$ and $Q$ that realises $\DTW(P, Q)$ that intersects no white boundary vertices.
    It follows immediately that $\pi^*$ intersects a sequence of blocks $\mathbf{B}^* \subset \mathbf{B}$ where for every two consecutive blocks $B_1, B_2 \in \mathbf{B}^*$, $B_1$ and $B_2$ are either horizontally, vertically or diagonally adjacent. 
    The path $\pi^*$ may be partitioned into subpaths whose endpoints lie in consecutive blocks in $\mathbf{B}^*$.
    The weight of $\pi^*$ is equal to the weight of these subpaths. 

    For any consecutive blocks $B_1, B_2 \in \mathbf{B}^*$ that are horizontally or vertically adjacent, by Lemma~\ref{lem:horizontal},  the weight of any subpath of $\pi^*$ with its endpoints in $(B_1, B_2)$ is  $4 U^5 + 10 U^4$.

      For any consecutive blocks $B_1, B_2 \in \mathbf{B}^*$ that are diagonally adjacent (both intersecting the gadget $R_{ij}$), by Lemma~\ref{lem:diagonal},  the weight of any subpath of $\pi^*$ with its endpoints in $(B_1, B_2)$ is:
      \begin{itemize}[noitemsep]
          \item At least $U^3$ whenever $r_i \neq c_j$. 
          \item Equal to $4 U + d_i \cdot c_j$ otherwise. 
      \end{itemize}

    By Observation~\ref{obs:minimise}, the path $\pi^*$ takes as few white vertices as possible (prioritizing diagonals consisting of grey vertices whenever possible). Thus, it contains exactly $4 |n_r - n_c|$ white vertices. 
    This implies that there are exactly $4 |n_r - n_c|$ pairs of consecutive blocks $(B_1, B_2)$ that are horizontal or vertically adjacent, and $\min \{ n_r, n_c \}$ consecutive blocks $(B_1, B_2)$ that are diagonally adjacent. 
    By Lemma~\ref{lem:horizontal}, the subcurves of $\pi^*$ between $B_1$ and $B_2$ that are vertically or horizontally adjacent have a total weight of exactly 
    $|n_r - n_c| (4 U^5 + 10 U^4 )$. 
    We may apply the same argument to \textsc{Intermediary}, noting that the horizontal and vertical edges taken in \textsc{Intermediary} have a total weight of exactly $|n_r - n_c| U$. 

    We now consider two cases: 

    First, the case where \textsc{Intermediary} outputs $\infty$.
    In other words, the shortest path $(0, 0) \to (n_r - 1, n_c - 1)$ path in \textsc{Intermediary} is at least $|n_r - n_c| U  + \sqrt{U}$.
    This occurs if and only if there does not exist a path $\Pi$  in \textsc{Intermediary} from $(0, 0)$ to $(n_r - 1, n_c - 1)$ where for all diagonals from $(i, j)$ to $(i+1, j+1)$ in $\Pi$: $r_i = c_j$. 
    It follows by Lemma~\ref{lem:diagonal} that for the corresponding pairs of diagonal blocks $(B_1, B_2)$, the shortest path from any vertex $u \in B_1$ to any vertex $v \in B_2$ has weight at least $U^3$.
    We note that $\mathbf{B}^*$ must include at least one consecutive pair $(B_1, B_2)$ that is diagonally adjacent.  The subpath of $\pi^*$ between any such $B_1$ and $B_2$ has weight at least $U^3$ and the path $\pi^*$ has thus weight at least $|n_r - n_c| (4 U^5 + 10 U^4 ) + U^3$. 

    Second, the case where \textsc{Intermediary} outputs a finite value. 
    Consider each path $\Pi$  in \textsc{Intermediary} from $(0, 0)$ to $(n_r - 1, n_c - 1)$ where for all diagonals from $(i, j)$ to $(i+1, j+1)$ in $\Pi$: $r_i = c_j$. 
    Denote by $D = \{ (i, j) \}$ the set of diagonals taken by $\Pi$. 
    The cost of $\Pi$ is equal to $|n_r - n_c| U  + \sum_{(i, j) \in D} d_i \cdot b_j$.
    Since $\Pi$ is $xy$-monotone, there exists at least one path $\pi'$ in our rectangular grid graph where the corresponding block sequence $\mathbf{B}'$ contains $\min \{ n_c, n_r \}$ pairs of consecutive blocks $B_1, B_2 \in \mathbf{B}'$ that are diagonally adjacent where every such $B_1, B_2$ share a gadget $R_{ij}$ for a diagonal $(i, j) \in D$.

    Now consider the set of all paths $\pi'$, where the corresponding set of blocks $\mathbf{B}'$ contains $\min \{ n_c, n_r \}$ pairs of consecutive blocks $B_1, B_2 \in \mathbf{B}'$ that are diagonally adjacent where for all $(B_1, B_2)$ that share a gadget $R_{ij}$: $c_i = r_j$.
    Denote by $D'$ the set of all pairs $(i, j)$ for these gadgets $R_{ij}$.

    By our above analysis, the weight of $\pi'$ is equal to:
    \[|n_r - n_c| (4 U^5 + 10 U^4 )  + \sum\limits_{(i, j) \in D'} (4U + d_i \cdot b_j) =|n_r - n_c| (4 U^5 + 10 U^4 ) + 4 \min \{ n_r, n_c \} U +  \sum\limits_{(i, j) \in D'} d_i \cdot b_j. \]
    The path $\pi^*$ equals the path $\pi'$ with minimal weight and so the lemma follows. 
\end{proof}

\noindent
For any instance of \textsc{Intermediary}, we may compute $P$ and $Q$ in $O(n_c + n_r)$ time.
For each update in \textsc{Intermediary}, we only need to translate $4$ vertices in $Q$ in $O(1)$ time to maintain $(P, Q)$. 
By computing $DTW(P, Q)$ we may answer a query in \textsc{Intermediary} in $O(1)$ additional time. 
Thus:

\curvelowerbound*

Combining this with the lower bound (Theorem~\ref{thm:lowerboundInter}) from the next section gives:

\lowerboundDTW*
\section{Intermediary lower bound}
\label{sec:lowerbound}
In this section we prove the following theorem:

\lowerBoundIntermediary*

For example, Theorem~\ref{thm:lowerboundInter} implies that given polynomial preprocessing time, no data structure can have both the update and the query time significantly better than $O(n \cdot \sqrt{m})$.

\noindent
To this end, we recall the definition of Negative-$k$-Clique:

\kClique*

\paragraph{A switch in notation.}
To facilitate our proofs, we make a slight switch in notation.
For starters, we assume that in the \textsc{Intermediate} problem we have $n = n_r$ rows and $m = n_c$ columns.  
We refer to any edge in \textsc{Intermediate} its tail and its type (horizontal/vertical/diagonal).
For example the diagonal edge $(i,j)$ is the edge from $(i,j)$ to $(i+1,j+1)$.
When we say {\em shortest path} we always refer to a shortest path from $(0,0)$ to $(n-1, m-1)$.
We sometimes refer to $(0,0)$ as the top-left corner and to $(n-1, m-1)$ as the bottom-right corner.

We denote for any integer $A$ by $[A] := \set{0, \ldots, A-1}$. 
We denote for $A$ and $B$ with $B > A$ the integer intervals as $[A,B] = \set{A, A+1, \ldots, B}$, and $[A,B) = \set{A, A+1, \ldots, B-1}$.
Any positive integer $p\in[m]$ can be written as $p = \sum_{i\in [\ceil{\log_2{m}}]} b_i \cdot 2^i$, for some Booleans $b_i$.
We say that the binary number between bits $x$ and $y$ of $p$ is 
$\sum_{i\in [x,y]} b_i \cdot 2^{i-x}$.
For example $4 = 1\cdot 2^2 + 0\cdot 2^1 + 0\cdot 2^0$, and the number between bits $1$ and $2$ is $1\cdot 2^{2-1} + 0\cdot 2^{1-1} = 2$.

Finally, we use the notation $\Omega_k(\cdot), \Theta_k(\cdot),O_k(\cdot)$ to suppress factors depending only on $k$, the parameter of the Negative-$k$-Clique problem we are reducing from.
Throughout the proofs, one can think of $k$ as a sufficiently large constant.

\subsection{Some initial tools}
In our reduction from Negative-$k$-Clique to \Intermediary{} we construct instances of \Intermediary{} that satisfy the following additional restrictions.

\begin{assumption} \label{as:gridStructure}
Let $\gadgetSize{} = \Theta_k(N^2)$ be a parameter we specify later.

We assume that:
\begin{itemize}[noitemsep, nolistsep]
    \item $n\ge m$,
    \item both $n-1$ and $m-1$ are multiples of $\gadgetSize{}$,
    \item there exist constants $A_{-1}, A_0, A_1, A_2, A_3, A_4, A_5$ such that:
    \begin{itemize}[noitemsep, nolistsep]
        \item $ A_{-1} = 1$,
        \item $\floor{\sqrt{U}} \ge A_5$, and 
        \item $A_{i} = 100(n+N)^{10k} M \cdot A_{i-1} = N^{O(k)}$ for $i \in [6]$, where $M$ is defined as in Definition~\ref{def:kClique},
    \end{itemize}
    \item for any row $i$, it holds that $d_i < 2A_4$,
    \item there always exists an increasing sequence $s_0, s_1, \ldots, s_{m-2}$ such that $0\le s_0 < s_{m-2} < n-1$ and $r_{s_i} = c_{i}$ for all $i\in [m-1]$.
\end{itemize}
\end{assumption}
Intuitively, the gaps between the constants $A_i$ are so large that allow us to treat the constants independently.
Furthermore, there always exist a path that uses only diagonals of weight smaller than $\sqrt{U}$ and no horizontal edges (see Lemma~\ref{lem:countEdgeTypes}).
We later prove that the restrictions of \Cref{as:gridStructure} hold in the instances created by our reduction from Negative-$k$-Clique to \Intermediary{}.
Before that, we first prove some results related to instances having these restrictions.

We start with the following simple observation that it never helps to take a horizontal edge:

\begin{lemma} \label{lem:countEdgeTypes}
A shortest path from $(0,0)$ to $(n-1,m-1)$ uses no horizontal edges, exactly $n-m$ vertical edges, and exactly $m-1$ diagonal edges.
Furthermore, for each diagonal edge from $(i,j)$ to $(i+1,j+1)$ used from a shortest path, it holds that $r_i=c_j$ and therefore the edge's weight is less than $\sqrt{U}$.
Finally, the total weight of any shortest path is less than $(n-m)\cdot U + \sqrt{U}$.
\end{lemma}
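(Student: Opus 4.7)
The plan is to prove all claims simultaneously by exhibiting an explicit ``canonical'' path with cost strictly less than $(n-m)U + \sqrt{U}$, and then showing that any cheaper-or-equal path is forced to have the prescribed structure via a crude accounting argument comparing $U$ to $\sqrt{U}$.

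First I would construct a canonical path using the sequence $s_0 < s_1 < \ldots < s_{m-2}$ guaranteed by \Cref{as:gridStructure}. Starting at $(0,0)$, the path alternates between walking vertically down the current column and taking a single diagonal: descend from $(0,0)$ to $(s_0,0)$, take the diagonal to $(s_0+1,1)$, descend to $(s_1,1)$, take the diagonal to $(s_1+1,2)$, and so on, finishing with a vertical descent from $(s_{m-2}+1, m-1)$ to $(n-1,m-1)$. This path uses $0$ horizontal edges, $n-m$ vertical edges, and $m-1$ diagonals. Since $r_{s_i}=c_i$, every diagonal has weight $d_{s_i}\cdot b_i \le d_{s_i} < 2A_4$. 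The cascade $A_5 = 100(n+N)^{10k} M \cdot A_4$ together with $\sqrt{U} \ge A_5$ gives $(m-1)\cdot 2A_4 < \sqrt{U}$ comfortably (this is the only numerical inequality to verify, and it is routine given how large the multiplicative gap $100(n+N)^{10k}M$ is compared to $m$). Hence the canonical path has total weight strictly less than $(n-m)U + \sqrt{U}$.

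Next I would set up the accounting. For any $(0,0) \leadsto (n-1,m-1)$ path using $h$ horizontal, $v$ vertical, and $d$ diagonal edges, monotonicity forces $h+d = m-1$ and $v+d = n-1$, so $h+v = n+m-2-2d$. The non-diagonal contribution alone is $(n+m-2-2d)U$. If $d \le m-2$, this is at least $(n-m+2)U = (n-m)U + 2U$, which already exceeds $(n-m)U + \sqrt{U}$ because $U > \sqrt{U}$ (since $U \ge A_5^2 \gg 1$). Therefore any shortest path must attain $d = m-1$, yielding $h=0$ and $v=n-m$. This establishes the edge-count portion of the claim.

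Then I would argue that every diagonal on a shortest path satisfies $r_i = c_j$. If some diagonal $(i,j)\to(i+1,j+1)$ had $r_i \ne c_j$, its weight would be $\sqrt{U}$ by definition, and the remaining $m-2$ diagonals plus the $n-m$ vertical edges would already contribute at least $(n-m)U$, making the total at least $(n-m)U + \sqrt{U}$. This contradicts the canonical-path bound. Consequently all diagonals satisfy $r_i = c_j$ and thus have weight $d_i \cdot b_j \le d_i < 2A_4 < \sqrt{U}$. Finally, summing $(n-m)U$ from the vertical edges with $(m-1)\cdot 2A_4 < \sqrt{U}$ from the diagonals yields the promised total weight bound of less than $(n-m)U + \sqrt{U}$.

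The only step that requires any care is verifying the chain of inequalities $(m-1)\cdot 2A_4 < \sqrt{U}$ and $2U > \sqrt{U}$ from the constants defined in \Cref{as:gridStructure}; both follow directly from the multiplicative gap $A_{i} = 100(n+N)^{10k} M \cdot A_{i-1}$ and $\sqrt{U} \ge A_5$, so there is no real obstacle, just bookkeeping.
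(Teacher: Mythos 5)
Your proof is correct and follows essentially the same route as the paper: construct the canonical path from the sequence $s_0 < \cdots < s_{m-2}$ to obtain the upper bound $(n-m)U + \sqrt{U}$, then use the $U$ vs.\ $\sqrt{U}$ gap to force any shortest path to use exactly $m-1$ diagonals (hence $0$ horizontal and $n-m$ vertical edges) and to forbid diagonals with $r_i \ne c_j$. The only cosmetic difference is that you make the edge-count accounting ($h+d=m-1$, $v+d=n-1$) and the bound $(m-1)\cdot 2A_4 < \sqrt{U}$ fully explicit, which the paper states more tersely.
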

\begin{proof}
Let $s_0, \ldots, s_{m-2}$ be the increasing sequence from \Cref{as:gridStructure}.
Consider the following path $P$ from $(0,0)$ to $(n-1,m-1)$. Whenever we are at vertex $(i,j)$, if $j=m-1$, we move vertically until we reach $(n-1,m-1)$. Else, if $i<s_j$ then we move (vertically) to $(i+1,j)$. Else, we move (diagonally) to $(i+1,j+1)$.

% The following block proves that $P$ reaches $(n-1,m-1)$.
% It is removed because it seems unnecessary.
\iffalse
To see that $P$ reaches $(n-1,m-1)$, we prove that it never reaches a vertex $(i,j)$ with $j<m-1$ and $i>s_j$. This is true for the initial vertex $(0,0)$ as $s_0\ge 0$. It remains true when we move vertically, because we only do so under the condition that $i>s_j$. It also remains true when we move diagonally, because either $j+1=m-1$ (trivially true) or we had $i=s_j$ (inductively) and we move to $(i+1,j+1)$; but by assumption $s_{j+1}>s_j$, meaning $i+1 \le s_{j+1}$. The above shows that when $P$ is in vertex $(i,j)$ with $j<m-1$, then $i\le s_j <n-1$, meaning it does not stop there. Thus it will only stop at vertex $(n-1,m-1)$.
\fi

As $P$ does not use any horizontal edge, this means that it takes exactly $m-1$ diagonal edges. Furthermore, in every step it proceeds by one row, meaning it takes $n-1$ edges in total. Therefore $n-m$ of them are vertical edges, and the total cost of the path is $(n-m)\cdot U + \sum_{j=0}^{m-2} d_{s_j}\cdot b_j$.
As all diagonal edges used by the shortest path have weight less than $\sqrt{U}$, and by Assumption~\ref{as:gridStructure}, the total cost of the path is less than $(n-m) \cdot U + \sqrt{U}$.

On the other hand, the maximum amount of diagonal edges on any shortest path is $m-1$, meaning that any shortest path needs to take at least $n-m$ vertical edges. If it takes more vertical edges, or if it takes at least one horizontal edge, then the weight of the path is at least $(n-m+1)\cdot U$, meaning that it cannot be a shortest path. Therefore it takes no horizontal edge, exactly $n-m$ vertical edges, and  exactly $m-1$ diagonal edges.
If it takes a diagonal edge $(i,j)$ with $r_i \ne c_j$ then the cost is at least $(n-m) \cdot U + \sqrt{U}$ and the path is again not a shortest path.
\end{proof}

We now give results related to a certain structure we use in the main reduction.
Intuitively, we define a certain type of subgraph that we call a gadget, which can only be traversed diagonally in a shortest path. See Figure~\ref{fig:gammagadget} for an illustration. 

\begin{definition}
Let $\alpha_r, \alpha_c$ be non-negative integers, and $\Gamma$ be the rectangular subgraph induced by all vertices $(i,j)$ with $\alpha _r\le i \le \alpha_r + \gadgetSize{}$ and $\alpha_c \le j \le \alpha_c + \gadgetSize{}$.
We say $\Gamma$ is a \emph{$\gamma$-gadget} if:
\begin{itemize}
    \item for all $(i,j)$ contained in $\Gamma$ it holds that $r_i=c_j$ if and only if $i-\alpha_r = j-\alpha_c$,
    \item $d_{\alpha_r} = \gamma, b_{\alpha_c} = 1$,
    \item $d_{\alpha_r+\gadgetSize{}-1} = \constGadget{}-\gamma, b_{\alpha_r+\gadgetSize{}-1} = 1$,
\end{itemize}
We say that the $(i-\alpha_r, j-\alpha_c)$ diagonal edges are the \emph{main diagonal edges} of the $\gamma$-gadget. The \emph{cost} of a $\gamma$-gadget is equal to the sum of weights of its main diagonal edges $\sum_{p\in [g]} d_{\alpha_r+p}\cdot b_{\alpha_c+p}$.

For ease of notation, we sometimes say that $\Gamma$ is a \emph{gadget}, instead of a $\gamma$ gadget.
\end{definition}

\begin{figure}[h]
  \centering
  \includegraphics[width = 0.9\linewidth]{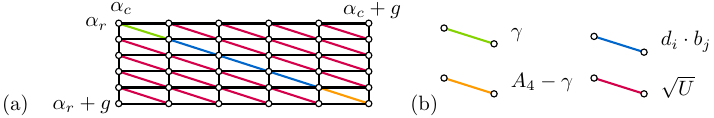}
  \caption{
(a) a $\gamma$-gadget. (b) we show for every diagonal edge their weight. 
    }
 \label{fig:gammagadget}
\end{figure}

From a high level view, the interesting property of a gadget is that a shortest path can either follow all its main diagonal edges, or not use any main diagonal edge of the gadget at all.
Indeed, entering the gadget comes at some high cost $\gamma$ for the main diagonal. 
Exiting the gadget through the last diagonal has an edge with a cost that includes $-\gamma$, to cancel out the earlier cost of $\gamma$. 
If $\gamma$ is large enough then a shortest path that paid the (large) cost $\gamma$, must reach the last diagonal edge of the gadget, in order to gain the $-\gamma$.

One more restriction of the instances created by our reduction from Negative-$k$-Clique to \Intermediary{} is the following:

\begin{assumption} \label{as:gadgets}
For any vertex $(i,j)$ with both $i$ and $j$ being multiples of $\gadgetSize{}$, we have an  $\left((n-i) \cdot \belowConstGadget{}\right)$-gadget with $(i,j)$ being its top left corner.
Furthermore, every row $i$  (where neither $i$ nor $i+1$ is a multiple of $\gadgetSize{}$) has weight $d_i$ at most $n\constEnabler{}$.
\end{assumption}

We note that a vertex may be contained in up to four gadgets.
For example, vertex $(\gadgetSize{}, \gadgetSize{})$ is shared by four gadgets, the ones with their top left corner being $(0, 0), (0, \gadgetSize{}), (\gadgetSize{}, 0), (\gadgetSize{}, \gadgetSize{})$.
On the other hand, every diagonal edge is contained in exactly one gadget.

Given \Cref{as:gadgets}, we obtain the following result:

\begin{lemma} \label{lem:gadgetWhole}
Assuming \Cref{as:gadgets}, any shortest path that uses a main diagonal edge of a gadget must use all main diagonal edges of this gadget.
\end{lemma}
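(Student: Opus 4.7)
The plan is to prove the lemma by contradiction. Suppose the shortest path $\pi$ uses some main diagonal edge $p$ of the gadget $\Gamma$ with top-left corner $(\alpha_r, \alpha_c)$, but fails to use some main diagonal of $\Gamma$. I will construct a strictly cheaper path $\pi'$, contradicting $\pi$'s minimality.

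First I characterize $\pi$ inside $\Gamma$. By Lemma~\ref{lem:countEdgeTypes}, $\pi$ takes no horizontal edges and every diagonal it takes satisfies $r_i=c_j$. The iff condition in the gadget definition forces the $r_i=c_j$ diagonals inside $\Gamma$ to be exactly the main diagonals, so within $\Gamma$ the path uses only main diagonals and vertical edges. A row/column count together with monotonicity rules out entries on the left boundary with positive row offset. Hence $\pi$ enters $\Gamma$ at some $(\alpha_r, \alpha_c+q_0)$ on the top edge, takes the contiguous block of main diagonals indexed by $[q_0, q_1)$ with $q_0 \le p < q_1$, and (up to a symmetric right/bottom swap) exits at $(\alpha_r+\gadgetSize{}, \alpha_c+q_1)$. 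Assume without loss of generality that $q_0>0$; the case $q_1<\gadgetSize{}$ is symmetric and treated below.

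I modify $\pi$ into $\pi'$ by shifting the entry to $(\alpha_r,\alpha_c)$ and prepending the main diagonals $0,1,\ldots,q_0-1$ of $\Gamma$. Outside $\Gamma$, the only change is that $q_0$ pre-gadget diagonals of $\pi$ are traded for $q_0$ vertical edges; inside $\Gamma$, $q_0$ vertical edges are traded for $q_0$ new main diagonals. The total number of vertical edges is preserved, so the large $U$-contributions cancel exactly, and
\[
\mathrm{cost}(\pi') - \mathrm{cost}(\pi) \;=\; \sum_{p=0}^{q_0-1} d_{\alpha_r+p}\, b_{\alpha_c+p} \;-\; \bigl(\text{sum of the weights of the $q_0$ removed pre-gadget diagonals}\bigr).
\]

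The decisive step is to show that this difference is strictly negative. One of the removed pre-gadget diagonals crosses into column $\alpha_c+1$; its tail satisfies $r_{t^\star}=c_{\alpha_c}$, and by Assumption~\ref{as:gadgets} together with the consistency of the iff condition across gadgets, that tail must lie at $(\alpha_r', \alpha_c)$ with $\alpha_r'<\alpha_r$ a multiple of $\gadgetSize{}$. Since $\alpha_r'$ is the top row of the gadget $(\alpha_r', \alpha_c)$, the weight of this removed diagonal equals $d_{\alpha_r'}\cdot b_{\alpha_c} = (n-\alpha_r')\,\belowConstGadget{} \ge \gamma + \gadgetSize{}\cdot \belowConstGadget{}$. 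In contrast, the sum of added main-diagonal weights in $\Gamma$ is at most $\gamma + (q_0-1)\,n\,\constEnabler{}$, which by the scale separation $\belowConstGadget{} \gg n\,\constEnabler{}$ guaranteed by Assumption~\ref{as:gridStructure} is dominated. Hence $\mathrm{cost}(\pi')<\mathrm{cost}(\pi)$, a contradiction. The main obstacle will be verifying the edge cases (most notably $\alpha_r=0$, where no gadget lies above $\Gamma$, but in that case $\pi$ starts at row $0$ and no column progression is possible before row $\alpha_r$, forcing $q_0=0$ automatically) and symmetrizing for $q_1<\gadgetSize{}$, where the ``entry-side'' swap is replaced by an ``exit-side'' swap using the crossing into column $\alpha_c+\gadgetSize{}$, whose weight $\constGadget{} - (n-\alpha_r'')\,\belowConstGadget{}$ dominates the added weight $\constGadget{}-\gamma$ of main diagonal $\gadgetSize{}-1$ of $\Gamma$ by the same $\gadgetSize{}\cdot \belowConstGadget{}$ margin.
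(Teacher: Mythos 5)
Your proof is correct, and its central mechanism — swapping the expensive ``crossing'' diagonal of a neighbouring gadget for the cheaper corresponding main diagonal of $\Gamma$, then absorbing the $O(\gadgetSize{}\cdot n\constEnabler{})$ noise from the remaining main diagonals by the scale gap $\belowConstGadget{}\gg n^2\constEnabler{}$ — is the same as the paper's. The structure of the argument, however, is genuinely different. The paper traces $\pi$ forward from $(0,0)$ and observes that until the first deviation $\pi$ necessarily hops between gadget top-left corners: from a top-left corner a vertical step forces $\gadgetSize{}$ more verticals (no admissible diagonal exists in between), while a diagonal step either completes all $\gadgetSize{}$ main diagonals or constitutes the deviation. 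This means the \emph{first} offending gadget always has $q_0=0$, and the paper needs only the early-exit reroute (your $q_1<\gadgetSize{}$ case, with $\eta',\theta',\mu$ playing the roles of your $\alpha_r,\alpha_c,\alpha_r''+\gadgetSize{}$). You instead pick an arbitrary offending gadget and so must also handle a late entry ($q_0>0$) via a symmetric swap on the column-$\alpha_c$ side. Your version is more local — it does not lean on the global structure of shortest paths starting from $(0,0)$ — at the price of one extra, symmetric case; the paper's version is marginally shorter. Your cost bookkeeping (comparing $(n-\alpha_r)\belowConstGadget{}$ against $(n-\alpha_r')\belowConstGadget{}$, resp.\ $\constGadget{}-(n-\alpha_r)\belowConstGadget{}$ against $\constGadget{}-(n-\alpha_r'')\belowConstGadget{}$, with a margin of at least $\gadgetSize{}\belowConstGadget{}$) and your treatment of the $\alpha_r=0$ boundary case are both correct. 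Two small things to tighten in a write-up: the ``rules out entries on the left boundary with positive row offset'' step should explicitly dispose of the bottom-left corner $(\alpha_r+\gadgetSize{},\alpha_c)$ (a valid entry point, but one from which no main diagonal of $\Gamma$ can be used, so the hypothesis fails there), and the letter $p$ is overloaded as both the fixed main-diagonal index and your summation variable.
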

\begin{proof}
Let $P$ be a shortest path.
If $P$ is at the top left corner $(i,j)$ of a gadget $\Gamma$ and follows a vertical edge, then it must continue vertically until it reaches the top left corner $(i+\gadgetSize{}, j)$ of another gadget, because by Lemma~\ref{lem:countEdgeTypes} $P$ cannot use any non-main diagonal edge.
Else, if $P$ follows a (main) diagonal edge of $\Gamma$, it either follows all main diagonal edges of $\Gamma$, in which case it reaches the top left corner $(i+\gadgetSize{},j+\gadgetSize{})$ of another gadget, or $P$ follows the first main diagonal edge $(i,j)$ of $\Gamma$ but not all of them.

Based on the above, and as $P$ starts at $(0,0)$, which is the top left corner of a gadget, there are two cases. Either the claim of the lemma directly holds, or there exists a first vertex $(\eta',\theta')$ which is the top left corner of a gadget $\Gamma$ such that $P$ follows the main diagonal edge $(\eta', \theta')$ of $\Gamma$, but not all the rest.

As $P$ does not use any horizontal edges, and we assumed it does not use all main diagonal edges of $\Gamma$, it cannot reach vertex $(\eta'+\gadgetSize{}, \theta'+\gadgetSize{})$.
Additionally, $P$ does not use any non-main diagonal edge, meaning it must use some main diagonal edge $(\mu-1, \theta'+\gadgetSize{}-1)$ of another gadget, where $\mu$ is a multiple of $\gadgetSize{}$.
We show that we can modify the subpath of $P$ between vertices $(\eta'+1, \theta'+1)$ and $(\mu, \theta'+\gadgetSize{})$ while reducing the cost, thus contradicting the fact that $P$ is a shortest path.

Notice that in this subpath $P$ uses $\mu-\eta'-\gadgetSize{}$ vertical edges (cost $(\mu-\eta'-\gadgetSize{})U$), and also pays $\constGadget{} - (n-\mu-\gadgetSize) \cdot \belowConstGadget{}$ for the final diagonal edge.
We instead use all the main diagonal edges of $\Gamma$, thus reaching $(\eta'+\gadgetSize{}, \theta'+\gadgetSize{})$, and then use vertical edges to reach $(\mu, \theta'+\gadgetSize{})$.
The cost is $(\mu-\eta'-\gadgetSize{})U$ for the vertical edges, plus $\constGadget{} - (n-\eta') \cdot \belowConstGadget{}$ for the final diagonal edge, plus the cost of the rest of the diagonal edges.
By assumption, the cost of each of the rest of the diagonal edges is less than $n \constEnabler{}$.
We conclude that the new cost is improved by at least $\belowConstGadget{} - n^2\constEnabler{} > 0$.
\end{proof}

\subsection{Reduction} \label{ssec:reduction}
We now describe the reduction from Negative-$k$-Clique to \Intermediary{}.

\paragraph{Weight function}
We assume that the input graph $G_0=(V=V_0\cup V_1\cup \ldots \cup V_{k-1}, E, w)$ has $N$ nodes, where $N$ is a multiple of $k$, and that $G$ is a complete $k$-partite graph.
Each of the parts $V_i$ contains exactly $N/k$ nodes.
We identify the $N$ nodes with integers in $[N]$, such that $V_i = [i\cdot N/k, (i+1)\cdot N/k)]$.
Notice that any $k$ clique must have exactly one node in each $V_i$.

The function $w$ encodes the weight of an edge, that is for $\set{u,v} \in E$ we have that the weight of the edge connecting $u,v$ is $w(u,v)$.
Recall that $M = N^{O(k)}$ is equal to $\sum_{\set{u,v} \in E} |w(u,v)|$.
We introduce two new auxiliary nodes $\alpha=N, \beta=N+1$, and we extend $w$ so that $w(\alpha, \beta)=w(\alpha,u)=w(\beta,u)=2M$, for each $u\in V$. 
Similarly we extend $w(u,u)=0$ and $w(u,v)=2M$ for $u,v$ in the same part $V_i$ and $u\ne v$.
This ensures that if $S$ is a node set of size at least $2$ containing $\alpha$ or $\beta$ or two nodes from the same part $V_i$, then $\sum_{u,v\in S, u<v} w(u,v)$ is too large (at least $M$).

Furthermore, we extend $w$ to take sets as arguments.
For two node sets $S,T$ we define $w(S,T)$ to be the sum of $w(i,j)$ over all unordered pairs $\set{i,j}$ with one endpoint in $S$ and the other in $T$.
Notice that this is not the same as $\sum_{u\in S, v\in T} w(u,v)$; for example $w(S,S)$ does not double count the weight of each pair.

\paragraph{Splitting the parts of $V$}
In what follows, we partition the set $\{ V_i \mid i \in [k] \}$ into four disjoint sets $\mathcal{V}_1, \mathcal{V}_2, \mathcal{V}_3, \mathcal{V}_4$.
For each $i\in \{1,2,3,4\}$, we refer to each  $V_j \in \mathcal{V}_i$ as a part of $V$ that contains $\frac{N}{k}$ nodes. 
We may select for all parts $V_j \in \mathcal{V}_i$ one such node to create a \emph{sequence} of vertices $(v_0,v_1, \ldots v_{|\mathcal{V}_i|-1})$. 
There are then  $(\frac{N}{k})^{|\mathcal{V}_i|}$ sequences that can be generated by picking vertices this way. 
We describe a very non-straightforward way to iterate over these sequences when $i=1$ or $i=3$.
This helps us encode interesting information in our gadgets in Section~\ref{ssec:implementGadgets}.

More formally, we use three positive parameters $\rho_1, \rho_2, \rho_3$, with $\rho_1+\rho_2+\rho_3<1$, that we fix later.
Based on these parameters, we split the parts of $V$ into four disjoint sets, the first containing $\floor{\rho_1 k}$ parts, the second $\floor{\rho_2 k}$ parts, the third $\floor{\rho_3 k}$ parts, and the last containing the rest of the parts (at most $(1-\rho_1-\rho_2-\rho_3)k+3$ parts).
More formally, let
\begin{itemize}
    \item $\mathcal{V}_1 = \set{V_0, \ldots, V_{\floor{\rho_1 k}-1}}$,
    \item $\mathcal{V}_2 = \set{V_{\floor{\rho_1 k}}, \ldots, V_{\floor{\rho_1 k}+\floor{\rho_2 k}-1}}$,
    \item $\mathcal{V}_3 = \set{V_{\floor{\rho_1 k}+\floor{\rho_2 k}}, \ldots, V_{\floor{\rho_1 k}+\floor{\rho_2 k} + \floor{\rho_3 k}-1}}$,
    \item $\mathcal{V}_4 = \set{V_{\floor{\rho_1 k}+\floor{\rho_2 k}+\floor{\rho_3 k}}, \ldots, V_{k-1}}$.
\end{itemize}

\noindent
For $i\in \set{2,4}$ let $\tau_i=(\frac{N}{k})^{|\mathcal{V}_i|}$. We denote by $U_i$ the set of
all $\tau_i$ sequences (containing exactly one node from each part in $\mathcal{V}_i$).
We order $U_i$ in arbitrary order, and use the notation $U_i(j)$ to refer to the $j$-th sequence in $U_i$.
Iterating over all $U_i(j), j\in [\tau_i]$, we generate all different sequences (each containing exactly one node from each part in $\mathcal{V}_i$).

For $i\in \set{1,3}$, our way of iterating over all different sets containing exactly one node from each part in $\mathcal{V}_i$ is more technical.
Let $\overline{N} = 2^{1+\ceil{\log_2{N}}}$, that is $\overline{N}$ is twice the smallest power of $2$ that is at least as large as $N$.
Given a sequence of $x$ nodes $u_0, u_1, \ldots, u_{x-1}$, we let $f(u_0, u_1, \ldots, u_{x-1}) = \sum_{i=0}^{x-1} u_i \overline{N}^i$.
Notice that, as $\overline{N}$ is larger than $N$, for any $y\in [x]$, we can retrieve $u_y$, given $f(u_0, u_1, \ldots, u_{x-1})$.
Furthermore, since $\overline{N}$ is a power of $2$, it suffices to read the binary number $z$ between bits $\log_2{\overline{N}}\cdot y$ and $\log_2{\overline{N}}\cdot (y+1)-1$, to retrieve $u_y$.
In fact, as $u_y$ consists of $\ceil{\log_2{N}}$ bits, the topmost bit of $z$ is always zero.
We need to ensure this technicality for reasons that will become apparent in Section~\ref{ssec:implementGadgets}.

Seen in the reverse order, given a non-negative integer $p \in [\overline{N}^{|\mathcal{V}_1|}]$, we define a sequence $U_1(p)=(u_0, u_1, \ldots, u_{|\mathcal{V}_1|-1})$.
Let $z_y$ be the binary number between bits $\log_2{\overline{N}}\cdot y$ and $\log_2{\overline{N}}\cdot (y+1)-1$ of $p$, for $y\in [|\mathcal{V}_1|]$.
Then we define $u_y=z_y$ if $z_y \in V_y$, and $u_y=\alpha$ otherwise.

Similarly, given a non-negative $p \in [\overline{N}^{|\mathcal{V}_3|}]$, we define a sequence $U_3(p)=(v_0, v_1, \ldots, v_{|\mathcal{V}_3|-1})$.
Let $z'_y$ be the binary number between bits $\log_2{\overline{N}}\cdot y$ and $\log_2{\overline{N}}\cdot (y+1)-1$ of $p$, for $y\in [|\mathcal{V}_3|]$.
Then we define $v_y=z'_y$ if $z'_y \in V_{|\mathcal{V}_1|+|\mathcal{V}_2|+y}$, and $v_y=\beta$ otherwise.

For $i\in \set{1,3}$, let $\tau_i = \overline{N}^{|\mathcal{V}_i|}$, and notice that iterating over all $U_i(j), j\in [\tau_i]$, we generate all different sequences containing exactly one node from each part in $\mathcal{V}_i$ (along with some ``garbage'' sequences that contain the node $\alpha$ or $\beta$).
Furthermore, $\tau_i = O_k(N^{|\mathcal{V}_i|})$, so intuitively the redundancy we introduce is small.

We use $U_i(j) \circ U_{i'}(j')$ to refer to the concatenation of the two sequences $U_i(j)$ and $U_{i'}(j')$, for $i,i'\in [4], j\in [\tau_{i}], j'\in [\tau_{i'}]$.

\paragraph{Applying \textsc{Intermediary}}

\begin{figure}[t]
  \centering
  \includegraphics[]{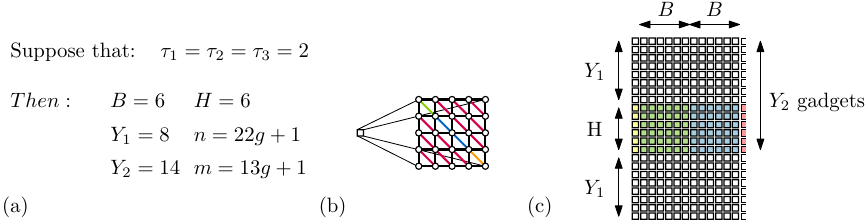}
  \caption{
(a) We choose $\tau_1 = \tau_2 = \tau_3 = 2$, which determine $B,Y_1,Y_2,H,n,m$.
(b) By a small square, we denote our gadget that is a rectangular graph of $g \times g$. 
(c) We illustrate our reduction: creating $\tau_2 = 2$ blocks (green + blue). 
    }
 \label{fig:reductionInter}
\end{figure}

For a given instance of Negative-$k$-Clique, we define the following variables:
\begin{itemize}[noitemsep]
    \item $\gadgetSize{} = 12+ \ceil{\log_2{\constStayBigDiagonal{}}} + N + (|\mathcal{V}_1|+|\mathcal{V}_2|+|\mathcal{V}_3|)|\mathcal{V}_4|(N+2) + 2\overline{N}(|\mathcal{V}_1|+|\mathcal{V}_2|)|\mathcal{V}_3|(N+2)$
	\item $\block{} = \blockFormula$,
	\item $\mainHeight{} = \mainHeightFormula{}$,
	\item $\mainRow{} = \mainRowFormula{}$,
	\item $\mainRowEnd{} = \mainRowEndFormula{}$,
	\item $n= (\rows{}) \gadgetSize{}+1$,
	\item $m = (\cols{}) \gadgetSize{}+1$.
\end{itemize}

Until \Cref{ssec:implementGadgets}, we only need that $g=O_k(N^2)$.
We subsequently create an instance of \Intermediary{} by creating a rectangular graph that consists of $O(n/g) \times O(m/g)$ gadgets (see Figure~\ref{fig:reductionInter}).
The rows of our graph have a head and a tail of $Y_1$ gadgets, and a center of $H$ gadgets. 
The columns of our graph have a head of $1$ gadget, followed by a center of $\tau_2 B$ gadgets. 
This splits the center of the rectangular graphs (defined by the centers of the rows and columns) into $\tau_2 B$ blocks of $H \times B$ gadgets. 
The high-level idea of our reduction, is that we construct our gadgets in such a way that the shortest path in \Intermediary{} crosses exactly one block, and follows
%a diagonal path
a path with certain properties
within that block.
Each combination of a block and
%diagonal path
path with the aforementioned properties
corresponds to a $k$-Clique in our problem instance.

\paragraph{Showing that we may apply \Cref{as:gridStructure}.}
Notice that $\gadgetSize = \Theta_k(N^2), n\ge m$ and $n-1,m-1$ are multiples of $\gadgetSize{}$.
Recall the constant $A_i$ from Assumption~\ref{as:gridStructure}.
In \Cref{ssec:implementGadgets} we specify the weight of each row and it is straightforward to verify that $max_{i\in [n]} d_i < 2\constGadget{}$.
Let $\%$ be the modulo operator.
We set the identifier of row $i$ to $r_i=i\% \gadgetSize{}$ and similarly the identifier of column $j$ to $c_j=j\%\gadgetSize{}$.
Therefore the increasing sequence of Assumption~\ref{as:gridStructure} exists, with $s_i=i$ being a witness.

Finally, we let $U = 1+n\cdot m\cdot (\max_i{d_i}\cdot \max_i{r_i} \cdot \max_j{c_j} \cdot A_5)^2 = N^{O(k)}$, which ensures both the requirement from the statement of \Intermediary{} that $U > n\cdot m\cdot (\max_i{d_i}\cdot \max_i{r_i} \cdot \max_j{c_j})^2$, and that $\floor{\sqrt{U}} \ge A_5$.
We conclude that \Cref{as:gridStructure} indeed holds.

\paragraph{Showing that we may apply \Cref{as:gadgets}}
We straightforwardly ensure \Cref{as:gadgets}: We make each vertex $(i,j)$ with both $i,j$ being multiples of $\gadgetSize{}$ the top left corner of a $(n - i)\belowConstGadget{}$ gadget.
Furthermore, for any $y$ such that neither $y$ nor $y+1$ is a multiple of $\gadgetSize{}$, we set the weight of row $y$ to be at most $n\cdot \constEnabler{}$ (see \Cref{ssec:implementGadgets} for a specification of the row weights).

\paragraph{} For ease of notation, we refer to the gadget with top left corner $(i \gadgetSize{}, j\gadgetSize{})$ as the $(i,j)$ gadget.
We say that we {\emph{use the $(x,y)$ gadget}} to denote that we move from $(x\gadgetSize{}, y\gadgetSize{})$ to $((x+1)\gadgetSize{}, (y+1)\gadgetSize{})$ using all the main diagonal edges of the gadget.
As defined earlier, the cost of the $(x,y)$ gadget is $\sum_{i\in [\gadgetSize{}]} d_{x\gadgetSize{}+i} b_{y\gadgetSize{}+i}$, that is the cost a path pays to use this gadget.

Notice that Lemma~\ref{lem:gadgetWhole} applies, and it hints that we can view gadgets as single diagonal edges.
In what follows we heavily use Lemma~\ref{lem:gadgetWhole}, even without explicitly stating it.

Our reduction works in $\tau_4$ epochs, each one corresponding to a different $U_4(s)$.
Each of these epochs is further divided into $O(\tau_1)$ phases, each one corresponding to a different $U_1(\cdot)$.

When we are at the $s$ epoch and the $t$ phase, we say we are at phase $(s,t)$.
For every phase $(s,t)$ we always have $s\in [\tau_4], t\in [\tau_1]$ and $U_1(\tau_1-t-1)\not \ni \alpha$.

We now describe a certain type of path from $(0,0)$ to $(n-1,m-1)$ that we call a \emph{restricted} path (see \Cref{fig:restrictedPath}).
We later show that a shortest path needs to be a restricted path.

\begin{figure}[t]
  \centering
  \includegraphics[width=250px, height=250px]{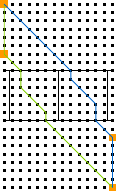}
  \caption{
  The graph we get when $\tau_1=\tau_2=\tau_3=2$, and two restricted paths at phase $(s,t=0)$.
  Each node not in the last row or the last column represents a gadget.
  A restricted path can only use the $(t=0,0)$ or the $(\block{}+t=\block{},0)$ gadget, out of all the $(y,0)$ gadgets.
  Then it follows diagonals until it reaches height $\mainRow{}\gadgetSize{}$.
  Subsequently it moves vertically, then uses $2t+3$ gadgets, and then moves vertically again until it reaches height $\mainRowEnd{}\gadgetSize{}$.
  Finally, it moves diagonally until it reaches the last column, and then vertically.
  \\
  To ensure that a shortest path is actually a restricted path, we make the $(0,0)$, $(6,0)$, $(15,12)$, $(21,12)$ gadgets much cheaper than the rest of the $(y,0)$, $(y,12)$ gadgets. 
  Furthermore, we make the $(15,12)$ gadget cheaper than the $(21,12)$ gadget, which ``matches" the $(15,12)$ gadget with the $(0,0)$ gadget, but not with the $(6,0)$ gadget (as $(6,0)$ is too low to reach $(15,12)$).
  Additionally, for $x\in [1,12)$ the $(y,x)$ gadgets $y \not \in [\mainRow{}, \mainRowEnd{})$ are much cheaper than the ones with $y \in [\mainRow{}, \mainRowEnd{})$.
  This ensures that when at vertex $(y\gadgetSize{}, x\gadgetSize{})$, $y \not \in [\mainRow{}, \mainRowEnd{})$, a shortest path always uses the $(y,x)$ gadget.
    }
 \label{fig:restrictedPath}
\end{figure}

\begin{definition}
For some $j\in [\tau_2], l\in [\tau_3]$ during phase $(s,t)$, a path from $(0,0)$ to $(n-1,m-1)$ is said to be a \emph{$(j,l)$-restricted path}, or simply a \emph{restricted} path, if it has the following form (Figure~\ref{fig:restrictedPath})

\begin{enumerate}
    \item We move vertically from the top left corner of the $(0,0)$ gadget to the top left corner of the $(j\block{}+t,0)$ gadget.
    \item We use the $(j\block{}+t,0)$ gadget.
    \item We move diagonally to the top left corner of the $(\mainRow{}, \mainRow{}-j\block{}-t) = (\mainRow{}, (\tau_2-j-1)\block{}+\tau_1-t)$ gadget.
    \item We move vertically to the top left corner of the $(\mainRow{}+\tau_1-t-1+l, (\tau_2-j-1)\block{}+\tau_1-t)$ gadget.
    \item We move diagonally to the top left corner of the $(\mainRow{}+\tau_1+l-1, (\tau_2-j-1)\block{}+\tau_1)$ gadget.
    \item We use the $(\mainRow{}+\tau_1+l-1, (\tau_2-j-1)\block{}+\tau_1)$ gadget.
    \item We use the $(\mainRow{}+\tau_1+l, (\tau_2-j-1)\block{}+\tau_1+1)$ gadget.
    \item We move diagonally to the top left corner of the $(\mainRow{}+\tau_1+t+l+2, (\tau_2-j-1)\block{}+\tau_1+t+3)$ gadget.
    \item We move vertically to the top left corner of the $(\mainRow{}+2\tau_1+\tau_3, (\tau_2-j-1)\block{}+\tau_1+t+3) = (\mainRowEnd{},(\tau_2-j-1)\block{}+\tau_1+t+3))$ gadget.%+(\tau_1-t-1) rows to reach the end of the top diagonal, +(U_3-1) rows to reach the end of the bottom diagonal
    \item We move diagonally to the top left corner of the $(\mainRowEnd{}+j\block{}+\tau_1-t-1, (\tau_2-1)\block{}+2\tau_1+2) = (\mainRowEnd{}+j\block{}+\tau_1-t-1, \tau_2\block{})$ gadget. %+\tau_1-t-1
    \item We use the $(\mainRowEnd{}+j\block{}+\tau_1-t-1, \tau_2\block{})$ gadget. %+1
    \item We move vertically from $((\mainRowEnd{}+j\block{}+\tau_1-t)\gadgetSize{},m-1)$ to $(n-1,m-1)$.
\end{enumerate}

\end{definition}

For our reduction, we first describe the desired costs of the gadgets at phase $(s,t)$, and show the lower bound.
In \Cref{ssec:implementGadgets} we show how to implement the gadgets, by the weights $d_i$ of rows, as well as the activations $b_j$ of columns.

In what follows, one can think of $h(i,j,l)$ as being equal to $w(U_1(i)\circ U_2(j), U_3(l))$.
For reasons related to the actual implementation of of our gadget costs, $h(\cdot, \cdot, \cdot)$ is a function with a very technical definition, that we specify in \Cref{ssec:implementGadgets}.

\begin{definition} \label{def:desiredCosts}
Let $h$ be a function satisfying the following constraints: $h(i,j,l)$, with $i\in [\tau_1], j\in [\tau_2], l\in [\tau_3]$, is equal to $w(U_1(i)\circ U_2(j), U_3(l))$ when $U_1(i)\not \ni \alpha$ and $U_3(l)\not \ni \beta$.
Furthermore, for any $i,j$ such that $U_1(i)\not \ni \alpha$, there exists an $l_{i,j}\in [\tau_3]$ such that $U_3(l_{i,j}) \not \ni \beta$ and $h(i,j,l_{i,j}) \le h(i,j,l)$ for any $l \in [\tau_3]$.
Finally $h(i,j,l) \le 2N^2M$ for any $i,j,l$.

At phase $(s,t)$, with $U_1(\tau_1-t-1)\not \ni \alpha$, we say that the gadgets have the desired costs if the following hold:
\begin{itemize}
    %First and last column
    \item The $(j\block{}+t,0)$ gadgets, $j\in [\tau_2]$, have cost $\constGadget{} + (\tau_2-j)\constStayBigDiagonal{} + 2\gadgetSize{}M +
    \\w(U_1(\tau_1-t-1)\circ U_2(\tau_2-j-1), U_1(\tau_1-t-1)\circ U_2(\tau_2-j-1)\circ U_4(s))$.
    \item The $(\mainRowEnd{} + j\block{}+\tau_1-t-1, \tau_2\block{})$ gadgets, $j\in [\tau_2]$, have cost $\constGadget{} + j\cdot \constStayBigDiagonal{}$.

    \item The $(y_1,0)$ gadgets and the $(y_2, \tau_2\block{})$ gadgets, with $y_1<\mainRowEnd{}, y_1 \not \in \set{j\block{}+t \mid j\in [\tau_2]}, y_2\ge \mainRow{}, y_2\not \in \set{\mainRowEnd{} + j\block{}+\tau_1-t-1 \mid j\in [\tau_2]}$, have cost at least $\constGadget{} + \constEnabler{}$.
    %\item The $(j\block{}+i_1,0)$ gadgets and the $(\mainRowEnd{} + j\block{}+\tau_1-i_2-1, \tau_2\block{})$ gadgets, $i_1,i_2\in [\block{}]\setminus \set{t}, j\in [\tau_2]$ and $j\block{}+i_1<\mainRow{}, j\block{}+\tau_1-i_2-1\in [\mainRowEnd{}, \mainRowEnd{} + \mainRow{})$, have cost at least $\constGadget{} + \constEnabler{}$.
    %This is how it looked previously
    %\item The rest of the $(y,0)$ and $(\mainRowEnd{}+y, \tau_2\block{})$ gadgets, $y\in [\mainRow{}]$, have cost at least $\constGadget{} + \constEnabler{}$.
    
    \item The $(y,x)$ gadgets, for $y<\mainRow{}$ or $y\ge \mainRowEnd{}$, and $0<x<\tau_2\block{}$, have cost $\constGadget$.

        %first \tau_1 columns
    \item The $(\mainRow{}+i+l, j\block+i+1)$ gadgets, with $i\in [\tau_1-1], j\in [\tau_2], l\in [\tau_3]$, have cost $\constGadget{} + (\mainHeight{}-l)\constStaySmallDiagonal + 4\gadgetSize{}M + h(i,j,l) - h(i+1,j,l)$.
    \item The $(\mainRow{}+\tau_1-1+l, j\block+\tau_1)$ gadgets, with $j\in [\tau_2], l\in [\tau_3]$, have cost $\constGadget{} + (\mainHeight{}-l)\constStaySmallDiagonal + 2\gadgetSize{}M + h(\tau_1-1,j,l)$.
    \item The $(\mainRow{}+y, j\block{}+i+1)$ gadgets, $i\in [\tau_1], j\in [\tau_2], y\in [\mainHeight{}] \setminus [i,i+\tau_3)$, have cost at least $\constGadget{} + (\mainHeight{}+i-y)\constStaySmallDiagonal$.
        %middle column (\tau_1+1)
    \item The $(\mainRow{}+\tau_1+l, j\block{}+\tau_1+1)$ gadgets, with $j\in [\tau_2], l\in [\tau_3]$, have cost $\constGadget{} + 2\gadgetSize{}M + w(U_3(l),U_3(l)\circ U_4(s))$.
    \item The $(\mainRow{}+y, j\block{}+\tau_1+1)$ gadgets, with $y\in [\tau_1]\cup [\tau_1+\tau_3,\mainHeight{})$, and $j\in [\tau_2]$, have cost at least $\constGadget{} + \constEnabler{}$. 
        %next \tau_1 columns
    \item The $(\mainRow{}+y, j\block{}+\tau_1+i+2)$ gadgets, with $i\in [\tau_1], j\in [\tau_2], y\in [\mainHeight{}]$, have cost $\constGadget{} + (\mainHeight{}-\tau_1-1+y-i) \constStaySmallDiagonal{}$.
        %last column in a block
    \item The $(\mainRow{}+y, j\block{}+2\tau_1+2)$ gadgets, with $j\in [\tau_2], y\in [\mainHeight{}]$, have cost at least $\constGadget{} + \constEnabler{}$.
\end{itemize}
\end{definition}

\iffalse
\todo[inline]{Probably we have $4$ more constants than the edit distance paper. One is the constant for the gadget $\constGadget{}$. We are probably not able to make it equal to $U$ or $\sqrt{U}$ in the curves... but even if we can, it's cleaner that way. Also the $\constEnabler$ is not needed in edit distance; they use a larger cost (e.g. $\constGadget{}$) in the irrelevant entries, and $0$ in the relevant ones. In our case I do not see how to do it, because we have weaker updates. And two related to the noise we allow... $\epsilon$ itself, and the constant with which we multiply our weights $\constWeight{}$, in order to be higher than the noise.}
\todo[inline]{In the first column, they also add some $\constStaySmallDiagonal{}$. This is only so that every query has the same form. In our case it would get very ugly, because we also have the $M$ terms that depend on $t$.}
\fi

We now show what the cost of a restricted path at phase $(s,t)$ is:

\begin{lemma} \label{lem:upperBoundCost}
For any $j\in [\tau_2], l\in [\tau_3]$ during phase $(s,t)$, the \emph{$(j,l)$-restricted path} has cost 
\begin{align*}
(n-m)U +(\cols{})\constGadget + \tau_2\constStayBigDiagonal{} + 2(t+1)\mainHeight{}\cdot \constStaySmallDiagonal{} + (4t+6)M + h(\tau_1-t-1,\tau_2-j-1, l) +\\
w(U_1(\tau_1-t-1)\circ U_2(\tau_2-j-1), U_1(\tau_1-t-1)\circ U_2(\tau_2-j-1)\circ U_4(s)) + w(U_3(l),U_3(l)\circ U_4(s))
\end{align*}
\end{lemma}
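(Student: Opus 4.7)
The plan is to directly walk through the twelve numbered steps of the $(j,l)$-restricted path, assign each step to either (a) a stretch of vertical edges or (b) a contiguous sequence of gadgets used along their main diagonals, and then sum the contributions using the desired gadget costs from Definition~\ref{def:desiredCosts}. Steps~1, 4, 9, and~12 contribute only vertical edges of weight $U$ each; steps~2, 6, 7, and~11 each use exactly one named gadget; and steps~3, 5, 8, and~10 each traverse a run of gadgets diagonally. Because the restricted path visits only main-diagonal edges of gadgets (together with vertical edges), the numbers of vertical and diagonal edges must add up to $n-1$ and $m-1$ respectively, which provides a sanity check that I would verify first in order to confirm $(n-m)U$ is the total vertical contribution.

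For the diagonal contributions I would process the four categories in turn. The ``filler'' diagonal stretches in steps~3 and~10 sit entirely in regions with $y<\mainRow$ or $y\ge \mainRowEnd$ and $0<x<\tau_2\block$, so by Definition~\ref{def:desiredCosts} each such gadget contributes exactly $\constGadget$; counting gadgets along these diagonals yields $(\mainRow{}-j\block{}-t-1)+(j\block{}+\tau_1-t-1)$ copies of $\constGadget$. The named gadgets used in steps~2, 6, 7, and~11 each contribute their exact desired cost, which is where the $w$-terms and $\constStayBigDiagonal{}$-terms originate. The crucial observation is that after combining these four single-gadget contributions, the $\constStayBigDiagonal{}$-coefficients sum to $(\tau_2-j)+j=\tau_2$, and the two $w$-terms are precisely those in the statement.

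The more delicate bookkeeping is for steps~5 and~8. In step~5 the $t$ gadgets have the form $(\mainRow{}+i+l,\;(\tau_2-j-1)\block{}+i+1)$ for $i$ ranging over $\{\tau_1-t-1,\ldots,\tau_1-2\}$; each contributes $\constGadget{}+(\mainHeight{}-l)\constStaySmallDiagonal{}+4\gadgetSize{}M+h(i,\tau_2-j-1,l)-h(i+1,\tau_2-j-1,l)$, so the $h$-terms \emph{telescope} to $h(\tau_1-t-1,\tau_2-j-1,l)-h(\tau_1-1,\tau_2-j-1,l)$. Adding step~6 cancels the trailing $h(\tau_1-1,\tau_2-j-1,l)$, leaving precisely $h(\tau_1-t-1,\tau_2-j-1,l)$ in the final total. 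In step~8 the $t+1$ gadgets sit at positions of the form $(\mainRow{}+y,\,j'\block{}+\tau_1+i+2)$ with $y-i$ constantly equal to $\tau_1+l+1$, so every one of them has the same cost $\constGadget{}+(\mainHeight{}+l)\constStaySmallDiagonal{}$; this uniformity is what makes the $\constStaySmallDiagonal{}$-coefficient end up symmetric. Combining the $\constStaySmallDiagonal{}$ contributions from steps~5, 6, and~8 gives $(t+1)(\mainHeight{}-l)+(t+1)(\mainHeight{}+l)=2(t+1)\mainHeight{}$, matching the statement.

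Finally I would tally every term: the $\constGadget{}$-coefficients across all diagonal steps sum to $1+(\mainRow{}-j\block{}-t-1)+t+1+1+(t+1)+(j\block{}+\tau_1-t-1)+1=(\tau_2-1)\block{}+2\tau_1+3=\cols$, which accounts for the $(\cols)\constGadget{}$ term; the $\gadgetSize{}M$-coefficients add to $2+4t+2+2=4t+6$, giving the $(4t+6)\gadgetSize{}M$ term; and the vertical counts from steps~1, 4, 9, 12 sum to $(n-m)$, yielding $(n-m)U$. Combining everything reproduces the claimed expression. The only real obstacle here is keeping the indexing correct across all twelve steps, in particular being careful that the shift between ``entering'' a gadget and ``leaving'' a gadget (the $+1$ in row/column indices) lines up across consecutive steps; once the index alignment is fixed, the rest of the argument is mechanical substitution into Definition~\ref{def:desiredCosts}.
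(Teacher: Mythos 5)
Your proposal is correct and takes exactly the same approach as the paper's own proof: enumerate the twelve steps, read off the cost of each stretch of vertical edges or run of gadgets from Definition~\ref{def:desiredCosts}, exploit the telescoping of the $h$-terms in step~5, and tally the coefficients of each constant. Note that your tally correctly gives $(4t+6)\gadgetSize{}M$ for the $M$-term, whereas the lemma statement as printed reads $(4t+6)M$ — this appears to be a typo in the paper's statement (the paper's own step-by-step sum also yields $(4t+6)\gadgetSize{}M$), so your computation is the accurate one.
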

\begin{proof}
We analyze the cost of the \emph{$(j,l)$-restricted path} step by step.

\begin{enumerate}
    \item We move vertically from the top left corner of the $(0,0)$ gadget to the top left corner of the $(j\block{}+t,0)$ gadget.\\
    Cost $= (j\block{}+t)\cdot \gadgetSize U$.
    \item We use the $(j\block{}+t,0)$ gadget.\\
    Cost $=\constGadget{} + (\tau_2-j)\cdot \constStayBigDiagonal{} + 2gM +
    w(U_1(\tau_1-t-1)\circ U_2(\tau_2-j-1), U_1(\tau_1-t-1)\circ U_2(\tau_2-j-1)\circ U_4(s))$.
    \item We move diagonally to the top left corner of the $(\mainRow{}, \mainRow{}-j\block{}-t) = (\mainRow{}, (\tau_2-j-1)\block{}+\tau_1-t)$ gadget.\\
    Cost $=(\mainRow{}-j\block{}-t-1)\constGadget$.
    \item We move vertically to the top left corner of the $(\mainRow{}+\tau_1-t-1+l, (\tau_2-j-1)\block{}+\tau_1-t)$ gadget.\\
    Cost $= (\tau_1-t-1+l)\gadgetSize{} U$.
    \item We move diagonally to the top left corner of the $(\mainRow{}+\tau_1+l-1, (\tau_2-j-1)\block{}+\tau_1)$ gadget.\\
    Cost telescoping to $t(\constGadget{}+(\mainHeight{}-l) \cdot \constStaySmallDiagonal + 4\gadgetSize{}M)  + h(\tau_1-t-1,\tau_2-j-1, l) - h(\tau_1-1,\tau_2-j-1,l)$.
    \item We use the $(\mainRow{}+\tau_1+l-1, (\tau_2-j-1)\block{}+\tau_1)$ gadget.\\
    Cost $= \constGadget{}+(\mainHeight{}-l) \cdot \constStaySmallDiagonal + 2\gadgetSize{}M  + h(\tau_1-1,\tau_2-j-1, l)$.
    \item We use the $(\mainRow{}+\tau_1+l, (\tau_2-j-1)\block{}+\tau_1+1)$ gadget.\\
    Cost $= \constGadget{} + 2\gadgetSize{}M + w(U_3(l),U_3(l)\circ U_4(s))$.
    \item We move diagonally to the top left corner of the $(\mainRow{}+\tau_1+t+l+2, (\tau_2-j-1)\block{}+\tau_1+t+3)$ gadget.\\
    Cost $=(t+1)(\constGadget{}+(\mainHeight{}+l)\cdot \constStaySmallDiagonal{})$.
    \item We move vertically to the top left corner of the $(\mainRow{}+2\tau_1+\tau_3, (\tau_2-j-1)\block{}+\tau_1+t+3) = (\mainRowEnd{},(\tau_2-j-1)\block{}+\tau_1+t+3))$ gadget.\\%+(\tau_1-t-1) rows to reach the end of the top diagonal, +(U_3-1) rows to reach the end of the bottom diagonal
    Cost $=(\tau_1+\tau_3-t-l-2)\gadgetSize \cdot U$.
    \item We move diagonally to the top left corner of the $(\mainRowEnd{}+j\block{}+\tau_1-t-1, (\tau_2-1)\block{}+2\tau_1+2) = (\mainRowEnd{}+j\block{}+\tau_1-t-1, \tau_2\block{})$ gadget.\\ %+\tau_1-t-1
    Cost $=(j\block{}+\tau_1-t-1)\constGadget$.
    \item We use the $(\mainRowEnd{}+j\block{}+\tau_1-t-1, \tau_2\block{})$ gadget.\\ %+1
    Cost $=\constGadget{}+j\cdot \constStayBigDiagonal{}$.
    \item We move vertically from $((\mainRowEnd{}+j\block{}+\tau_1-t)\gadgetSize{},m-1)$ to $(n-1,m-1)$.\\
    Cost $= (\mainRow{}-j\block{}-\tau_1+t) \gadgetSize{} U$.
\end{enumerate}

Summing up all the costs proves the lemma.
\end{proof}

\begin{corollary} \label{cor:costSP}
During phase $(s,t)$, let $C_{s,t}$ be the weight of the minimum weight $k$-Clique that contains the nodes in $U_1(\tau_1-t-1) \circ U_4(s)$.
For $j\in [\tau_2], l\in [\tau_3]$, the minimum cost of any $(j,l)$-restricted path is $(n-m)U +(\cols{})\constGadget + \tau_2\constStayBigDiagonal{} + 2(t+1)\mainHeight{}\cdot \constStaySmallDiagonal{} + (4t+6)M + C_{s,t} - w(U_4(s), U_4(s))$.
\end{corollary}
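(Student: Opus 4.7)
The plan is to start from Lemma~\ref{lem:upperBoundCost} and reduce the statement to an algebraic identification. Writing $U_1 := U_1(\tau_1{-}t{-}1)$ for brevity, the only $(j,l)$-dependent part of the cost of a $(j,l)$-restricted path is
\[
F(j,l) := h(\tau_1{-}t{-}1,\tau_2{-}j{-}1,l) + w\bigl(U_1 \circ U_2(\tau_2{-}j{-}1),\; U_1 \circ U_2(\tau_2{-}j{-}1) \circ U_4(s)\bigr) + w\bigl(U_3(l),\; U_3(l) \circ U_4(s)\bigr),
\]
so the corollary reduces to establishing $\min_{j,l} F(j,l) = C_{s,t} - w(U_4(s), U_4(s))$; the remaining additive constants in Lemma~\ref{lem:upperBoundCost} already coincide with those stated in the corollary.

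The second step is to restrict attention to pairs with $U_3(l) \not\ni \beta$. Using the phase guarantee $U_1 \not\ni \alpha$ together with the fact that every $U_2$-sequence selects genuine nodes, the set $V_{j,l} := U_1 \circ U_2(\tau_2{-}j{-}1) \circ U_3(l) \circ U_4(s)$ is then a genuine $k$-clique with one node per part. In this honest regime the definition of $h$ supplies $h(\tau_1{-}t{-}1,\tau_2{-}j{-}1,l) = w(U_1 \circ U_2(\tau_2{-}j{-}1), U_3(l))$; expanding the two remaining $w$-terms by the disjointness identity $w(A, A \circ B) = w(A,A) + w(A,B)$ and adding $w(U_4(s), U_4(s))$ reassembles exactly the six self- and cross-terms that constitute $w(V_{j,l}, V_{j,l})$. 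This gives $F(j,l) = w(V_{j,l}, V_{j,l}) - w(U_4(s), U_4(s))$. As $(j,l)$ ranges over the honest pairs, $V_{j,l}$ ranges over all $k$-cliques containing $U_1 \circ U_4(s)$, so the honest minimum of $F$ equals $C_{s,t} - w(U_4(s), U_4(s))$ by definition of $C_{s,t}$.

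The last step, and the one I expect to require the most care, is ruling out that a dishonest pair with $\beta \in U_3(l)$ could lower this value. The argument will play the prohibitive weight $2M$ of each $\beta$-incident edge against the global edge-weight budget $M$: for such an $l$, the term $w(U_3(l), U_3(l) \circ U_4(s))$ necessarily contains at least one $\beta$-edge of weight $2M$, whereas switching to the index $l^{\star} := l_{\tau_1{-}t{-}1, \tau_2{-}j{-}1}$ furnished by the hypothesis on $h$ cannot increase $h$ and replaces the $\beta$-penalty by honest weights whose combined magnitude is at most $M$. The net change is a strict decrease of at least $M$, so no dishonest pair beats the honest minimum, and the corollary follows.
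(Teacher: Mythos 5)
Your proposal is correct and follows essentially the same approach as the paper's own proof: both reduce via Lemma~\ref{lem:upperBoundCost} to the $(j,l)$-dependent part, handle the honest case $U_3(l)\not\ni\beta$ by substituting $h=w(U_1\circ U_2,U_3)$ and regrouping the $w$-terms into $w(V_{j,l},V_{j,l})-w(U_4(s),U_4(s))$, and then rule out $\beta\in U_3(l)$ by switching to $l_{\tau_1-t-1,\tau_2-j-1}$, which cannot increase $h$ while strictly reducing $w(U_3(\cdot),U_3(\cdot)\circ U_4(s))$ from at least $2M$ to at most $M$.
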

\begin{proof}
For a phase $(s,t)$, we always assume that $s\in [\tau_4], t\in [\tau_1], U_1(\tau_1-t-1)\not\ni \alpha$.
When $j,l$ are such that $U_3(l)\not\ni \beta$, we have that $h(\tau_1-t-1,\tau_2-j-1, l) = w(U_1(\tau_1-t-1)\circ U_2(\tau_2-j-1), U_3(l))$.
Therefore the cost of any $(j,l)$-restricted path, with $U_3(l)\not \ni \beta$, is
\begin{align*}
&(n-m)U +(\cols{})\constGadget + \tau_2\constStayBigDiagonal{} + 2(t+1)\mainHeight{}\cdot \constStaySmallDiagonal{} + (4t+6)M + \\
&\quad w(U_1(\tau_1-t-1)\circ U_2(\tau_2-j-1), U_3(l)) + w(U_3(l),U_3(l)\circ U_4(s)) + \\
&\quad w(U_1(\tau_1-t-1)\circ U_2(\tau_2-j-1), U_1(\tau_1-t-1)\circ U_2(\tau_2-j-1)\circ U_4(s)) = \\
&(n-m)U +(\cols{})\constGadget + \tau_2\constStayBigDiagonal{} + 2(t+1)\mainHeight{}\cdot \constStaySmallDiagonal{} + (4t+6)M + \\ &\quad w(U_1(\tau_1-t-1)\circ U_2(\tau_2-j-1)\circ U_3(l) \circ U_4(s), U_1(\tau_1-t-1)\circ U_2(\tau_2-j-1)\circ U_3(l) \circ U_4(s)) -\\
&\quad w(U_4(s), U_4(s))
\end{align*}

By definition of $C_{s,t}$, the minimum cost of any $(j,l)$-restricted path, with $U_3(l)\not \ni \beta$ is therefore $(n-m)U +(\cols{})\constGadget + \tau_2\constStayBigDiagonal{} + 2(t+1)\mainHeight{}\cdot \constStaySmallDiagonal{} + (4t+6)M + C_{s,t} - w(U_4(s), U_4(s))$.

We now need to argue that when $j,l$ are such that $U_3(l)\ni \beta$, the $(j,l)$-restricted path has larger cost.
In these cases, by \Cref{def:desiredCosts} there exists an $l_{\tau_1-t-1,\tau_2-j-1}$ such that $U_3(l_{\tau_1-t-1,\tau_2-j-1})\not \ni \beta$ and $h(\tau_1-t-1,\tau_2-j-1,l_{\tau_1-t-1,\tau_2-j-1}) \le h(\tau_1-t-1,\tau_2-j-1,l)$.

At the same time we have that $w(U_3(l_{\tau_1-t-1,\tau_2-j-1}),U_3(l_{\tau_1-t-1,\tau_2-j-1})\circ U_4(s)) \le M$ by definition of $M$ and the fact that $U_3(l_{\tau_1-t-1,\tau_2-j-1})\not \ni \beta$, while $w(U_3(l),U_3(l)\circ U_4(s)) \ge 2M$ by the fact that $U_3(l_{\tau_1-t-1,\tau_2-j-1}) \ni \beta$.

We conclude that the $(j,l)$-restricted path's cost is larger than the $(j,l_{\tau_1-t-1,\tau_2-j-1})$-restricted path's cost.
\end{proof}

We now show that during phase $(s,t)$, the shortest path is a restricted path.

\begin{lemma} \label{lem:structureSP}
At phase $(s,t)$ any shortest path is a restricted path.
\end{lemma}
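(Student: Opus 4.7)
My plan is to rule out every non-restricted path using the hierarchy $\constGadget\gg\belowConstGadget\gg\constEnabler\gg\constStayBigDiagonal\gg\constStaySmallDiagonal$ from Assumption~\ref{as:gridStructure}: each constant is at least $100(n+N)^{10k}M$ times larger than the next, so a penalty at one scale cannot be compensated by savings at any lower scale.

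By Lemma~\ref{lem:countEdgeTypes} any shortest path uses no horizontal edges and takes exactly $m-1=(\tau_2\block+1)\gadgetSize$ diagonal edges; combined with Lemma~\ref{lem:gadgetWhole}, the path must consist of exactly $\tau_2\block+1$ complete gadgets, one per gadget-column. Let $y_x$ denote the row index of the gadget used in gadget-column $x$, so that $y_{x+1}\ge y_x+1$. In column $0$, every gadget with $y<\mainRowEnd$ and $y\notin\{j\block+t:j\in[\tau_2]\}$ costs at least $\constEnabler$ extra, and analogously for column $\tau_2\block$; since $\constEnabler$ dominates every lower-scale term, the shortest path must satisfy $y_0=j_0\block+t$ and $y_{\tau_2\block}=\mainRowEnd+j_1\block+\tau_1-t-1$ for some $j_0,j_1\in[\tau_2]$. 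The monotonicity $y_{\tau_2\block}\ge y_0+\tau_2\block$ forces $j_1\ge j_0$, while the combined cost contribution $(\tau_2-j_0+j_1)\constStayBigDiagonal$ of these two gadgets is strictly minimized when $j_1=j_0$ and any deviation costs at least $\constStayBigDiagonal$ extra; hence $j_0=j_1=:j$.

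For every column $x\notin[(\tau_2-j-1)\block+\tau_1-t,\ (\tau_2-j-1)\block+\tau_1+t+3]$, gadgets with $y\in[\mainRow,\mainRowEnd)$ carry a $\constEnabler$ penalty, so the path must avoid this horizontal strip in those columns. Combined with the already-fixed endpoints and monotonicity, this uniquely pins down the two outer diagonal segments and the two vertical segments (steps 1, 3, 10, and 12 of the restricted path).

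The main obstacle is the analysis inside the block: here the path must still cross the strip $[\mainRow,\mainRowEnd)$, and the remaining freedom should reduce to a single parameter $l\in[\tau_3]$. I would argue that the $\constStaySmallDiagonal$-scale penalties attached by Definition~\ref{def:desiredCosts} to every ``off-diagonal'' block gadget (encoded in the $(\mainHeight-l)\constStaySmallDiagonal$ and $(\mainHeight+i-y)\constStaySmallDiagonal$ terms) make each alternative routing pay at least one extra $\constStaySmallDiagonal$. Since $\constStaySmallDiagonal$ dominates all telescoping $h$- and $M$-terms (which only encode the actual clique weight), exactly one value of $l$ and exactly the routing of steps 4--9 of the restricted path remain optimal; the matching cost is given by Lemma~\ref{lem:upperBoundCost}, completing the argument.
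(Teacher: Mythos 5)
Your high-level strategy (rule out non-restricted paths via the separation of scales $U\gg\constGadget\gg\belowConstGadget\gg\constEnabler\gg\constStayBigDiagonal\gg\constStaySmallDiagonal$) matches the paper's, and the first step (exactly $\tau_2\block+1$ full gadgets, no horizontal edges, Lemmas~\ref{lem:countEdgeTypes} and \ref{lem:gadgetWhole}) is correct. However, there are two genuine gaps.

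First, your claim that ``the monotonicity $y_{\tau_2\block}\ge y_0+\tau_2\block$ forces $j_1\ge j_0$'' is false. With $y_0=j_0\block+t$ and $y_{\tau_2\block}=\mainRowEnd{}+j_1\block+\tau_1-t-1$, monotonicity rearranges to $(j_1-j_0)\block\ge -(2\tau_1+\tau_3-2t-3)$, whose right-hand side is a strictly negative quantity of magnitude roughly $\tau_3$. Since $\block=2\tau_1+2$ is typically much smaller than $\tau_3$ in the reduction, $j_1=j_0-1$ (or even smaller) is perfectly compatible with monotonicity. The paper instead counts how many gadgets in the strip $[\mainRow,\mainRowEnd)$ the path is forced to use: if $j_1<j_0$, this number exceeds $\block$, so the path must traverse some column of the form $j\block+2\tau_1+2$ inside the strip, where Definition~\ref{def:desiredCosts} charges an extra $\constEnabler{}$. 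Your purely geometric argument cannot produce this contradiction.

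Second, your statement that ``for every column $x\notin[(\tau_2-j-1)\block+\tau_1-t,\ (\tau_2-j-1)\block+\tau_1+t+3]$, gadgets with $y\in[\mainRow,\mainRowEnd)$ carry a $\constEnabler$ penalty'' is not what Definition~\ref{def:desiredCosts} says: the $\constEnabler$ penalty in the strip is only attached to the ``block boundary'' columns $j\block+2\tau_1+2$ and to some rows of the columns $j\block+\tau_1+1$. Most columns in the strip carry only $\constStaySmallDiagonal$-scale penalties. Consequently, the outer diagonals (steps 3 and 10) and the block routing (steps 4--9) are \emph{not} pinned down by $\constEnabler$ alone; they require the $\constStaySmallDiagonal$-level analysis, which you acknowledge is the main obstacle but leave entirely to a gesture. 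The paper's proof spends two full paragraphs (parts I and II of the $\constStaySmallDiagonal$ step) on this, with explicit case distinctions on where the path enters the strip ($y\le\mainRowEnd{}$ vs.\ $y>\mainRowEnd{}$) and a column-by-column optimality argument for the descent to row $\mainRow{}+\tau_1+l_0$. Without filling in these two pieces, the argument does not yet establish that a shortest path has the restricted form.
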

\begin{proof}
The proof proceeds in four steps. Each step relates to some constants used in our construction:

\paragraph{$U$ and $\constGadget{}$:}
By Lemmas~\ref{lem:countEdgeTypes} and \ref{lem:gadgetWhole}, a shortest path uses exactly $\tau_2\block{}+1$ gadgets, exactly $n-m$ vertical edges, and no horizontal edges.
As every gadget has a cost of at least $\constGadget{}$, and every vertical edge costs $U$, the cost of any shortest path is at least $(n-m)\cdot U + (\tau_2\block{}+1)\constGadget{}$.

\paragraph{$\constEnabler{}$:}
Due to Lemma~\ref{lem:upperBoundCost} and Assumption~\ref{as:gridStructure}, the cost of a shortest path must be smaller than $(n-m)\cdot U + (\tau_2\block{}+1)\constGadget{} + \constEnabler{}$.

If $y_1\ge \mainRowEnd{}$ then vertex $(n-1,m-1)$ is unreachable from the top left corner of the $(y_1,0)$ gadgets.
Similarly if $y_2 < \mainRow{}$ then the top left corner of the $(y_2,\tau_2\block{})$ gadget is unreachable from $(0,0)$.
Out of the rest of the $(y_1,0)$ gadgets and $(y_2,\tau_2\block{})$ gadgets (for $y_1 < \mainRowEnd{}, y_2 \ge \mainRow{}$), the only ones with cost smaller than $\constGadget{} + \constEnabler{}$ are the $(j\block{}+t,0)$ gadgets and the $(\mainRowEnd{}+j\block{}+\tau_1-t-1, \tau_2\block{})$ gadgets.

Therefore a shortest path must first move vertically to the top left corner of a $(j_1\block{}+t,0)$ gadget and use it.
Furthermore, at some point it must use some $(\mainRowEnd{}+j_2\block{}+\tau_1-t-1, \tau_2\block{})$ gadget, and from then on it can only move vertically to $(n-1,m-1)$.

We also prove that $j_2 \ge j_1$.
The reason is that the maximum number of $(y,x)$ gadgets with $y<\mainRow{}$ we can use is $\mainRow{}-j_1\block{}-t$, and the maximum number of $(y,x)$ gadgets with $y\ge \mainRowEnd{}$ we can use is $j_2\block{}+\tau_1-t$.
Therefore we need to use at least $\tau_2\block{}+1-\mainRow{}-(j_2-j_1)\block{}-\tau_1+2t = 2t+3-(j_2-j_1)\block{}$ many $(y,x)$ gadgets with $y\in [\mainRow, \mainRowEnd)$.
If $j_2 < j_1$ then this is more than $\block{}$ gadgets.
But by construction, we would then need to use some $(y, j\block{}+2\tau_1+2)$ gadget, $y\in [\mainRow{}, \mainRowEnd{})$, which would incur an extra $\constEnabler{}$ cost. 
This implies that the shortest path would have a cost of at least $(n-m)\cdot U + (\tau_2\block{}+1)\constGadget{} + \constEnabler{}$, a contradiction.

\paragraph{$\constStayBigDiagonal{}$:}
The cost for using the $(j_1\block{}+t,0)$ gadget is at least $\constGadget{} + (\tau_2-j_1)\constStayBigDiagonal{}$, while the cost for using the $(\mainRowEnd{}+j_2\block{}+\tau_1-t-1, \tau_2\block{})$ gadget is $\constGadget{} + j_2 \cdot \constStayBigDiagonal{}$.

Due to Lemma~\ref{lem:upperBoundCost} and Assumption~\ref{as:gridStructure}, the cost of a shortest path must be smaller than $(n-m)\cdot U + (\tau_2\block{}+1)\constGadget{} + (\tau_2+1) \constStayBigDiagonal$.
As $j_2\ge j_1$, this implies that $j_2=j_1$.

\paragraph{$\constStaySmallDiagonal{}$, part I:}
We claim that a shortest path starting from the $(j_1\block{}+t+1,1)$ gadget uses all gadgets diagonally until it reaches the $(\mainRow{}, (\tau_2-j_1-1)\block{} + \tau_1 - t)$ gadget.
Suppose this is not the case, then there exists some first $(y, (\tau_2-j_1-1)\block{} + \tau_1 - t)$ gadget, $y>\mainRow{}$, reached by the shortest path.
\begin{itemize}
    \item If $y\le \mainRowEnd{}$, then the shortest path used the $(y-1, (\tau_2-j_1-1)\block{} + \tau_1 - t -1 )$ gadget, and the cost was at least $\constGadget{} + \constStaySmallDiagonal$.
In this case we could improve the shortest path, by first moving diagonally from the top left corner of the $(j_1\block{}+t+1,1)$ gadget to the top left corner of the $(\mainRow{}, (\tau_2-j_1-1)\block{} + \tau_1 - t)$ gadget, and then vertically to the $(y, (\tau_2-j_1-1)\block{} + \tau_1 - t)$ gadget. This would use the same number of gadgets (but all of them would have cost $\constGadget{}$, while in the original path at least one gadget costs at least an extra $\constStaySmallDiagonal{}$) and the same number of vertical edges.
    \item If $y> \mainRowEnd{}$, then the path cannot possibly reach the $(\mainRowEnd{}+j_1\block{}+\tau_1-t-1, \tau_2\block{})$ gadget.
This is because even if it takes only diagonal edges after reaching the $(y, (\tau_2-j_1-1)\block{} + \tau_1 - t)$ gadget, it reaches the $(y+j_1\block{}+\block{}-\tau_1+t, \tau_2\block{})$ gadget. But then it cannot reach the $(\mainRowEnd{}+j_1\block{}+\tau_1-t-1, \tau_2\block{})$ gadget, as $y+j_1\block{}+\block{}-\tau_1+t > \mainRowEnd{} +j_1\block{}+(2\tau_1+2)-\tau_1+t > \mainRowEnd{}+j_1\block{}+\tau_1-t-1$.
\end{itemize}

A completely symmetrical argument shows that the shortest path reaches the $(\mainRowEnd{}, (\tau_2-j_1-1)\block{}+\tau_1+t+3)$ gadget, moves diagonally to the top left corner of the $(\mainRowEnd{}+j_1\block{}+\tau_1-t-1, \tau_2\block{})$ gadget, uses it, and then moves vertically to $(n-1,m-1)$.

\paragraph{$\constStaySmallDiagonal{}$, part II}
So far we proved that for some $j_1\in [\tau_2]$ a shortest path reaches the $(\mainRow{}, (\tau_2-j_1-1)\block{}+\tau_1-t)$ gadget, and then moves to the top left corner of the $(\mainRowEnd{}, (\tau_2-j_1-1)\block{}+\tau_1+t+3)$ gadget.
Therefore it needs to use some $(y,(\tau_2-j_1-1)\block{}+\tau_1+1)$ gadget, $y\in [\mainRow{}, \mainRowEnd{})$.
Notice that if this gadget had cost at least $\constGadget{} + \constEnabler{}$, the shortest path would have cost at least $(n-m)\cdot U + (\tau_2\block{}+1)\constGadget{} + \constEnabler{}$; but this contradicts Lemma~\ref{lem:upperBoundCost}.
Therefore it must use some $(\mainRow{}+\tau_1+l_0, (\tau_2-j_1-1)\block{}+\tau_1+1)$ gadget, for some $l_0\in [\tau_3]$.

Additionally, the cheapest way to move from the top left corner of the $(\mainRow{}, (\tau_2-j_1-1)\block{}+\tau_1-t)$ gadget to the top left corner of the $(\mainRow{}+\tau_1+l_0, (\tau_2-j_1-1)\block{}+\tau_1+1)$ gadget is by moving vertically to the top left corner of the $(\mainRow{}+\tau_1-t-1+l_0, (\tau_2-j_1-1)\block{}+\tau_1-t)$ gadget and then diagonally to the $(\mainRow{}+\tau_1+l_0, (\tau_2-j_1-1)\block{}+\tau_1+1)$ gadget.
The reason is that for each $x\in [t+1]$, the shortest path must use some $(\mainRow{}+y, (\tau_2-j_1-1)\block{}+\tau_1-t+x)$ gadget, for $y\in [\mainHeight{}]$.
It cannot be $y>\tau_1-t-1+l_0+x$, because then the top left corner of the $(\mainRow{}+\tau_1+l_0, (\tau_2-j_1-1)\block{}+\tau_1+1)$ gadget would be unreachable.
If $y<\tau_1-t-1+l_0+x$, then the cost is at least $\constGadget{} + (\mainHeight{}-l_0+1)\constStaySmallDiagonal{}$, while the cost of using $y=\tau_1-t-1+l_0+x$ is less than $\constGadget{} + (\mainHeight{}-l_0+1)\constStaySmallDiagonal{}$.
Therefore the suggested path uses the gadget with the smallest cost, for every $x$.

With a completely symmetrical argument, when the shortest path reaches the $(\mainRow{}+\tau_1+l_0, (\tau_2-j_1-1)\block{}+\tau_1+1)$ gadget, it continues diagonally to the $(\mainRow{}+\tau_1+l_0+t+2, (\tau_2-j_1-1)\block{}+\tau_1+t+3)$ gadget, and then vertically to the $(\mainRowEnd{}, (\tau_2-j_1-1)\block{}+\tau_1+t+3)$ gadget.

%previously the lemma proved the cost as well
%but now we don't need to, as we already have a lemma discussing the cost of restricted paths.
\iffalse
By the above discussion and Lemma~\ref{lem:structureSP}, the shortest path at phase $(s,t)$ is completely defined by a $j_1\in [\tau_2]$ and an $l_0\in [\tau_3]$.
Summing up all the costs (exactly as in Lemma~\ref{lem:upperBoundCost}), we get that the cost of the shortest path at phase $(s,t)$ has cost 
$D_t +
w(U_1(\tau_1-1-t)\circ U_2(\tau_2-j_1-1), U_1(\tau_1-1-t)\circ U_2(\tau_2-j_1-1) \circ U_4(s)) + h(\tau_1-1-t, \tau_2-j_1-1, l_0) + 
w(U_3(l_0), U_3(l_0)\circ U_4(s))$.

Notice that if $U_3(l_0)\ni \beta$, then the above expression is larger than in any other case with $U_3(l) \not\ni \beta$.
Therefore we can assume $U_3(l_0) \not\ni \beta$ for the shortest path.
By Definition~\ref{def:desiredCosts} we therefore have
$h(\tau_1-1-t, \tau_2-j_1-1, l_0) = w(U_1(\tau_1-1-t)\circ U_2(\tau_2-j_1-1), U_3(l_0))$.
But then the above minimizes at $D_t + (C_{s,t}-w(U_4(s),U_4(s)))$.
\fi

Putting it all together, we conclude that the shortest path is in fact the $(j_1,l_0)$-restricted path.
\end{proof}

\subsection{Gadget implementation} \label{ssec:implementGadgets}
For a gadget $(\eta, \theta)$, we say that its $i$-th row is row $\eta\gadgetSize{}+i$, and similarly its $j$-th column is column $\theta\gadgetSize{}+j$.

Let \% denote the modulo operator.
We use $r_i = i\% \gadgetSize{}, c_j=j\% \gadgetSize{}$, for all $i,j$.
Therefore the $i$-th row of a gadget always has the same identifier, and similarly for the $j$-th column.
We say that the corresponding column of the $i$-th row is the $i$-th column, and vice versa.

We now give a high level overview of the weights of all rows and the activations of all columns.
Suppose we have an $(\eta, \theta)$ gadget.
\begin{itemize}
	\item The first row has weight $(n-\eta\gadgetSize{})\belowConstGadget{}$.
Along with the last row, these two ensure that the gadget is an $(n-\eta\gadgetSize{})\belowConstGadget{}$ gadget.
	\item The next $\consts{} = 8$ rows of every gadget have weights that are independent of each other.
Each of them describe a different constant.
	\item The next $\binaryNum{} = \ceil{\log_2{\constStayBigDiagonal{}}}$ rows of every gadget can be thought of as one block.
The weight of each row is a different power of $2$.
The intuition is that we can encode numbers by activating the proper columns.
	\item The next $\nodes{}=N+2$ rows of every gadget can be thought of as one block.
Each of the topmost $\mainRow{}$ and the bottommost $\mainRow{}$ gadgets is associated with some set of nodes, and this block encodes this set.
	\item The next $\smallCliqueGadgetA{}=(|\mathcal{V}_1|+|\mathcal{V}_2|)|\mathcal{V}_4|(N+2)$ rows can be thought of as one block.
It describes the cost between two node sets.
One of them relates to the rows intersecting the gadget and contains one node in every part of $\mathcal{V}_1\cup \mathcal{V}_2$.
The other node set relates to the columns intersecting the gadget and contains one node in every part of $\mathcal{V}_4$.
This block is further subdivided into sub-blocks.
The cost of each sub-block is equal to the weight of an edge across the two node sets.
    \item The next $\smallCliqueGadgetB{}=|\mathcal{V}_3||\mathcal{V}_4|(N+2)$ rows can be thought of as one block, similar to the previous one.
This time the rows relate to a node set with one node in every part of $\mathcal{V}_3$.
	\item The next $\cliqueGadget{}=\overline{N}(|\mathcal{V}_1|+|\mathcal{V}_2|)|\mathcal{V}_3|(N+2)$ rows can be thought of as $\overline{N}$ blocks.
Each block describes the cost between two node sets.
One of them relates to the diagonals (not the rows) intersecting the gadget and contains one node in every part of $\mathcal{V}_3$.
The other node set relates to the columns intersecting the gadget and contains one node in every part of $\mathcal{V}_1\cup \mathcal{V}_2$.
Each block is further subdivided into sub-blocks, describing the weight of an edge across the two node sets.
From a high-level view, the idea is that using $\overline{N}$ blocks we simulate different weights per row, which are essential in order to relate one of the node sets to the diagonals, instead of the rows.
    \item The next $\cliqueGadget{}$ rows is a similar block, for technical reasons.
	\item The last row has weight $\constGadget{} - (n-\eta\gadgetSize{})\belowConstGadget{}$.
\end{itemize}

Therefore, as claimed, $\gadgetSize{} = 2+\consts{} + \binaryNum{} + \nodes{} + \smallCliqueGadgetA{} + \smallCliqueGadgetB{} + 2\cliqueGadget{} = 12+ \ceil{\log_2{\constStayBigDiagonal{}}} + N + (|\mathcal{V}_1|+|\mathcal{V}_2|+|\mathcal{V}_3|)|\mathcal{V}_4|(N+2) + 2\overline{N}(|\mathcal{V}_1|+|\mathcal{V}_2|)|\mathcal{V}_3|(N+2)$.

We now define the weights of the rows more formally.
It is straightforward to verify that for every row $i$ its weight $d_i$ is non-negative, less than $2\constGadget{}$, and in case neither $i$ nor $i+1$ are a multiple of $g$ then $d_i \le n \constEnabler{}$, as required by Assumptions~\ref{as:gridStructure} and \ref{as:gadgets}.

We note that the weights of the rows at phase $(s,t)$ do not depend on $s$ or $t$.
When we do not specify the weight of a row, it is implied that its weight is $0$.

Let $(\eta, \theta)$ be a gadget.
\paragraph{First row of $(\eta,\theta)$ gadget:}
The weight of this row is  $(n-\eta\gadgetSize{})\belowConstGadget{}$.

\paragraph{The next $\consts{}$ rows of $(\eta,\theta)$ gadget:}
\begin{itemize}
	\item If $\eta=j\block{}+i$, for some $j\in [\tau_2], i\in [\tau_1]$, then the first row has weight $(\tau_2-j)\constStayBigDiagonal{} + (g-(|\mathcal{V}_1|+|\mathcal{V}_2|)|\mathcal{V}_4|)2M + w(U_1(\tau_1-i-1)\circ U_2(\tau_2-j-1), U_1(\tau_1-i-1)\circ U_2(\tau_2-j-1))$.
    Otherwise the first row has weight $\constEnabler{}$.
	\item If $\eta=\mainRowEnd{}+j\block{}+\tau_1-i-1$, for some $j\in [\tau_2], i\in [\tau_1]$, then the second row has weight $j\cdot \constStayBigDiagonal{}$.
    Otherwise the second row has weight $\constEnabler{}$.
	\item If $\eta = \mainRow{}+y$, for some $y\in [\mainHeight{}]$, then the third row has weight $(\mainHeight{}-y)\constStaySmallDiagonal{} + (\gadgetSize{}-(|\mathcal{V}_1|+|\mathcal{V}_2|)|\mathcal{V}_3|)2M$.
    \item If $\eta = \mainRow{}+y$, for some $y\in [\mainHeight{}]$, then the fourth row has weight $(\gadgetSize{}-(|\mathcal{V}_1|+|\mathcal{V}_2|)|\mathcal{V}_3|)2M$.
    \item If $\eta = \mainRow{}+y$, for some $y\in [\mainHeight{}]$, then the fifth row has weight $(\mainHeight{}+y-2\tau_1)\constStaySmallDiagonal{} + (\gadgetSize{}-(|\mathcal{V}_1|+|\mathcal{V}_2|)|\mathcal{V}_3|)2M$.
    \item If $\eta = \mainRow{}+y$, for some $y\in [\tau_1]\cup [\tau_1+\tau_3,\mainHeight{})$, then the sixth row has weight $\constEnabler{}$.
    \item If $\eta = \mainRow{}+\tau_1+l$, for $l\in [\tau_3]$, then the seventh row has weight $(\gadgetSize{}-|\mathcal{V}_3||\mathcal{V}_4|)2M + w(U_3(l), U_3(l))$.

    \item If $\eta = \mainRow{}+y$, for some $y\in [\mainHeight{}]$, then the eighth row has weight $\constEnabler$.
\end{itemize}

\paragraph{The next $\binaryNum{}$ rows of $(\eta,\theta)$ gadget:}
If $\eta \in [\mainRow{}, \mainRowEnd{})$, then the $i$-th of these rows has weight $2^{i-1}$.

\paragraph{The next $\nodes{}$ rows of $(\eta,\theta)$ gadget:}
If $\eta=j\block{} + i$ or $\eta=\mainRowEnd{}+j\block{}+\tau_1-i-1$ for some $i\in [\tau_1], j\in [\tau_2]$, then the $u$-th row has weight $\constEnabler{}$ for every $u\in U_1(\tau_1-i-1)$.
Else the $\alpha$-th row has weight $\constEnabler{}$.

\paragraph{The next $\smallCliqueGadgetA{}$ rows of $(\eta,\theta)$ gadget:}
If $\eta=j\block{} + i$ for some $i\in [\tau_1], j\in [\tau_2]$, then let $u_0, u_1, \ldots, u_{|\mathcal{V}_1|-1}$ be the sequence $U_1(\tau_1-i-1)$ and $u_{|\mathcal{V}_1|}, u_{|\mathcal{V}_1|+1}, \ldots, u_{|\mathcal{V}_1|+|\mathcal{V}_2|-1}$ be the sequence $U_2(\tau_2-j-1)$.
The weight of the  $i'|\mathcal{V}_4|(N+2) + j'(N+2) + u$-th row, $i'\in [|\mathcal{V}_1|+|\mathcal{V}_2|], j'\in [|\mathcal{V}_4|], u\in [N+2]$, is equal to $2M+w(u_{i'},u)$.

\paragraph{The next $\smallCliqueGadgetB{}$ rows of $(\eta,\theta)$ gadget:}
If $\eta=\mainRow{}+\tau_1+l$ for some $l\in [\tau_3]$, then let $u_0, u_1, \ldots, u_{|\mathcal{V}_3|-1}$ be the sequence $U_3(l)$.
The weight of the  $i'|\mathcal{V}_4|(N+2) + j'(N+2) + u$-th row, $i'\in [|\mathcal{V}_3|], j'\in [|\mathcal{V}_4|], u\in [N+2]$, is equal to $2M+w(u_{i'},u)$.

\paragraph{The next $\cliqueGadget{}$ rows of $(\eta,\theta)$ gadget:}
These rows are non-zero only if $\eta=\mainRow{}+y$ for some $y\in [\mainHeight{}]$.

For $i' \in [|\mathcal{V}_3|], i\in [\overline{N}]$, let $z_{i'}\in [\overline{N}]$ be the number described between bits $\log_2{\overline{N}}\cdot i'$ and $\log_2{\overline{N}}\cdot (i'+1) - 1$ of $y$.
If $i'<|\mathcal{V}_1|$, $(z_{i'}-i) \in V_{|\mathcal{V}_1|+|\mathcal{V}_2|+i'}$ and $i \in V_{i'}$, then let $u_{i',i} = z_{i'} - i$.
Else if $i'\ge|\mathcal{V}_1|$ and $z_{i'} \in V_{|\mathcal{V}_1|+|\mathcal{V}_2|+i'}$ let $u_{i',i} = z_{i'}$.
Else let $u_{i',i} = \beta$.

The weight of the $i|\mathcal{V}_3|(|\mathcal{V}_1|+|\mathcal{V}_2|)(N+2) + i'(|\mathcal{V}_1|+|\mathcal{V}_2|)(N+2) + j'(N+2) + u$-th row, $i\in [\overline{N}], i'\in [|\mathcal{V}_3|], j'\in [|\mathcal{V}_1|+|\mathcal{V}_2|], u\in [N+2]$, is equal to $2M+w(u_{i',i},u)$.

\paragraph{The next $\cliqueGadget{}$ rows of $(\eta,\theta)$ gadget:}
They have the same weights with the previous $\cliqueGadget{}$ rows, with the only difference being that the sign of the $w(\cdot,\cdot)$ term is flipped.

More formally, these rows are non-zero only if $\eta=\mainRow{}+y$ for some $y\in [\mainHeight{}]$.

For $i' \in [|\mathcal{V}_3|], i\in [\overline{N}]$, let $z_{i'}\in [\overline{N}]$ be the number described between bits $\log_2{\overline{N}}\cdot i'$ and $\log_2{\overline{N}}\cdot (i'+1) - 1$ of $y$.
If $i'<|\mathcal{V}_1|$, $(z_{i'}-i) \in V_{|\mathcal{V}_1|+|\mathcal{V}_2|+i'}$ and $i \in V_{i'}$, then let $u_{i',i} = z_{i'} - i$.
Else if $i'\ge|\mathcal{V}_1|$ and $z_{i'} \in V_{|\mathcal{V}_1|+|\mathcal{V}_2|+i'}$ let $u_{i',i} = z_{i'}$.
Else let $u_{i',i} = \beta$.

The weight of the $i|\mathcal{V}_3|(|\mathcal{V}_1|+|\mathcal{V}_2|)(N+2) + i'(|\mathcal{V}_1|+|\mathcal{V}_2|)(N+2) + j'(N+2) + u$-th row, $i\in [\overline{N}], i'\in [|\mathcal{V}_3|], j'\in [|\mathcal{V}_1|+|\mathcal{V}_2|], u\in [N+2]$, is equal to $2M-w(u_{i',i},u)$.

\paragraph{Last row of $(\eta,\theta)$ gadget:}
The weight of this row is $\constGadget{} - (n-\eta\gadgetSize{})\belowConstGadget{}$.

\paragraph{}
Similarly, we define the activations of the columns of a gadget $(\eta, \theta)$ at phase $(s,t)$.
In this case, the activations of columns may depend on $s$ or $t$.
Whenever we do not specify some column, it is implied that it is not active.

\paragraph{First column of $(\eta,\theta)$ gadget:}
This column is always active.

\paragraph{The next $\consts{}$ columns of $(\eta,\theta)$ gadget:}
\begin{itemize}
    \item If $\theta=0$ then the first column is active.
    \item If $\theta=\tau_2\block{}$ then the second column is active.
    \item If $\theta=j\block{}+i+1$ for some $i\in [\tau_1], j\in [\tau_2]$, then the third column is active.
    \item If $\theta=j\block{}+i+1$ for some $i\in [\tau_1-1], j\in [\tau_2]$, then the fourth column is active
    \item If $\theta=j\block{}+\tau_1+i+2$ for some $i\in [\tau_1], j\in [\tau_2]$, then the fifth column is active.
    \item If $\theta=j\block{}+\tau_1+1$ for some $j\in [\tau_2]$, then the sixth and seventh columns are active.
    \item If $\theta=j\block{}+2\tau_1+2$ for some $j\in [\tau_2]$, then the eighth column is active.
\end{itemize}

\paragraph{The next $\binaryNum{}$ columns of $(\eta,\theta)$ gadget:}
\begin{itemize}
    \item If $\theta=j\block{}+i+1$ for some $i\in [\tau_1], j\in [\tau_2]$, then the activation of the $x$-th column is equal to the $x$-th bit in the binary representation of $i\cdot \constStaySmallDiagonal{}$.
    \item If $\theta=j\block{}+\tau_1+i+2$ for some $i\in [\tau_1], j\in [\tau_2]$, then the activation of the $x$-th column is equal to the $x$-th bit in the binary representation of $(\tau_1-i-1)\constStaySmallDiagonal{}$.
\end{itemize}

\paragraph{The next $\nodes{}$ columns of $(\eta,\theta)$ gadget:}
If $\theta=0$ or $\theta=\tau_2\block{}$, then the $u$-th column is activated if and only if $u \not \in U_1(\tau_1-t-1)$.
We note that this is the only case where the activation of columns depends on $t$.

\paragraph{The next $\smallCliqueGadgetA{}$ columns of $(\eta,\theta)$ gadget:}
If $\theta=0$ then let $v_0, v_1, \ldots, v_{|\mathcal{V}_4|-1}$ be the sequence $U_4(s)$.
The $i'|\mathcal{V}_4|(N+2) + j'(N+2) + u_{j'}$-th column is activated, $i'\in [|\mathcal{V}_1|+|\mathcal{V}_2|], j'\in [|\mathcal{V}_4|]$.

\paragraph{The next $\smallCliqueGadgetB{}$ columns of $(\eta,\theta)$ gadget:}
If $\theta=j\block{}+\tau_1+1$, $j\in [\tau_2]$, then let $v_0, v_1, \ldots, v_{|\mathcal{V}_4|-1}$ be the sequence $U_4(s)$.
The $i'|\mathcal{V}_4|(N+2) + j'(N+2) + u_{j'}$-th column is activated, $i'\in [|\mathcal{V}_3|], j'\in [|\mathcal{V}_4|]$.

\paragraph{The next $\cliqueGadget{}$ columns of $(\eta,\theta)$ gadget:}
These columns are non-zero only if $\theta=j\block{}+i+1$ for some $i\in [\tau_1], j\in [\tau_2]$.

For $j'\in [|\mathcal{V}_1|]$, let $x_{i,j'}$ be the number described between bits $\log_2{\overline{N}}\cdot j'$ and $\log_2{\overline{N}}\cdot (j'+1) - 1$ of $i$.
Let $u_{i,j'}=x_{i,j'}$ if $x_{i,j'}\in V_{i'}$, and $u_{i,j'} = \alpha$ otherwise.
We activate the columns $x_{i,j'}|\mathcal{V}_3|(|\mathcal{V}_1|+|\mathcal{V}_2|)(N+2) + i'(|\mathcal{V}_1|+|\mathcal{V}_2|)(N+2) + j'(N+2) + u_{i,j'}$, $i'\in [|\mathcal{V}_3|]$.

For $j'\in [|\mathcal{V}_1|, |\mathcal{V}_1|+|\mathcal{V}_2|)$, let $v_{j'}$ be the $(j'-|\mathcal{V}_1|)$-th node in $U_2(j)$.
We activate the columns $i'(|\mathcal{V}_1|+|\mathcal{V}_2|)(N+2) + j'(N+2) + u_{j'}$, $i'\in [|\mathcal{V}_3|]$.

\paragraph{The next $\cliqueGadget{}$ columns of $(\eta,\theta)$ gadget:}
These columns are non-zero only if $\theta=j\block{}+i+1$ for some $i\in [\tau_1-1], j\in [\tau_2]$.
They have the same activations with the previous $\cliqueGadget{}$ columns, with the only difference being that we use $(i+1)$ wherever we previously used $i$, and that $i\in [\tau_1-1]$ instead of $i \in [\tau_1]$.

For $j'\in [|\mathcal{V}_1|]$, let $x_{(i+1),j'}$ be the number described between bits $\log_2{\overline{N}}\cdot j'$ and $\log_2{\overline{N}}\cdot (j'+1) - 1$ of $(i+1)$.
Let $u_{(i+1),j'}=x_{(i+1),j'}$ if $x_{(i+1),j'}\in V_{i'}$, and $u_{(i+1),j'} = \alpha$ otherwise.
We activate the columns $x_{(i+1),j'}|\mathcal{V}_3|(|\mathcal{V}_1|+|\mathcal{V}_2|)(N+2) + i'(|\mathcal{V}_1|+|\mathcal{V}_2|)(N+2) + j'(N+2) + u_{(i+1),j'}$, $i'\in [|\mathcal{V}_3|]$.

For $j'\in [|\mathcal{V}_1|, |\mathcal{V}_1|+|\mathcal{V}_2|)$, let $v_{j'}$ be the $(j'-|\mathcal{V}_1|)$-th node in $U_2(j)$.
We activate the columns $i'(|\mathcal{V}_1|+|\mathcal{V}_2|)(N+2) + j'(N+2) + u_{j'}$, $i'\in [|\mathcal{V}_3|]$.

\paragraph{Last column of $(\eta,\theta)$ gadget:}
This column is always activated.

\paragraph{Gadgets' costs}
We are now ready to prove that with the defined row weights and column activations, the costs of our gadgets are the desired ones.

\begin{lemma} \label{lem:implementCosts}
At phase $(s,t)$, where $U_1(\tau_1-t-1)\not\ni \alpha$, the cost of any gadget is the desired cost, according to Definition~\ref{def:desiredCosts}.
\end{lemma}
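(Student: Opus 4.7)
The plan is to expand $\sum_{p\in[\gadgetSize]} d_{\eta \gadgetSize+p}\cdot b_{\theta\gadgetSize+p}$ along the ten row-blocks prescribed by the construction (head, $\consts{}=8$ constants, $\binaryNum{}$ binary, $\nodes{}$ nodes, $\smallCliqueGadgetA{}$ and $\smallCliqueGadgetB{}$ small-clique, two $\cliqueGadget{}$-row clique-gadget blocks, and tail) and verify each case of Definition~\ref{def:desiredCosts} by matching contributions block-by-block. The first and last columns of every gadget are always active, so the head and tail rows contribute $(n-\eta\gadgetSize)\belowConstGadget + \bigl(\constGadget - (n-\eta\gadgetSize)\belowConstGadget\bigr) = \constGadget$ regardless of $(\eta,\theta)$, accounting for the leading $\constGadget$ in every desired cost. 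The remaining work is to match the non-$\constGadget$ summand.

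\textbf{Easy blocks.}
In the $\consts{}=8$ constants block, each of the eight rows fires (has its corresponding column active) exactly for one family of gadgets, producing the prescribed $\constStayBigDiagonal$, $\constStaySmallDiagonal$, $\gadgetSize{}M$, or explicit $w(\cdot,\cdot)$ summand; for gadget types whose desired cost is ``at least $\constGadget+\constEnabler$,'' at least one of these rows fires with weight $\constEnabler$, establishing the lower bound. The $\binaryNum{}$-row binary block uses weights $2^0,2^1,\dots$, so its contribution equals the integer encoded by the activation pattern, which was designed to be the appropriate multiple of $\constStaySmallDiagonal$. The nodes block contributes $0$ for the matching gadgets $(j\block{}+t,0)$ and $(\mainRowEnd+j\block{}+\tau_1-t-1,\tau_2\block{})$ because the activated columns are precisely the nodes \emph{not} in $U_1(\tau_1-t-1)$; the hypothesis $U_1(\tau_1-t-1)\not\ni\alpha$ is used here to rule out the sentinel row firing. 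The two small-clique blocks realize inner products: the activated $(i',j',u)$-positions force $u=u_{j'}$, so summing the row weights $2M+w(u_{i'},u_{j'})$ yields $w(U_1(\tau_1-t-1)\circ U_2(\tau_2-j-1),U_4(s))$ and $w(U_3(l),U_4(s))$ respectively, together with $2M$ garbage that is exactly offset by the negative $(\gadgetSize{}-\ldots)\,2M$ correction deliberately placed in the constants block.

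\textbf{Main obstacle.}
The substantive calculation concerns the two $\cliqueGadget{}$-row clique-gadget blocks, which encode $h(i,j,l)$ across diagonals rather than per-row. For a gadget $(\mainRow+i+l,\,j\block{}+i+1)$ the row-offset within the block is $y=i+l$, and the construction decomposes $y$ and $i$ in base $\overline{N}$ into words $z_{i'}(y)$ and $x_{i,i'}$; the crucial identity to verify is $z_{i'}(i+l)-x_{i,i'}=z_{i'}(l)$ for $i'<|\mathcal{V}_1|$, which holds exactly when the addition $i+l$ produces no carry out of the $i'$-th $\overline{N}$-ary word. In the no-carry case, the activated row in the positive block at position $(x_{i,i'},\,i',\,i',\,x_{i,i'})$ has weight $2M$ plus the edge weight between the $i'$-th entries of $U_1(i)$ and $U_3(l)$; when a carry does occur, the construction sets $u_{i',x_{i,i'}}=\beta$, producing a pure $2M$ term that telescopes against the corresponding term in the oppositely signed sibling block (whose column activations use the parameter $i+1$ in place of $i$). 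Summing over $i'$ and $j'$ and combining with the analogous contribution from $U_2(j)$ (handled by the $j'\in[|\mathcal{V}_1|,|\mathcal{V}_1|+|\mathcal{V}_2|)$ case), the two blocks together produce $w(U_1(i)\circ U_2(j),U_3(l))-w(U_1(i+1)\circ U_2(j),U_3(l))=h(i,j,l)-h(i+1,j,l)$, matching the desired cost. The analogous simpler case $(\mainRow+\tau_1-1+l,\,j\block{}+\tau_1)$ leaves only the positive block effective and yields $h(\tau_1-1,j,l)$; the ``at least'' lower bounds for $(\mainRow+y,\,j\block{}+i+1)$ gadgets with $y\notin[i,i+\tau_3)$ follow because the binary block already contributes at least $(\mainHeight+i-y)\constStaySmallDiagonal$, which dominates any $h$-type correction by choice of the gaps between the $A_i$.
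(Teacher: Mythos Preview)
Your block-by-block outline is the same as the paper's approach and the easy blocks (head/tail, constants, binary, nodes, small-clique) are handled correctly. The gap is in the clique-gadget blocks, and it stems from a misunderstanding of the role of $h$.

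In Definition~\ref{def:desiredCosts}, $h$ is \emph{not} defined to be $w(U_1(i)\circ U_2(j),U_3(l))$; only its constraints are listed (it coincides with $w$ when $\alpha\notin U_1(i)$ and $\beta\notin U_3(l)$; it is bounded by $2N^2M$; and for every $i,j$ with $\alpha\notin U_1(i)$ some $\beta$-free $l_{i,j}$ minimises $h(i,j,\cdot)$). The lemma therefore has two obligations: exhibit a concrete $h$ with these properties, and show the gadget costs match the formulas involving that $h$. The paper discharges both by \emph{defining} $h(i,j,l):=g(i,l)+g'(i,j,l)$ from whatever the clique block actually sums to, and then proving the three constraints---in particular, the case analysis that $g(i,l)\ge \min\{|\mathcal{V}_1|,|\mathcal{V}_3|\}\cdot 2M-M$ whenever $\alpha\in U_1(i)$ or $\beta\in U_3(l)$, which is what forces any minimiser $l_{i,j}$ to be $\beta$-free. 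You never verify these side conditions.

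Your sentence ``the two blocks together produce $w(U_1(i)\circ U_2(j),U_3(l))-w(U_1(i+1)\circ U_2(j),U_3(l))=h(i,j,l)-h(i+1,j,l)$'' is therefore wrong on two counts. First, the clique block does not compute the $w$-sum when a carry occurs in $i+l$; it computes some other value (involving $\beta$-edges), and \emph{this} value is what $h$ must be. Second, the claimed telescoping of the $\beta$/$2M$ terms against the sibling block is false: the $2M$ baselines in the two blocks \emph{add} (giving $2(|\mathcal{V}_1|+|\mathcal{V}_2|)|\mathcal{V}_3|\cdot 2M$, absorbed by the constants rows into the $4\gadgetSize{}M$ term), and the sibling block uses parameter $i{+}1$, so its garbage is unrelated to the first block's. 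You need the paper's route: let $h$ absorb the garbage exactly, then argue separately that the garbage is always large enough to push $h$ above any $\beta$-free value.
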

\begin{proof}
%Throughout the proof, we only mention cases where both the $i$-th row and the $i$-th column of the gadget are non-zero, as otherwise the cost of the corresponding diagonal edge is trivial.
For every gadget, the sum of the weight of its first and its last row is $\constGadget{}$.
Furthermore, the first and the last column of every gadget is activated.
Therefore, the first and the last row contribute a cost of $\constGadget{}$.

We begin with the cases where the desired cost is related to the weights of edges in $G_0$, as these are the most technically challenging ones.

\paragraph{The $(j\block{}+t,0)$ gadgets, $j\in [\tau_2]$.}
~\\Desired cost: $\constGadget{} + (\tau_2-j)\constStayBigDiagonal{} + 2\gadgetSize{}M +
w(U_1(\tau_1-t-1)\circ U_2(\tau_2-j-1), U_1(\tau_1-t-1)\circ U_2(\tau_2-j-1)\circ U_4(s))$.

From rows $1, \ldots, \consts{}$, only the first row has both non-zero weight $(\tau_2-j)\constStayBigDiagonal{} + (g-(|\mathcal{V}_1|+|\mathcal{V}_2|)|\mathcal{V}_4|)2M + w(U_1(\tau_1-t-1)\circ U_2(\tau_2-j-1), U_1(\tau_1-t-1)\circ U_2(\tau_2-j-1))$, and the corresponding (first) column is activated.

The next $\binaryNum{}$ rows all have weight $0$.

For the next $\nodes{}$ rows, we have non-zero weight $(\constEnabler{})$ in row $u$ if and only if $u\in U_1(\tau_1-t-1)$.
However, in these cases the corresponding columns are deactivated, therefore the total contribution is zero.

Out of the next $\smallCliqueGadgetA{}$ rows, the corresponding columns activated are the $i'|\mathcal{V}_4|(N+2) + j'(N+2) + u_{j'}$, where $i'\in [|\mathcal{V}_1|+|\mathcal{V}_2|], j'\in [|\mathcal{V}_4|]$, and $u_{j'}$ is the $j'$-th node of $U_4(s)$.
For each such column, the weight of the corresponding row is equal to the weight of the edge between the $i'$-th node of $U_1(\tau_1-t-1)\circ U_2(\tau_2-j-1)$ and $u_{j'}$, plus $2M$.
Summing up these costs gives $(|\mathcal{V}_1|+|\mathcal{V}_2|)|\mathcal{V}_4|2M + w(U_1(\tau_1-t-1)\circ U_2(\tau_2-j-1), U_4(s))$.

The next $\smallCliqueGadgetB{}+2\cliqueGadget{}$ rows all have zero weight.

Summing up all the costs, we get that the cost of such a gadget is $\constGadget{} + (\tau_2-j)\constStayBigDiagonal{} + 2\gadgetSize{}M + w(U_1(\tau_1-t-1)\circ U_2(\tau_2-j-1), U_1(\tau_1-t-1)\circ U_2(\tau_2-j-1)\circ U_4(s))$.

\paragraph{The $(\mainRow{}+i+l, j\block+i+1)$ gadgets, with $i\in [\tau_1], j\in [\tau_2], l\in [\tau_3]$.}
~\\Desired cost if $i<\tau_1-1$: $\constGadget{} + (\mainHeight{}-l)\constStaySmallDiagonal + 4\gadgetSize{}M + h(i,j,l) - h(i+1,j,l)$. 
~\\Desired cost if $i=\tau_1-1$: $\constGadget{} + (\mainHeight{}-l)\constStaySmallDiagonal + 2\gadgetSize{}M + h(\tau_1-1,j,l)$.

From rows $1, \ldots, \consts{}$, if $i=\tau_1-1$ only the third row has both non-zero weight ($(\mainHeight{}-i-l)\constStaySmallDiagonal{} + (g-(|\mathcal{V}_1|+|\mathcal{V}_2|)|\mathcal{V}_3|)2M$) and the corresponding (third) column is activated. Else only the third and the fourth row have both non-zero weight ($(\mainHeight{}-i-l)\constStaySmallDiagonal{} + (g-(|\mathcal{V}_1|+|\mathcal{V}_2|)|\mathcal{V}_3|)2M$ and $(g-(|\mathcal{V}_1|+|\mathcal{V}_2|)|\mathcal{V}_3|)2M$) and the corresponding columns (third and fourth) are activated.

Out of the next $\binaryNum{}$ rows, the $x$-th of them has weight $2^{x-1}$ and the  corresponding column is active if and only if the $x$-th bit in the binary representation of $i\cdot \constStaySmallDiagonal{}$ is $1$.
Therefore the total cost from these rows is $i\cdot \constStaySmallDiagonal{}$.

The next $\nodes{}+\smallCliqueGadgetA{}$ rows all have zero weight.

The next $\smallCliqueGadgetB{}$ rows all have corresponding columns that are not activated.

The analysis for the rest of the rows is the most technically challenging part of the proof.
Out of the next $\cliqueGadget{}$ rows, we examine all the corresponding columns that are activated.
We take two cases:
\begin{itemize}
    \item For $j'\in [|\mathcal{V}_1|]$, let $x_{i,j'}$ be the number described between bits $\log_2{\overline{N}}\cdot j'$ and $\log_2{\overline{N}}\cdot (j'+1) - 1$ of $i$.
    Let $u_{i,j'}=x_{i,j'}$ if $x_{i,j'}\in V_{i'}$, and $u_{i,j'} = \alpha$ otherwise.
    In other words, $u_{i,j'}$ is the $j'$-th node of $U_1(i)$.
    
    For $i'\in [|\mathcal{V}_3|]$, columns $x_{i,j'}|\mathcal{V}_3|(|\mathcal{V}_1|+|\mathcal{V}_2|)(N+2) + i'(|\mathcal{V}_1|+|\mathcal{V}_2|)(N+2) + j'(N+2) + u_{i,j'}$ are activated.
    
    Let us fix $i'\in [|\mathcal{V}_3|]$.
    To describe the weight of the corresponding row, first let $z_{i',i+l}\in [\overline{N}]$ be the number described between bits $\log_2{\overline{N}}\cdot i'$ and $\log_2{\overline{N}}\cdot (i'+1) - 1$ of $i+l$.
    If $i'<|\mathcal{V}_1|$, $(z_{i',i+l}-x_{i,j'}) \in V_{|\mathcal{V}_1|+|\mathcal{V}_2|+i'}$ and $x_{i,j'} \in V_{i'}$, then let $v_{i',x_{i,j'},i+l} = z_{i',i+l} - x_{i,j'}$.
    Else if $i'\ge|\mathcal{V}_1|$ and $z_{i',i+l} \in V_{|\mathcal{V}_1|+|\mathcal{V}_2|+i'}$ let $v_{i',x_{i,j'},i+l} = z_{i',i+l}$.
    Else let $v_{i',x_{i,j'},i+l} = \beta$.
    
    The weight of row $x_{i,j'}|\mathcal{V}_3|(|\mathcal{V}_1|+|\mathcal{V}_2|)(N+2) + i'(|\mathcal{V}_1|+|\mathcal{V}_2|)(N+2) + j'(N+2) + u_{i,j'}$ is equal to $2M+w(v_{i',x_{i,j'},i+l}, u_{i,j'})$.

    Notice that $\sum_{p\in [|\mathcal{V}_1|], q\in [|\mathcal{V}_3|]} w(v_{q,x_{i,p},i+l}, u_{i,p})$ is a function of $i$ and $i+l$, therefore a function of $i$ and $l$, which we call $g(i,l)$.
    It holds that $g(i,l) \le |\mathcal{V}_1||\mathcal{V}_3|2M$ for any $i,l$.
    
    Furthermore, assume that $U_1(i)\not \ni \alpha$ and $U_3(l) \not \ni \beta$ (thus $u_{i,j'} = x_{i,j'}$). 
    Then the $(p+1)\log_2{\overline{N}}-1$ bit of both $U_1(i)$ and $U_3(l)$ is always $0$ for any $p$, meaning that when we add $i$ and $l$, there is no carry from the $(p+1)\log_2{\overline{N}}-1$ to the $(p+1)\log_2{\overline{N}}$ bit.
    In effect, the number between bits $i'\log_2{\overline{N}}$ and $(i'+1)\log_2{\overline{N}}-1$ of $i+l$ is the same as the sum of the number between bits $i'\log_2{\overline{N}}$ and $(i'+1)\log_2{\overline{N}}-1$ of $i$ and the number between bits $i'\log_2{\overline{N}}$ and $(i'+1)\log_2{\overline{N}}-1$ of $l$.    
    Viewing it the other way around, the number between bits $i'\log_2{\overline{N}}$ and $(i'+1)\log_2{\overline{N}}-1$ of $l$ (the $i'$-th node in $U_3(l)$) is equal to the number between bits $i'\log_2{\overline{N}}$ and $(i'+1)\log_2{\overline{N}}-1$ of $i+l$ minus the number between bits $i'\log_2{\overline{N}}$ and $(i'+1)\log_2{\overline{N}}-1$ of $i$ (this difference is exactly $v_{i',x_{i,j'},i+l}$).

    We conclude that if $U_1(i)\not \ni \alpha$ and $U_3(l) \not \ni \beta$, then $w(v_{i',x_{i,j'},i+l}, u_{i,j'})$ is the weight of the edge between the $j'$-th node of $U_1(i)$ and the $i'$-th node of $U_3(l)$.
    Therefore $g(i,l) = w(U_1(i), U_3(l)) \le M$.

    We now show that if $\alpha \in U_1(i)$ or $\beta \in U_3(l)$, then $g(i,l)$ is too large.
    This is later used to ensure the properties of $h(\cdot, \cdot, \cdot)$ specified by Definition~\ref{def:desiredCosts}.
    \begin{itemize}
        \item If $\alpha \in U_1(i)$ then $g(i,l)$ contains at least $|\mathcal{V}_3|$ terms that are $2M$, and the sum of all negative terms is at least $-M$, by definition of $M$.
        \item Similarly, if $v_{q,x_{i,p},i+l} = \beta$, for any $p,q$, then $g(i,l)$ has $|\mathcal{V}_1|$ terms that are $2M$, and the sum of all negative terms is at least $-M$, by definition of $M$.
        \item If $\alpha\not \in U_1(i)$ but $\beta \in U_3(l)$, let $r$ be the smallest term in $U_3(l)$ that is equal to $\beta$. Then, as we argued previously and by definition of $r$, it should be that for $r'\le r$ we have that the $r'$-th node in $U_3(l)$ is equal to $v_{r',x_{i,j'},i+l}$.
        Therefore $v_{r,x_{i,j'},i+l}= \beta$, and as in the previous case $g(i,l)$ has $|\mathcal{V}_1|$ terms that are $2M$, and the sum of all negative terms is at least $-M$, by definition of $M$.
    \end{itemize}

    \item For $j'\in [|\mathcal{V}_1|, |\mathcal{V}_1|+|\mathcal{V}_2|)$ the activated columns are the $i'(|\mathcal{V}_1|+|\mathcal{V}_2|)(N+2) + j'(N+2) + u_{j'}$, $i'\in [|\mathcal{V}_3|]$, where $u_{j'}$ is the $(j'-|\mathcal{V}_1|)$th node of $U_2(j)$.
    Notice that these columns are distinct from the ones activated in the previous case, as $u_{j'} \in V_{j'}$, while $u_{i,j}$ was always either $\alpha$ or in some $V_{j''}$ with $j''\in [|\mathcal{V}_1|]$.
    The weight of these rows is as previously, but now we have that $v_{i',x_{i,j'},i+l} = z_{i',i+l}$ if 
    $z_{i',i+l} \in V_{|\mathcal{V}_1|+|\mathcal{V}_2|+i'}$, and $v_{i',x_{i,j'},i+l} = \beta$ otherwise.

    The weight of row $i'(|\mathcal{V}_1|+|\mathcal{V}_2|)(N+2) + j'(N+2) + u_{j'}$ is equal to $2M+w(v_{i',x_{i,j'},i+l}, u_{j'})$.

    Notice that $\sum_{p\in [|\mathcal{V}_2], q\in [|\mathcal{V}_3|]} w(v_{q,x_{i,p+|\mathcal{V}_1|},i+l}, u_{p+|\mathcal{V}_1|})$ is a function of $i,j,l$ which we call $g'(i,j,l)$ (we now have a dependence on $j$ because of the definition of $u_{p+|\mathcal{V}_1|}$).
    This is upper bounded by $|\mathcal{V}_2||\mathcal{V}_3|2M$.
    
With the same arguments as previously, if $U_1(i)\not \ni \alpha$ and $U_3(l) \not \ni \beta$, then $g'(i,j,l) = w(U_2(j), U_3(l))$.  
\end{itemize}

Let $h(i,j,l) = g(i,l) + g'(i,j,l)$.
We conclude that the total cost is $(|\mathcal{V}_1|+|\mathcal{V}_2|)|\mathcal{V}_3|2M + h(i,j,l)$.
If $U_1(i)\not \ni \alpha$ and $U_3(l) \not \ni \beta$ then $h(i,j,l) = w(U_1(i)\circ U_2(j), U_3(l)) \le M$.
On the other hand, if $U_1(i)\ni \alpha$ or 
$U_3(l)\ni \beta$ then $h(i,j,l)\ge g(i,l) \ge \min\set{|\mathcal{V}_1|,|\mathcal{V}_3|}2M-M$, which is at least $2M$ for sufficiently large $k$.
Therefore, for any fixed $i,j$ such that $U_1(i)\not \ni \alpha$, it holds that there exists an $l_{i,j}$ such that $U_3(l_{i,j})\not \ni \beta$ and $h(i,j,l_{i,j}) \le h(i,j,l)$ for all $l$.
Finally, for any $i,j,l$ we upper bound $h(i,j,l)$ by $2N^2M$, using the upper bounds of $g$ and $g'$.

This proves the desired cost when $i=\tau_1-1$, as the next $\cliqueGadget{}$ rows all have non-activated corresponding columns.

When $i<\tau_1-1$, then we have the exact same analysis for the next $\cliqueGadget{}$ rows, with the only difference being that we use $i+1$ instead of $i$, and we reverse the sign of the weights.
Therefore we get an additional cost $(|\mathcal{V}_1|+|\mathcal{V}_2|)|\mathcal{V}_3|2M - h(i+1,j,l)$ from these rows.
Along with the additional $(\gadgetSize{} - (|\mathcal{V}_1|+|\mathcal{V}_2|)|\mathcal{V}_3|)2M$ cost from the fourth row of the gadget, this proves we indeed get the desired cost.

\paragraph{The $(\mainRow{}+\tau_1+l, j\block{}+\tau_1+1)$ gadgets, with $j\in [\tau_2], l\in [\tau_3]$.}
~\\Desired cost: $\constGadget{} + 2\gadgetSize{}M + w(U_3(l),U_3(l)\circ U_4(s))$.

From rows $1, \ldots, \consts{}$, only the seventh row has both non-zero weight $((\gadgetSize{}-|\mathcal{V}_3||\mathcal{V}_4|)2M + w(U_3(l), U_3(l)))$, and the corresponding (seventh) column is activated.

Out of the next $\binaryNum{}+\nodes{}+\smallCliqueGadgetA{}$ rows, all their corresponding columns are not activated.

Out of the next $\smallCliqueGadgetB{}$ rows, the corresponding columns activated are the $i'|\mathcal{V}_4|(N+2) + j'(N+2) + u_{j'}$, with $i'\in [|\mathcal{V}_3|], j'\in [|\mathcal{V}_4|]$, and $u_{j'}$ is the $j'$-th node of $U_4(s)$.
For each such column, the weight of the corresponding row is equal to the weight of the edge between the $i'$-th node of $U_3(l)$ and $u_{j'}$, plus $2M$.
Summing up these costs gives $|\mathcal{V}_3||\mathcal{V}_4|2M + w(U_3(l), U_4(s))$.

The next $2\cliqueGadget{}$ rows all have non-activated corresponding columns.

Summing up all the costs, we get $\constGadget{} + 2\gadgetSize{}M + w(U_3(l),U_3(l)\circ U_4(s))$.

\paragraph{The $(\mainRowEnd{} + j\block{}+\tau_1-t-1, \tau_2\block{})$ gadgets, $j\in [\tau_2]$.}
~\\Desired cost: $\constGadget{} + j\cdot \constStayBigDiagonal{}$.

From rows $1, \ldots, \consts{}$, only the second row has both non-zero weight $j\cdot \constStayBigDiagonal{}$ and its corresponding (second) column is activated.

The next $\binaryNum{}$ rows all have weight $0$.

Out of the next $\nodes{}$ rows, we have non-zero weight in row $u$ if and only if $u\in U_1(\tau_1-t-1)$.
However, in these cases the corresponding columns are deactivated, therefore the total contribution is zero.

The next $\smallCliqueGadgetA{} + \smallCliqueGadgetB+2\cliqueGadget{}$ rows all have weight $0$.

Therefore the cost of such a gadget is the desired cost $\constGadget{} + j\cdot \constStayBigDiagonal{}$.

\paragraph{The $(y_1,0)$ gadgets and the $(y_2, \tau_2\block{})$ gadgets, with $y_1<\mainRowEnd{}, y_1 \not \in \set{j\block{}+t \mid j\in [\tau_2]}, y_2\ge \mainRow{}, y_2\not \in \set{\mainRowEnd{} + j\block{}+\tau_1-t-1 \mid j\in [\tau_2]}$.}
%\paragraph{The $(j\block{}+i_1,0)$ gadgets and the $(\mainRowEnd{} + j\block{}+\tau_1-i_2-1, \tau_2\block{})$ gadgets, $i_1,i_2\in [\block{}]\setminus \set{t}, j\in [\tau_2]$ and $j\block{}+i_1<\mainRow{}, j\block{}+\tau_1-i_2-1 \in [\mainRowEnd{}, \mainRowEnd{}+\mainRow{})$.}
~\\Desired cost: At least $\constGadget{} + \constEnabler{}$.

We only argue about the $(y_1, 0)$ gadgets, as the situation is similar for the $(y_2, \tau_2\block{})$ gadgets.
We only need a lower bound, therefore we can ignore rows $1, \ldots, \consts{} + \binaryNum{}$.

For the next $\nodes{}$ rows we take take two cases:
\begin{itemize}
    \item The $\alpha$-th row has weight $\constEnabler{}$.
    Notice that the corresponding column is activated, because we always assume $\alpha \not \in U_1(\tau_1-t-1)$.
    Therefore the gadget has cost at least $\constGadget{} + \constEnabler{}$.
    \item The $\alpha$-th row does not have weight $\constEnabler{}$.
    From the definition of row weights, this means $y_1=j\block+i$ for some $i\in [\tau_1], j\in [\tau_2]$.
    But as $i\ne t$, we have that $U_1(\tau_1-i-1) \ne U_1(\tau_1-t-1)$.
    Let $u$ be a node in $U_1(\tau_1-i-1)$ such that $u\not\in U_1(\tau_1-t-1)$.
    Then the $u$-th row has weight $\constEnabler{}$ and the corresponding column is activated, meaning that again the cost of the gadget is at least $\constGadget{} + \constEnabler{}$.
\end{itemize}

\paragraph{The $(y,x)$ gadgets, for $y<\mainRow{}$ or $y\ge \mainRowEnd{}$, and $0<x<\tau_2\block{}$.}
~\\Desired cost: $\constGadget{}$.

From rows $1, \ldots, \consts{}$, only the first and the second row have non-zero weight, but the first and second column are not activated.

The next $\binaryNum{}$ rows all have weight $0$.

Out of the next $\nodes{}+\smallCliqueGadgetA{}$ rows, all coresponding columns are not activated.

The next $\smallCliqueGadgetB{} + 2\cliqueGadget{}$ rows all have zero weight.

Therefore the cost of such a gadget is the desired cost $\constGadget{}$.

\paragraph{The $(\mainRow{}+y, j\block{}+i+1)$ gadgets, $i\in [\tau_1], j\in [\tau_2], y\in [\mainHeight{}] \setminus [i,i+\tau_3)$.}
~\\Desired cost: At least $\constGadget{} + (\mainHeight{}+i-y)\constStaySmallDiagonal$.

From rows $1, \ldots, \consts{}$, the third row has both non-zero weight (greater than $(\mainHeight{}-y)\constStaySmallDiagonal{}$) and the corresponding (third) column is activated. 

Out of the next $\binaryNum{}$ rows, the $x$-th of them has weight $2^{x-1}$ and the corresponding column is activated if and only if the $x$-th bit in the binary representation of $i\cdot \constStaySmallDiagonal{}$ is $1$.
Therefore the total cost from these rows is $i\cdot \constStaySmallDiagonal{}$.

This proves that the cost of such a gadget is at least $\constGadget{} + (\mainHeight{}+i-y)\constStaySmallDiagonal$.

\paragraph{The $(\mainRow{}+y, j\block{}+\tau_1+1)$ gadgets, with $y\in [\tau_1] \cup [\tau_1+\tau_3,\mainHeight{})$, and $j\in [\tau_2]$.}
~\\Desired cost: At least $\constGadget{} + \constEnabler{}$.

From rows $1, \ldots, \consts{}$, the sixth row has weight $\constEnabler{}$ and the sixth column is activated. 
Therefore the cost of such a gadget is at least $\constGadget{} + \constEnabler{}$.

\paragraph{The $(\mainRow{}+y, j\block{}+\tau_1+i+2)$ gadgets, with $i\in [\tau_1], j\in [\tau_2], y\in [\mainHeight{}]$.}
~\\Desired cost: $\constGadget{} + (\mainHeight{}-\tau_1-1+y-i) \constStaySmallDiagonal{}$.

From rows $1, \ldots, \consts{}$, only the fifth row has both non-zero weight ($(\mainHeight{}+y-2\tau_1)\constStaySmallDiagonal{}$) and the corresponding (fifth) column is activated. 

Out of the next $\binaryNum{}$ rows, the $x$-th of them has weight $2^{x-1}$ and the corresponding column is activated if and only if the $x$-th bit in the binary representation of $(\tau_1-i-1)\cdot \constStaySmallDiagonal{}$ is $1$.
Therefore the total cost from these rows is $(\tau_1-i-1)\cdot \constStaySmallDiagonal{}$.

Out of the next $\nodes{}+\smallCliqueGadgetA{}+\smallCliqueGadgetB{}+2\cliqueGadget{}$ rows, all their corresponding columns are deactivated.

We conclude that the cost of such a gadget is $\constGadget{} + (\mainHeight{}-\tau_1-1+y-i) \constStaySmallDiagonal{}$.

\paragraph{The $(\mainRow{}+y, j\block{}+2\tau_1+2)$ gadgets, with $j\in [\tau_2], y\in [\mainHeight{}]$.}
~\\Desired cost: At least $\constGadget{} + \constEnabler{}$.

From rows $1, \ldots, \consts{}$, the eighth row has weight $\constEnabler{}$ and the corresponding (eighth) column is activated. 

Therefore the cost of such a gadget is at least $\constGadget{} + \constEnabler{}$.
\end{proof}

\subsection{Lower bound}
We finally prove our lower bound for \Intermediary{}.
Recall that for $i\in [1,3]$ we have $|\mathcal{V}_i| = \floor{\rho_i k}, |\mathcal{V}_4| = k-|\mathcal{V}_1|-|\mathcal{V}_2|-|\mathcal{V}_3|$.
Additionally, for $i\in [1,4]$ we have $\tau_i = \Theta(N^{|\mathcal{V}_i|})$.
Finally, $n=n_{r}=\Theta(\tau_3+\tau_1\tau_2), m=n_c=\Theta(\tau_1\tau_2)$.

\lowerBoundIntermediary*
\begin{proof}
Let $\rho_1 = \frac{\beta\gamma}{(c+2)}, \rho_2 = \frac{(1-\beta)\gamma}{(c+2)}, \rho_3 = \frac{1}{(c+2)}$. Notice that with these $\rho_1, \rho_2, \rho_3$ values, and as $n=\Theta(\tau_3+\tau_1\tau_2)$, we get $n=\Theta(\tau_3)$.

For $i\in [|\mathcal{V}_1|]$ let $u_i = i \cdot N/k$, and let integer $p=f(u_0, \ldots, u_{|\mathcal{V}_1|-1})$ be the encoding of this sequence (therefore $U_1(p) = u_0, \ldots, u_{|\mathcal{V}_1|-1}$, and $U_1(p)\not \ni \alpha$).
Given a Negative-$k$-Clique instance, for a sufficiently large constant $k$, we use the reduction of \Cref{ssec:reduction} to formulate an instance of \Intermediary{}.
We properly set the activations of the columns so that we start at phase $(0,p)$.
For any given $s\in [\tau_4]$, we iterate over all phases $(s,t)$ with $t\in [\tau_1]$ and $U_1(\tau_1-t-1)\not \ni \alpha$ by properly updating our data structure.
In each phase we query our data structure.

Let $C_{s,t}$ be the minimum cost of a $k$-Clique in $G_0$ that includes all nodes in $U_1(\tau_1-t-1)$ and $U_4(s)$.
By Lemma~\ref{lem:structureSP} the shortest path at phase $(s,t)$ is a restricted path.
Therefore we can acquire the length of the shortest restricted path at phase $(s,t)$ by querying the data structure.
By Corollary~\ref{cor:costSP} we can retrieve $C_{s,t}$ given the length of the shortest restricted path.
As we iterate over all relevant $(s,t)$, we can compute the minimum cost of any $k$-Clique, which means we can decide whether there exists a Negative-$k$-Clique.

Concerning the running time, notice that we switch from a phase $(s,t)$ to a phase $(s,t')$ a total of $O(\tau_1 \tau_4)$ times, and we only need to update the columns of the $(y,0)$ and the $(y,\tau_2 \block{})$ gadgets, for any $y$.
There are at most $\gadgetSize{} = O(N^2)$ such columns.
We switch from a phase $(s,t)$ to a phase $(s',t')$ with $s'\ne s$ a total of $\tau_4-1$ times, and every time we need to update the columns of the $(y,0)$, the $(y,\tau_2\block{})$ and the $(y,j\block{}+\tau_1+1)$ gadgets, for $j\in [\tau_2]$ and all $y$.
There are $O(\tau_2 \gadgetSize{})$ such columns.

Therefore the time we spend to solve Negative-$k$-Clique is

\[O(T_p(n,m) + (\tau_1+\tau_2) \tau_4 N^2 T_u(n,m) + \tau_1\tau_4 T_q(n,m))\]

Assuming the Negative-$k$-Clique Hypothesis, for all $\delta'>0$ we have that

\begin{align*}
T_p(n,m) + (\tau_1+\tau_2) \tau_4 N^2 T_u(n,m) + \tau_1 \tau_4 T_q(n,m) &= \Omega(N^{k-\delta'}) \implies \\
T_p(n,m) + (\tau_1+\tau_2) \tau_4 N^2 T_u(n,m) + \tau_1 \tau_4 T_q(n,m) &= \Omega(\tau_1 \cdot \tau_2 \cdot \tau_3 \cdot \tau_4 N^{-\delta'}) \implies \\
T_p(n,m)/\tau_4 + (\tau_1+\tau_2) T_u(n,m) + \tau_1 T_q(n,m) &= \Omega(\tau_1 \cdot \tau_2 \cdot \tau_3 \cdot N^{-2-\delta'})
\end{align*}

We now have that:

\begin{align*}
T_p(n,m)/\tau_4 &= O((N^{\rho_3k}+N^{\rho_1k+\rho_2k})^c / N^{k-\floor{\rho_1 k}-\floor{\rho_3 k}-\floor{\rho_3 k}}) \\
&= O((N^{ck/(c+2)} / N^{(1-2/(c+2))k}) \\
&= O(1)
\end{align*}

Therefore the term $T_p(n,m)/\tau_4$ is negligible. As $\rho_1 \le \rho_2$, we get

\[\tau_2 T_u(n,m) + \tau_1 T_q(n,m) = \Omega(\tau_1 \cdot \tau_2 \cdot \tau_3 \cdot N^{-2-\delta'})\]

Thus either $\tau_2 T_u(n,m) = \Omega(\tau_1 \cdot \tau_2 \cdot \tau_3 \cdot N^{-2-\delta'})$ or $\tau_1 T_q(n,m) = \Omega(\tau_1 \cdot \tau_2 \cdot \tau_3 \cdot N^{-2-\delta'})$.

Assume $\tau_2 T_u(n,m) = \Omega(\tau_1 \cdot \tau_2 \cdot \tau_3 \cdot N^{-2-\delta'})$, then

\begin{align*}
    T_u(n,m) &= \Omega(\tau_1 \cdot \tau_3 \cdot N^{-2-\delta'}) \\ 
    &= \Omega(N^{\rho_1k-1} \cdot n \cdot N^{-2-\delta'})\\
    &= \Omega(\frac{m^{\beta}}{N} \cdot n \cdot N^{-2-\delta'})\\
    &= \Omega(m^{\beta} \cdot n \cdot N^{-3-\delta'})\\
\end{align*}

But $m = \Theta(N^{\floor{\rho_1 k}+\floor{\rho_2 k}})$, therefore there exists a sufficiently large $k$ such that $N^{-3-\delta'} = \Omega(m^{-\delta})$, which gives us that

\[T_u(n,m) = \Omega(n \cdot m^{\beta-\delta})\]

Repeating the same arguments gives that if $\tau_1 T_q(n,m) = \Omega(\tau_1 \cdot \tau_2 \cdot \tau_3 \cdot N^{-2-\delta'})$ then $T_q(n,m) = \Omega(n \cdot m^{1-\beta-\delta})$.

Finally, to prove that $m = \Omega(n^{\gamma-\varepsilon}) \cap O(n^{\gamma+\varepsilon})$ notice that:
\begin{itemize}
    \item $m=\Theta(N^{\floor{\rho_1 k}+\floor{\rho_2 k}})$, thus $m\in \Omega(N^{\rho_1 k+\rho_2 k-2}) \cap O(N^{\rho_1 k+\rho_2 k}) = \Omega(N^{\frac{\gamma}{(c+2)} k-2}) \cap O(N^{\frac{\gamma}{(c+2)} k})$.
    \item $n=\Theta(N^{\floor{\rho_3 k}})$, thus $n\in \Omega(N^{\rho_3 k-1}) \cap O(N^{\rho_3 k}) = \Omega(N^{\frac{1}{(c+2)} k-1}) \cap O(N^{\frac{1}{(c+2)} k})$.
\end{itemize}
For sufficiently large $k$ we get $m\in O(N^{\frac{\gamma}{(c+2)} k}) \le O(N^{\frac{\gamma}{(c+2)}k - \gamma + \frac{\varepsilon}{(c+2)}k - \varepsilon}) = O(N^{(\frac{1}{c+2}k-1)(\gamma+\varepsilon)}) \le O(n^{\gamma+\varepsilon})$.

Similarly, $m\in \Omega(N^{\frac{\gamma}{(c+2)} k - 2}) \ge \Omega(N^{\frac{\gamma}{(c+2)}k - \frac{\varepsilon}{(c+2)}k}) = \Omega(N^{\frac{1}{c+2}k(\gamma-\varepsilon)}) \ge \Omega(n^{\gamma-\varepsilon})$.
\end{proof}

\bibliographystyle{alphaurl}
\bibliography{refs}

\end{document}